\theoremstyle{definition}
\newlength\min@xx
\newtheorem{theorem}{Theorem}
\newtheorem{lemma}{Lemma}
\newtheorem{corollary}{Corollary}
\newtheorem{definition}{Definition}
\newtheorem{proposition}{Proposition}
\newtheorem{algorithm}{Algorithm}
\def\ket#1{\left| #1 \right\rangle}
\newcommand{\codepar}[1]{\ensuremath{[\![#1]\!]}}
\newcommand*\Bell{\ensuremath{\boldsymbol\ell}}
\title{Clifford gates with logical transversality for self-dual CSS codes}
\author{Theerapat Tansuwannont$^1$\footnote{\href{mailto:t.tansuwannont.qiqb@osaka-u.ac.jp}{t.tansuwannont.qiqb@osaka-u.ac.jp}} , Yugo Takada$^{1,2}$\footnote{\href{mailto:u751105k@ecs.osaka-u.ac.jp}{u751105k@ecs.osaka-u.ac.jp}} , Keisuke Fujii$^{1,2,3}$\footnote{\href{mailto:fujii.keisuke.es@osaka-u.ac.jp}{fujii.keisuke.es@osaka-u.ac.jp}}}
\date{\small
$^1$Center for Quantum Information and Quantum Biology, The University of Osaka, Toyonaka, Osaka 560-0043, Japan \\
$^2$Graduate School of Engineering Science, The University of Osaka, Toyonaka, Osaka 560-8531, Japan \\
$^3$RIKEN Center for Quantum Computing, Hirosawa 2-1, Wako, Saitama 351-0198, Japan \\
}
\begin{document}

\maketitle

\begin{abstract}
Quantum error-correcting codes with high encoding rate are good candidates for large-scale quantum computers as they use physical qubits more efficiently than codes of the same distance that encode only a few logical qubits. Some logical gate of a high-rate code can be fault-tolerantly implemented using transversal physical gates, but its logical operation may depend on the choice of a symplectic basis that defines logical Pauli operators of the code. In this work, we focus on \codepar{n,k,d} self-dual Calderbank-Shor-Steane (CSS) codes with $k \geq 1$ and prove necessary and sufficient conditions for the code to have a symplectic basis such that (1) transversal logical Hadamard gates $\bigotimes_{j=1}^{k} \bar{H}_j$ can be implemented by transversal physical Hadamard gates $\bigotimes_{i=1}^{n} H_i$, and (2) for any $(a_1,\dots,a_k)\in\{-1,1\}^k$, transversal logical phase gates $\bigotimes_{j=1}^{k} \bar{S}_j^{a_j}$ can be implemented by transversal physical phase gates $\bigotimes_{i=1}^{n} S_i^{b_i}$ for some $(b_1,\dots,b_n)\in\{-1,1\}^n$. Self-dual CSS codes satisfying the conditions include any codes with odd $n$. We also generalize the idea to concatenated self-dual CSS codes and show that certain logical Clifford gates have multiple transversal implementations, each by logical gates at a different level of concatenation. Several applications of our results for fault-tolerant quantum computation with low overhead are also provided.
\end{abstract}

\section{Introduction}

To build a reliable quantum computer, one has to ensure that quantum information during the computation is not affected by unwanted interactions with the environment. Quantum error correction (QEC) is a technique to reduce the error rate by encoding some logical qubits into a larger number of physical qubits using a quantum error correcting code (QECC). To attain lower logical error rates by increasing the number of physical qubits, a family of QECCs that can attain high code distance such as concatenated codes \cite{KL96}, topological codes \cite{Kitaev97,BK98,BM06}, or quantum low-density parity check (qLDPC) codes \cite{MMM04} is required. Fault-tolerant error correction (FTEC) schemes ensure that a few faults that can happen during the QEC implementation cannot propagate and cause uncorrectable errors. To process the logical quantum information in a fault-tolerant manner, schemes for fault-tolerant quantum computation (FTQC) are needed. With these fault-tolerant tools and a proper family of codes, one can simulate any quantum circuit with an arbitrarily low logical error rate if the physical error rate is below some scheme-dependent threshold value \cite{Shor96,AB08,Kitaev97,KLZ96,Preskill98,TB05,ND05,AL06,AGP06}.

Achieving fault tolerance generally requires large space overhead (e.g. ancilla qubits) and large time overhead (e.g., quantum gates). One way to perform a logical operation on the code space with constant time overhead and without additional ancilla qubits is to use transversal gates; On a certain QECC, a one-qubit (or a two-qubit) logical gate can be implemented by transversally applying physical single-qubit gates on all physical qubits in the code block (or transversally applying physical two-qubit gates on all pairs of physical qubits between two code blocks). Transversal gates implemented by either single-qubit or two-qubit physical gates are naturally fault tolerant since they do not spread errors inside each code block, and even if errors propagate to another code block or some faults occur during their implementation, a subsequent QEC scheme can still correct errors as intended \cite{Gottesman97,AGP06}.

The set of available transversal gates depends on the QECC being used, and any QECC cannot have the gate set for universal quantum computation consisting of only transversal gates \cite{ZCC11,CCCZC08,EK09}. For some code that encodes a single logical qubit in a code block such as the Steane code \cite{Steane96} or color codes \cite{BM06} with certain boundary conditions, logical Hadamard ($H$), phase ($S$), and controlled-NOT (CNOT) gates can be implemented transversally, generating the full logical Clifford group. For such codes, a logical non-Clifford gate can be implemented through techniques such as magic state distillation and injection \cite{KLZ97,BK05} or code switching \cite{PR13,ADP14}, completing the universal gate set. However, for a QECC that encodes multiple logical qubits in a code block, it is possible that some logical Clifford gates cannot be achieved by transversal gates only, and additional techniques are required to implement such logical gates in order to achieve the full logical Clifford group. 

Recently, stabilizer codes with high encoding rate have gained a lot of attention as they can ease the process of scaling up a quantum computer. There are recent breakthroughs in good qLDPC codes \cite{BE21,HHJO21,EKZ22,LZ22,PK22a,PK22b,DHLV23}. The weights of the stabilizer generators of a qLDPC code in each family are upper-bounded by some constant independent of the block length, so scaling up the code size will not affect the overhead required for FTEC schemes in which the construction of a circuit to measure each stabilizer generator depends on the weight of the operator. Examples of such FTEC schemes are Shor \cite{Shor96,DA07} and flag \cite{CR18,CR20} schemes. The development of techniques for implementing logical gates and performing FTQC on qLDPC codes is still an active research area.

There are also several families of stabilizer codes with high encoding rate which are not qLDPC codes. Examples of such families are quantum Hamming codes \cite{Steane96_qHamming}, quantum Reed-Muller codes \cite{ADP14}, and quantum Bose–Chaudhuri–Hocquenghem (qBCH) codes \cite{GB99}. There are no constraints on the weights of the stabilizer generators of such codes, so FTEC schemes that depend on the stabilizer weights may not scale well when the code distance increases. Nevertheless, the scaling issue of a FTEC scheme can be eased if a scheme that does not depend on the stabilizer weights such as Steane \cite{Steane97,Steane04} or Knill \cite{Knill05} FTEC scheme is used; Steane and Knill schemes which are applicable to Calderbank-Shor-Steane (CSS) codes \cite{CS96,Steane96} and stabilizer codes \cite{Gottesman97,Gottesman96}, respectively, require two blocks of ancilla states for measuring all stabilizer generators simultaneously. FTQC on CSS codes encoding multiple logical qubits can be performed by the teleportation-based FTQC schemes developed in Refs. \cite{BZHJL15,ZLB18,ZLBK20}; the FTQC scheme for logical $H$, $S$, CNOT, or SWAP gate \cite{BZHJL15} (or the FTQC scheme for any Clifford circuit \cite{ZLBK20}) requires constant time overhead given that certain ancilla states can be prepared fault-tolerantly and efficiently \cite{ZLB18,ZLBK20}. Recently, Yamasaki and Koashi \cite{YK24} have developed an FTQC scheme for concatenated codes obtained from concatenating quantum Hamming codes with growing size, achieving constant space overhead and quasi-polylogarithmic time overhead. Their construction utilizes Knill FTEC scheme and relies on the assumption that for any quantum Hamming code, transversally applying physical Hadamard gates induces logical Hadamard gates on all logical qubits in a code block.

It should be noted that for a stabilizer code that encodes multiple logical qubits, there is more than one way to define a symplectic basis, i.e., the basis for logical Pauli operators. The logical operation corresponding to a given physical operation depends on the choice of a symplectic basis, thus changing a symplectic basis could result in a different logical operation on the logical qubits. It was not shown explicitly in Ref. \cite{YK24} on which symplectic basis the concatenated quantum Hamming code is operating on, so the logical operation of transversal physical Hadamard gates is unclear. 

In this work, we focus on a family of \codepar{n,k,d} self-dual CSS codes with $k \geq 1$ (which includes quantum Hamming codes \cite{Steane96_qHamming} and binary qBCH codes \cite{GB99}). For such codes, we aim to find a symplectic basis compatible with multilevel transversal Clifford operations, or a \emph{compatible symplectic basis}, which is a symplectic basis such that logical Hadamard gates on all logical qubits can be implemented by transversal physical Hadamard gates, and logical $S$ and $S^\dagger$ gates on all logical qubits can be implemented by transversal physical $S$ and $S^\dagger$ gates (\cref{def:compatible_basis}). Since for any CSS code, transversal logical CNOT gates applied to all pairs of logical qubits between two code blocks can be implemented by transversal physical CNOT gates, a self-dual CSS code with a compatible symplectic basis admits block transversal implementation of the full Clifford group (where the same logical Clifford gate is applied to all logical qubits in the same code block). 

Our main results are the following: (1) We prove necessary and sufficient conditions for any self-dual CSS code with $k \geq 1$ to have a compatible symplectic basis in \cref{thm:main1}. Codes that satisfy the necessary and sufficient conditions include any \codepar{n,k,d} self-dual CSS codes with $k \geq 1$ and odd $n$ (\cref{cor:odd_n}). With some additional addressable logical Clifford gates, the full logical Clifford group defined on all logical qubits across all code blocks can be achieved (\cref{prop:full_Clifford}). Such addressable gates can be implemented by the FTQC scheme developed by Brun et al. \cite{BZHJL15} which utilizes Steane fault-tolerant measurement (FTM) scheme (see \cref{app:gate_TP}). (2) We extend our results to an \codepar{N,K,D} concatenated code obtained from concatenating self-dual CSS codes that satisfy \cref{thm:main1}. We define logical gates and transversality at each level of concatenation, then prove in \cref{thm:main2,thm:main3} that some logical gates at a certain level exhibit transversality at multiple levels of concatenation. A logical gate with this property has multiple transversal implementations, each by logical gates at a different level. Consequently, we can show that transversal logical Hadamard gates (transversal logical $S$ and $S^\dagger$ gates) applied to all logical qubits at the top level of concatenation can be implemented by transversal physical Hadamard gates (transversal physical $S$ and $S^\dagger$ gates; see \cref{cor:trans_logical_physical}). (3) We provide a procedure to construct a compatible symplectic basis of any self-dual CSS code if the basis exists in \cref{app:python}. The procedure can also construct phase-type logical-level transversal gates of the form $\bigotimes_{j=1}^{k} \bar{S}_j^{a_j}$ for any $(a_1,\dots,a_k)\in\{-1,1\}^k$ as a combination of physical $S$ and $S^\dagger$ gates. A Python implementation of our procedure is available at \url{https://github.com/yugotakada/mlvtrans}.

We also propose several applications of compatible symplectic bases and Clifford gates with multilevel transversality: (1) Our results provide an explicit construction of symplectic bases for quantum Hamming codes which complements the constant overhead FTQC scheme by Yamasaki and Koashi \cite{YK24}. Furthermore, our results suggest possibilities to extend their scheme to concatenated codes obtained from concatenating high-rate self-dual CSS codes, as well as possibilities to optimize the scheme due to the fact that any transversal logical $S$ and $S^\dagger$ gates can be implemented by some transversal physical $S$ and $S^\dagger$ gates. (2) For concatenated self-dual CSS codes, we propose several techniques to convert certain logical gates to alternative sets of gates which are more fault-tolerant and more resource-efficient. We also demonstrate that a product of two logical gates which are transversal at some level $m$ could result in a logical gate which is transversal at level $m' < m$. These conversion techniques could simplify gates in fault-tolerant protocols for a concatenated self-dual CSS code especially at the locations where logical gates from different levels are implemented consecutively. (3) We show that for a self-dual CSS code that has a compatible symplectic basis, the types of ancilla states required for the teleportation-based FTQC scheme in Ref. \cite{BZHJL15} can be substantially simplified due to the fact that several transversal logical gates can be implemented by transversal physical gates. This simplification could play an important role in an architectural design of quantum computers that utilize teleportation-based FTQC schemes.

This paper is organized as follows. In \cref{sec:related_works}, we discuss several works related to transversal gates and other techniques for implementing logical gates on stabilizer codes. In \cref{sec:definitions}, we provide definitions related to fault tolerance and transversal gates, formally define a compatible symplectic basis, and discuss additional addressable logical Clifford gates required to achieve the full logical Clifford group if the code has a compatible symplectic basis. We then provide a construction of a compatible symplectic basis of the \codepar{15,7,3} quantum Hamming code in \cref{sec:qHamming} as a motivating example. In \cref{sec:suf_nec_cond}, we prove necessary and sufficient conditions for existence of a compatible symplectic basis of any \codepar{n,k,d} self-dual CSS code with $k \geq 1$, as well as related algorithm, corollary, and examples. Later in \cref{sec:multilevel_trans}, we extend our results to concatenated codes by defining logical gates and transversality at each level of concatenation, and prove that some Clifford gates exhibit multilevel transversality. We propose several applications of compatible symplectic bases and Clifford gates with multilevel transversality in \cref{sec:applications}, then discuss and conclude our work in \cref{sec:conclusion}.

\section{Related works} \label{sec:related_works}

Calderbank et al. \cite{CRSS98} showed that any stabilizer code can be mapped to a classical code over $GF(4)$, and all logical operators of the stabilizer code can be found by calculating the automorphism group of the corresponding classical code. A method to synthesize a physical Clifford circuit for any logical Clifford gate was proposed by Rengaswamy et al. \cite{RCKP20}; note that the corresponding circuit may or may not be fault tolerant. A recent work by Sayginel et al. \cite{SKWRB24} proposed a method to find all logical Clifford gates that satisfy certain physical constraints, including transversal and fold-transversal gates, for any stabilizer code. It should be noted that the method in Ref. \cite{SKWRB24} constructs a symplectic basis (i.e., logical X and logical Z operators) by representing a stabilizer code in the standard form \cite{NC00}, and the logical operation of each operator found by their method is defined by this choice of symplectic basis. One may want to transform an available logical operator in one symplectic basis to a desired logical operator in another symplectic basis. However, we conjecture that the number of possible symplectic bases of each code grows exponentially in the number of logical qubits, so checking whether the transformation is possible or not is generally hard for a code with high encoding rate. 

Jain and Albert \cite{JA24} studied high-distance \codepar{n,1,d} CSS codes that allow transversal Clifford or transversal $T$ gates. In their work, classical quadratic-residue codes are used to constructed self-dual doubly even CSS codes, in which any logical Clifford gate can be implemented transversally. Weak triply even CSS codes that admit transversal implementation of logical $T$ gate are constructed from the doubly even CSS codes through a doubling procedure \cite{BC15}.

Several works have considered families of codes that admit transversal non-Clifford gates. Rengaswamy et al. \cite{RCNP20} proved necessary and sufficient conditions for stabilizer codes to admit an implementation of logical $T$ gates on all logical qubits by transversal physical $T$ gates. The necessary and sufficient conditions were generalized by Hu et al. \cite{HLC22} to transversal $Z$ rotation through $\pi / 2^l$ and other diagonal gates, while allowing additional Pauli correction to preserve the code space. There are works that construct quantum codes admitting transversal implementation of a targeted diagonal gate \cite{HLC21}, and code families in which code distance $d$ and the number of logical qubits $k$ grow linearly in block length $n$ while admitting transversal controlled-controlled-$Z$ gates \cite{GG24}. Algorithms in Ref. \cite{WBB22,WQB23} provide ways to find all logical diagonal gates that can be implemented transversally on any CSS code. For various families of topological and qLDPC codes, constructions of transversal non-Clifford gates were proposed \cite{Brown24,ZSPCB24,GG24,Lin24,GL24}. It has been observed by several works that there is a close connection between transversal gates on qLDPC codes and the cup product in algebraic topology \cite{BDET24,ZSPCB24,Lin24,GL24,Zhu25}.

The full logical Clifford group of a stabilizer code encoding multiple logical qubits is generally hard to achieve with transversal gates only. Fold-transversal gates \cite{Moussa16,BB24} which consist of single-qubit and two-qubit physical gates provide an alternative way to implement a logical gate a code block. This technique allows a richer set of logical gates that can be implemented with constant time overhead and without additional qubits. However, depending on the noise model, fold-transversal gates may be less fault tolerant than strictly transversal gates (i.e., the effective code distance may decrease) due to the fact that each two-qubit physical gate can affect multiple physical qubits in a single code block. Fold-transversal gates for various families of qLDPC codes have been recently studied \cite{BB24,TB24,GV24,ES25}, and available fold-transversal gates for any code can be found using the method proposed in Ref. \cite{SKWRB24}.

In case that a desired logical Clifford gate cannot be implemented by transversal or fold-transversal gates, other techniques can be used. Any logical Clifford gate can be implemented by logical gate teleportation, which involves fault-tolerant logical Bell measurement and fault-tolerant preparation of some stabilizer states as ancilla states. One drawback of this scheme is the number of required measurements is large, and different logical Clifford gates require different ancilla states. Brun et al. \cite{BZHJL15} have proposed another teleportation-based FTQC scheme that utilizes Steane FTM scheme \cite{Steane99}, which can jointly measure logical Pauli operators on a code block or between code blocks while requiring simpler ancilla states. Elementary logical Clifford gates such as logical $H$, $S$, CNOT, and SWAP gates can be implemented by logical Pauli measurements. An efficient fault-tolerant protocol for preparing the ancilla states which are required in Ref. \cite{BZHJL15} is proposed in Ref. \cite{ZLB18}. A teleportation-based FTQC scheme that can implemented any logical Clifford circuit in constant time was proposed in Ref. \cite{ZLBK20}. Note that the last scheme transfers the complexity of logical Clifford circuits to the complexity of preparation of required ancilla states.

On some families of topological codes, some logical two-qubit gate inside a code block can be implemented by Dehn twists \cite{BVCKT17}, a technique that applies a series of physical two-qubit gates on a code block. To make the implementation fault tolerant, multiple rounds of error correction must be applied after each two-qubit gate. It has been shown on 2D toric codes that a set of logical Clifford gates required to generate the full logical Clifford group can be implemented in constant time by combining transversal and fold-transversal gates, Dehn twists, and single-shot logical Pauli measurements \cite{GV24}. The concept of Dehn twists has recently been generalized to broader families of qLDPC codes \cite{TB24}.

\section{Transversal gates and compatible symplectic bases} \label{sec:definitions}

In this section, we provide definitions related to fault tolerance and transversal gates, state the scope of this work, and formally define a compatible symplectic basis. Since the full logical Clifford group of a high-rate code generally cannot be achieved by transversal gates only, we also provide a proposition stating additional addressable logical Clifford gates which are required to achieve the full logical Clifford group.  These definitions and proposition will be later used in the constructions throughout this work.

An \codepar{n,k,d} \emph{stabilizer code} \cite{Gottesman97,Gottesman96} is a quantum code that encodes $k$ logical qubits using $n$ physical qubits and can correct errors on up to $t = \lfloor (d-1)/2 \rfloor$ physical qubits, where $d$ is the code distance. Any stabilizer code can be described by its corresponding \emph{stabilizer group}, an Abelian group generated by $n-k$ commuting Pauli operators that does not include $-I^{\otimes n}$. The code space of a stabilizer code is the $+1$ simultaneous eigenspace of all \emph{stabilizers}, the elements of the stabilizer group. A \emph{Calderbank-Shor-Steane (CSS) code} \cite{CS96,Steane96} is a stabilizer code in which generators of its stabilizer group can be chosen to be purely $X$-type or purely $Z$-type. Any CSS code can be described by parity check matrices $H_x \in \mathbb{Z}_2^{r_x \times n}$ and $H_z \in \mathbb{Z}_2^{r_z \times n}$ of two classical binary linear codes $\mathcal{D}_x$ and $\mathcal{D}_z$ satisfying $\mathcal{D}_x^\perp \subseteq \mathcal{D}_z$ (where $r_x+r_z = n-k$). Each row of $H_x$ ($H_z$) corresponds to an $X$-type ($Z$-type) stabilizer generator, where $0$ and $1$ correspond to $I$ and $X$ ($Z$) tensor factors of the generator. In this work, we focus on a \emph{self-dual CSS code}, a CSS code in which $H_x = H_z$\footnote{In some literature that describes CSS codes in the language of homology, the dual CSS code $\mathcal{Q}^\perp$ of a CSS code $\mathcal{Q}$ is defined by exchanging the roles of the check matrices $H_x$ and $H_z$, thus a CSS code with $H_x = H_z$ is self-dual; see Ref. \cite{BB24} for example. This should not be confused with a CSS code constructed from a classical code $\mathcal{D}$ which is self-dual ($\mathcal{D}^\perp = \mathcal{D}$).}. 

Let $[k]$ denote a set of indices $\{1,\dots,k\}$ where $k \geq 1$. Consider a stabilizer code $\mathcal{Q}$ with stabilizer group $S$. The centralizer $C(S)$ of $S$ is the group of all Pauli operators that commute with all stabilizers. Let $[A,B]=AB-BA$ and $\{A,B\}=AB+BA$ denote commutation and anticommutation of operators $A$ and $B$, respectively. $C(S)$ can be generated by $n-k$ stabilizer generators, $iI^{\otimes n}$, and $k$ pairs of (physical) Pauli operators $(P_j,Q_j)$, $j\in[k]$ such that $\{P_j,Q_j\}=0$ for any $j\in[k]$, $[P_j,Q_{j'}]=0$ for any $j,j'\in[k]$ such that $j\neq j'$, and $[P_j,P_{j'}]=[Q_j,Q_{j'}]=0$ for any $j,j'\in[k]$. We can associate each pair of $(P_j,Q_j)$ with logical Pauli operators $(\bar{X}_j,\bar{Z}_j)$. That is, the group $C(S)/S$ is isomorphic to the Pauli group $\mathcal{P}_k$ defined on $k$ qubits. In this work, we refer to a pair of logical Pauli operators $(\bar{X}_j,\bar{Z}_j)$ as \emph{a hyperbolic pair}, and refer to a set of $k$ hyperbolic pairs $\{(\bar{X}_j,\bar{Z}_j)\}_{j\in[k]}$ as \emph{a symplectic basis}, following the notations from Refs. \cite{BDH06,Wilde09b}. In general, each $\bar{X}_j$ ($\bar{Z}_j$) does not have to be an $X$-type (a $Z$-type) physical Pauli operator. Note that there are many possible choices of a symplectic basis that, together with stabilizer generators and $iI^{\otimes n}$, generates the same centralizer $C(S)$. How logical states of the code (e.g., $\ket{\bar{m}_1,\dots,\bar{m}_k}$ where $m_j \in \{0,1\}$) are defined depends on the choice of a symplectic basis. Any unitary operator $U$ on $n$ physical qubits that preserves the stabilizer group under conjugation, i.e., $UMU^\dagger \in S$ for all $M \in S$, is a \emph{logical operator} of the stabilizer code. The logical operation corresponding to $U$ can be found by considering how each logical Pauli operator is transformed under conjugation. In other words, the logical operation of $U$ depends on the choice of a symplectic basis. 

A \emph{gate gadget} of a unitary operator $L$ is a sequence of quantum operations such as physical gates, qubit preparation, and qubit measurement that implements a logical $L$ operation on the code space. Both input and output of a gate gadget that implements an $m$-qubit gate are $m$ blocks of code. In a non-ideal situation, each component of the gate gadget can be faulty and may lead to high-weight errors which are not correctable by the stabilizer code. To ensure that error propagation is still under control, a gate gadget must be fault tolerant according to the following definition.  

\begin{definition}{Fault-tolerant gate gadget \cite{AGP06}} \label{def:FT_gate_gadget}
	
	Let $t \leq \lfloor (d-1)/2\rfloor$ where $d$ is the distance of a stabilizer code. Suppose that the $i$-th input of a gate gadget implementing an $m$-qubit gate has an error of weight $r_i$ and the gate gadget has $s$ faults. The gate gadget is \emph{$t$-fault tolerant} if the following two conditions are satisfied:
	\begin{enumerate}
		\item Gate Correctness Property (ECCP): For any $r_i, s$ with $\sum_{i=1}^{m}r_i+s \leq t$, ideally decoding the state on each output of the gate gadget gives the same codeword as ideally decoding the state on each input of the gate gadget then applying the corresponding ideal $m$-qubit gate.
        \item Gate Error Propagation Property (GPP): For any $r_i, s$ with $\sum_{i=1}^{m}r_i+s \leq t$, the state on each output of the gate gadget differs from any valid codeword by an error of weight at most $\sum_{i=1}^{m}r_i+s$.
	\end{enumerate}
\end{definition}

On some stabilizer codes, some logical gates can be implemented transversally. A transversal gate is naturally fault tolerant, as any error on a single physical qubit cannot propagate to any other physical qubits in the same code block. The formal definition of transversal gates at the physical level is as follows. 

\begin{definition} \label{def:transversal_at_physical}
    Let $\mathcal{Q}$ be an \codepar{n,k,d} stabilizer code and let $U$ be a quantum gate acting on a single code block of $\mathcal{Q}$. $U$ is \emph{transversal at the physical level} if there exists a decomposition $U=\bigotimes_{i=1}^{n}G_i$ where $G_i$ is a single-qubit gate acting on the $i$-th physical qubit of the code block. Let $V$ be a quantum gate acting on two code blocks of $\mathcal{Q}$. $V$ is \emph{transversal at the physical level} if there exists a decomposition $V=\bigotimes_{i=1}^{n}F_{1:i,2:i}$ where $F_{1:i,2:i}$ is a two-qubit gate acting on the $i$-th physical qubit of the first code block and the $i$-th physical qubit of the second code block.
\end{definition}
(Although the definition presented above can be generalized to a quantum gate on $m$ code blocks composing of $n$ $m$-qubit gates, this work focuses on $m=1,2$ only since we can assume that any unitary operation can be decomposed into a sequence of one- and two-qubit gates.)

Clifford operations are unitary operations that map any Pauli operator to another Pauli operator. Let $\mathcal{P}_n$ be the Pauli group and $\mathcal{U}_n$ be the unitary group defined on $n$ qubits. The Clifford group $\mathcal{C}_n$ defined on $n$ qubits is,
\begin{equation}
    \mathcal{C}_n = \{A \in \mathcal{U}_n \;|\; AMA^{\dagger} \in \mathcal{P}_n \;\forall M \in \mathcal{P}_n\}.
\end{equation}
$\mathcal{C}_n$ can be generated by Hadamard ($H$) and phase ($S$) gates on any single qubit, and controlled-NOT (CNOT) gates on any pair of qubits.

In this work, we are interested in logical Clifford gates which are transversal at the logical level, as defined below.
\begin{definition} \label{def:types_of_transversal}
Let $\mathcal{Q}$ be an \codepar{n,k,d} stabilizer code with $k \geq 1$ and let $B \subseteq [k]$ be a set of indices of logical qubits.
\begin{enumerate}
    \item A \emph{Pauli-type logical-level transversal gate} $\bar{U}_P\left(B\right)$ is a quantum gate defined on one code block of $\mathcal{Q}$ such that a logical Pauli-type gate ($\bar{X}_j$, $\bar{Y}_j$, or $\bar{Z}_j$) is applied to each logical qubit indexed by $j \in B$, and the identity gates are applied to other logical qubits.
    \item A \emph{Hadamard-type logical-level transversal gate} $\bar{U}_H\left(B\right)$ is a quantum gate defined on one code block of $\mathcal{Q}$ such that a logical Hadamard gate $\bar{H}_j$ is applied to each logical qubit indexed by $j \in B$, and the identity gates are applied to other logical qubits.
    \item A \emph{phase-type logical-level transversal gate} $\bar{U}_S\left(B;\mathbf{a}\right)$ is a quantum gate defined on one code block of $\mathcal{Q}$ such that a logical phase-type gate ($\bar{S}_j$ or $\bar{S}_j^\dagger=\bar{S}_j^{-1}$, specified by $\mathbf{a} \in \{1,-1\}^{|B|}$) is applied to each logical qubit indexed by $j \in B$, and the identity gates are applied to other logical qubits.
    \item A \emph{CNOT-type logical-level transversal gate} $\bar{U}_\mathrm{CNOT}\left(B\right)$ is a quantum gate defined on two code blocks of $\mathcal{Q}$ such that a logical CNOT gate $\overline{\mathrm{CNOT}}_{1:j,2:j}$ is applied to each pair of logical qubits, a control qubit from the first block and a target qubit from the second block, both indexed by $j \in B$, and the identity gates are applied to other logical qubits.
\end{enumerate}
\end{definition}
In particular, we would like to find a transversal implementation of physical Clifford gates that induce similar transversal gates on the logical level. Here we limit our focus to logical-level transversal gates $\bar{U}_P\left(B\right)$ for any $B \subseteq [k]$, $\bar{U}_H\left([k]\right)=\bigotimes_{j=1}^{k} \bar{H}_{j}$, any $\bar{U}_S\left([k];\mathbf{a}\right)$ of the form $\bigotimes_{j=1}^{k} \bar{S}_j^{a_j}$ where $(a_1,\dots,a_k)\in\{-1,1\}^k$, and $\bar{U}_\mathrm{CNOT}\left([k]\right)=\bigotimes_{j=1}^{k} \overline{\mathrm{CNOT}}_{1:j,2:j}$.

The construction of Pauli-type logical-level transversal gates $\bar{U}_P\left(B\right)$ is simple. For any stabilizer code, each logical Pauli operator of the code can be defined with physical Pauli operators, and any tensor product of logical Pauli operators is a product of physical Pauli operators. Thus, for any $B \subseteq [k]$, $\bar{U}_P\left(B\right)$ is also transversal at the physical level. 

The construction of the CNOT-type logical-level transversal gate $\bar{U}_\mathrm{CNOT}\left([k]\right)$ is also simple. For any CSS code, it is always possible to choose a symplectic basis so that any logical Pauli operator $\bar{X}_j$ ($\bar{Z}_j$) can be defined with physical Pauli-$X$ (Pauli-$Z$) operators only. Let $\bar{P}_{1:j}$ ($\bar{P}_{2:j}$) denote a logical $P$ operator acting on the $j$-th logical qubit of the first (second) code block, and 
suppose that logical Pauli operators on two code blocks are defined by the same symplectic basis. Transversal implementation of physical CNOT gates $\bigotimes_{i=1}^{n} \mathrm{CNOT}_{1:i,2:i}$ between two code blocks (where qubits in the first and the second blocks act as control and target qubits, respectively) preserves the stabilizer group defined on two code blocks, and transforms $\bar{X}_{1:j} \otimes \bar{I}_{2:j}$ to $\bar{X}_{1:j} \otimes \bar{X}_{2:j}$, $\bar{I}_{1:j} \otimes \bar{X}_{2:j}$ to $\bar{I}_{1:j} \otimes \bar{X}_{2:j}$, $\bar{Z}_{1:j} \otimes \bar{I}_{2:j}$ to $\bar{Z}_{1:j} \otimes \bar{I}_{2:j}$, and $\bar{I}_{1:j} \otimes \bar{Z}_{2:j}$ to $\bar{Z}_{1:j} \otimes \bar{Z}_{2:j}$ for all $j \in \{1,\dots,k\}$. This transformation implies that $\bigotimes_{j=1}^{k} \overline{\mathrm{CNOT}}_{1:j,2:j}=\bigotimes_{i=1}^{n} \mathrm{CNOT}_{1:i,2:i}$; i.e., $\bar{U}_\mathrm{CNOT}\left([k]\right)$ is also transversal at the physical level.

One main goal of this work is to find a symplectic basis for a self-dual CSS code such that Hadamard-type and phase-type logical-level transversal gates $\bar{U}_H\left([k]\right)$ and $\bar{U}_S\left([k];\mathbf{a}\right)$ for any $\mathbf{a}\in\{-1,1\}^k$ are also transversal at the physical level. More specifically, we are interested in a symplectic basis such that the following implementations are possible.
\begin{definition} \label{def:compatible_basis}
    Let $\mathcal{Q}$ be an \codepar{n,k,d} self-dual CSS code with $k \geq 1$. A symplectic basis of $\mathcal{Q}$ is \emph{compatible with multilevel transversal Clifford operations} if on a single code block of $\mathcal{Q}$,
    \begin{enumerate}
        \item $\bigotimes_{j=1}^{k} \bar{H}_j = \bigotimes_{i=1}^{n} H_i$; and
        \item for any $(a_1,\dots,a_k)\in\{-1,1\}^k$, there exists $(b_1,\dots,b_n)\in\{-1,1\}^n$ such that $\bigotimes_{j=1}^{k} \bar{S}_j^{a_j} = \bigotimes_{i=1}^{n} S_i^{b_i}$.
    \end{enumerate}
\end{definition}
Throughout this work, we will refer to a symplectic basis of the code $\mathcal{Q}$ that satisfies \cref{def:compatible_basis} as a \emph{compatible symplectic basis} of $\mathcal{Q}$. With the same symplectic basis, $\bar{U}_P\left(B\right)$ for any $B \subseteq [k]$ and $\bar{U}_\mathrm{CNOT}\left([k]\right)$ are also transversal at the physical level. Note that for some self-dual CSS codes, a symplectic basis with such properties does not exist (as we will see later in \cref{sec:suf_nec_cond}).

Suppose that a self-dual CSS code $\mathcal{Q}$ has a compatible symplectic basis and logical Pauli operators are defined accordingly. Then, the logical-level transversal gates $\bar{U}_P\left(B\right)$ for any $B \subseteq [k]$, $\bar{U}_H\left([k]\right)=\bigotimes_{j=1}^{k} \bar{H}_j$, any $\bar{U}_S\left([k];\mathbf{a}\right)$ of the form $\bigotimes_{j=1}^{k} \bar{S}_j^{a_j}$ where $(a_1,\dots,a_k)\in\{-1,1\}^k$, and $\bar{U}_\mathrm{CNOT}\left([k]\right)=\bigotimes_{j=1}^{k} \overline{\mathrm{CNOT}}_{1:j,2:j}$ can be fault-tolerantly implemented by some physical-level transversal gates. Let us consider $m$ blocks of code $\mathcal{Q}$ that consist of $mk$ logical qubits in total. Since $\bar{H}^{\otimes k}$, $\bar{S}^{\otimes k}$, and $\overline{\mathrm{CNOT}}^{\otimes k}$ can be implemented on any block (or any pair of blocks) by some physical-level transversal gates, and since the Clifford group $\mathcal{C}_m$ can be generated by $\{H_p,S_p,\mathrm{CNOT}_{p,q}\}_{p,q\in [m]}$, we find that for any logical Clifford operator $\bar{C}$ in the logical Clifford group $\bar{\mathcal{C}}_m$, a logical Clifford gate of the form $\bar{C}^{\otimes k}$ can be implemented by some physical-level transversal gates. The supporting code blocks of $\bar{C}^{\otimes k}$ are determined by the supporting qubits of the corresponding physical Clifford gate $C \in \mathcal{C}_m$, and for each code block that $\bar{C}^{\otimes k}$ acts on, the same logical Clifford operation is applied to all $k$ logical qubits in that block. However, this does not mean that any logical Clifford gate $\bar{D} \in \bar{\mathcal{C}}_k$ defined on each code block can be implemented transversally. To achieve any logical Clifford gate, additional addressable logical gates are required. The following proposition provides ways to achieve the full logical Clifford group on $mk$ logical qubits.

\begin{proposition} \label{prop:full_Clifford}
Let $\mathcal{Q}$ be an \codepar{n,k,d} stabilizer code and suppose that there are $m$ blocks of $\mathcal{Q}$. Also, let $\bar{G}_{p:j}$ denote a logical single-qubit gate acting on the $j$-th logical qubit of the $p$-th code block, and let $\bar{F}_{p:j,q:l}$ denote a logical two-qubit gate acting on the $j$-th logical qubit of the $p$-th code block and the $l$-th logical qubit of the $q$-th code block. The logical Clifford group $\bar{\mathcal{C}}_{mk}$ defined on the total $mk$ logical qubits can be generated by combining logical gates from the following three sets:
\begin{enumerate}
    \item the set of logical transversal Clifford gates,
    \begin{equation}
        \left\{\bigotimes_{j=1}^{k} \bar{H}_{p,j},\; \bigotimes_{j=1}^{k} \bar{S}_{p,j}^{a_j},\; \bigotimes_{j=1}^{k} \overline{\mathrm{CNOT}}_{p:j,q:j} \right\} \;\forall p,q \in [m],\; \forall (a_1,\dots,a_k)\in\{-1,1\}^k, \label{eq:trans_from_compatible}
    \end{equation}
    and,
    \item one of the following sets of logical gates inside a code block,
    \begin{align}
        &\left\{\bar{H}_{p,j},\;\overline{\mathrm{CNOT}}_{p:j,p:l} \right\} \;\forall p \in [m], \;\forall j,l \in [k], \label{eq:non-trans1} \\
        \mathrm{or}\;&\left\{\bar{S}_{p,j},\;\overline{\mathrm{CNOT}}_{p:j,p:l} \right\} \;\forall p \in [m], \;\forall j,l \in [k], \label{eq:non-trans2} \\
        \mathrm{or}\;&\left\{\bar{H}_{p,j},\;\overline{\mathrm{CZ}}_{p:j,p:l} \right\} \;\forall p \in [m],\;\forall j,l \in [k], \label{eq:non-trans3}\\
        \mathrm{or}\;&\left\{\overline{\mathrm{CNOT}}_{p:j,p:l},\;\overline{\mathrm{CZ}}_{p:j,p:l} \right\} \;\forall p \in [m],\;\forall j,l \in [k] \label{eq:non-trans4},
    \end{align}
    and,
    \item one of the following sets of logical gates between code blocks,
    \begin{align}
        &\left\{\overline{\mathrm{CNOT}}_{p:1,q:1} \right\} \;\forall p,q \in [m], \label{eq:non-trans_btw1} \\
        \mathrm{or}\;&\left\{\overline{\mathrm{CZ}}_{p:1,q:1} \right\} \;\forall p,q \in [m], \label{eq:non-trans_btw2}
    \end{align}
\end{enumerate}
(where $\overline{\mathrm{CZ}}_{p:j,q:l}$ denotes a logical controlled-$Z$ gate on the corresponding code blocks and logical qubits).
\end{proposition}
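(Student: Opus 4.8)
The plan is to reduce to the standard generating set of the Clifford group and then synthesize its generators from the three sets. Recall (as stated above for $\mathcal{C}_n$) that $\bar{\mathcal{C}}_{mk}$ is generated by the single-qubit logical gates $\bar{H}_{p:j},\bar{S}_{p:j}$ on each of the $mk$ logical qubits together with the logical CNOTs $\overline{\mathrm{CNOT}}_{p:j,q:l}$ on each ordered pair, and that one may freely trade $\overline{\mathrm{CNOT}}$ for $\overline{\mathrm{CZ}}$ via $\overline{\mathrm{CZ}}_{p:j,q:l}=\bar{H}_{q:l}\,\overline{\mathrm{CNOT}}_{p:j,q:l}\,\bar{H}_{q:l}$. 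It therefore suffices to show: (i) the three sets generate the full logical Clifford group $\bar{\mathcal{C}}_k$ acting on the $k$ logical qubits of each single code block; and (ii) one obtains a logical $\overline{\mathrm{CNOT}}$ between logical qubits $p:1$ and $q:1$ for every $p,q\in[m]$ — which is \eqref{eq:non-trans_btw1} directly, or \eqref{eq:non-trans_btw2} conjugated by a single-qubit $\bar{H}_{q:1}$ supplied by (i). Given (i) and (ii), within-block logical SWAPs lie in the block-wise $\bar{\mathcal{C}}_k$ (being built from within-block logical CNOTs), and conjugating the gate from (ii) by them yields $\overline{\mathrm{CNOT}}_{p:j,q:l}$ for all $p,q,j,l$; together with the block-wise $\bar{\mathcal{C}}_k$'s this is exactly the generating set of $\bar{\mathcal{C}}_{mk}$ named at the start.

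The core is (i), which I would establish case by case on the four alternatives for the second set, in each case adjoining the single-block transversal gates $\bigotimes_{l}\bar{H}_{p,l}$ and $\bigotimes_{l}\bar{S}_{p,l}^{a_l}$ (over all sign vectors) from \eqref{eq:trans_from_compatible}. The case $k=1$ is immediate, since \eqref{eq:trans_from_compatible} already contains $\bar{H}_{p,1},\bar{S}_{p,1},\bar{S}_{p,1}^{\dagger}$. For $k\ge2$: a within-block $\overline{\mathrm{CNOT}}$, hence a within-block SWAP, is available in every case (directly for \eqref{eq:non-trans1},\eqref{eq:non-trans2},\eqref{eq:non-trans4}; as $\bar{H}_{p,l}\,\overline{\mathrm{CZ}}_{p:j,p:l}\,\bar{H}_{p,l}$ for \eqref{eq:non-trans3}); every single-qubit logical Pauli is available because $\bar{Z}_{p,j}$ equals $\bigotimes_{l}\bar{S}_{p,l}$ times the inverse of its sign variant that flips only qubit $j$ (or simply $\bar{Z}_{p,1}=\bar{S}_{p,1}^{2}$ for \eqref{eq:non-trans2}), after which $\bar{X}_{p,j}=\bar{H}_{p,j}\bar{Z}_{p,j}\bar{H}_{p,j}$ once a single-qubit $\bar{H}_{p,j}$ is in hand; and finally a single-qubit $\bar{S}_{p,1}$ (for \eqref{eq:non-trans1},\eqref{eq:non-trans3}) or $\bar{H}_{p,1}$ (for \eqref{eq:non-trans2}) is obtained by a short conjugation identity. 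For instance, conjugating $\bigotimes_{l}\bar{S}_{p,l}$ by the single-qubit $\bar{H}_{p,1}$ and multiplying by $\bigotimes_{l}\bar{S}_{p,l}^{-1}$ leaves the single-qubit gate $g_{p,1}=\bar{S}_{p,1}\bar{H}_{p,1}\bar{S}_{p,1}^{-1}\bar{H}_{p,1}$, and reading off its action on the Pauli generators shows that $g_{p,1}$ and $\bar{H}_{p,1}$ project, modulo logical Paulis and scalars, to a $3$-cycle and a transposition of the three Pauli axes, so that $\langle g_{p,1},\bar{H}_{p,1},\bar{Z}_{p,1}\rangle$ is the whole single-qubit logical Clifford group on logical qubit $p:1$; the \eqref{eq:non-trans2} variant is symmetric, conjugating $\bar{S}_{p,1}$ by $\bigotimes_{l}\bar{H}_{p,l}$. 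Propagating the single-qubit gate thus obtained with within-block SWAPs yields all $\bar{H}_{p,j}$ and $\bar{S}_{p,j}$, and these with the within-block CNOTs generate $\bar{\mathcal{C}}_k$.

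The alternative \eqref{eq:non-trans4}, where the second set contains no single-qubit gate at all, is the step I expect to be the main obstacle; I would handle it through the isomorphism $\bar{\mathcal{C}}_k/(\text{logical Paulis and scalars})\cong\mathrm{Sp}(2k,\mathbb{Z}_2)$. Under it, the within-block $\overline{\mathrm{CNOT}}$'s realize the block-diagonal subgroup $\mathrm{GL}(k,\mathbb{Z}_2)$ (as $A\oplus A^{-T}$), the within-block $\overline{\mathrm{CZ}}$'s give $\left(\begin{smallmatrix}I&M\\0&I\end{smallmatrix}\right)$ for $M$ symmetric with zero diagonal, $\bigotimes_{l}\bar{H}_{p,l}$ gives $\left(\begin{smallmatrix}0&I\\I&0\end{smallmatrix}\right)$, and conjugating $\bigotimes_{l}\bar{S}_{p,l}$ by $\bigotimes_{l}\bar{H}_{p,l}$ and then by $\mathrm{GL}(k,\mathbb{Z}_2)$ produces $\left(\begin{smallmatrix}I&AA^{T}\\0&I\end{smallmatrix}\right)$, whose diagonals exhaust $\mathbb{Z}_2^{k}$; hence one obtains $\left(\begin{smallmatrix}I&M\\0&I\end{smallmatrix}\right)$ for every symmetric $M$, and these together with $\mathrm{GL}(k,\mathbb{Z}_2)$ and $\left(\begin{smallmatrix}0&I\\I&0\end{smallmatrix}\right)$ generate all of $\mathrm{Sp}(2k,\mathbb{Z}_2)$. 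Combined with the fact that the signed transversal phase gates already supply every logical Pauli and $\bigotimes_{l}\bar{H}_{p,l},\bigotimes_{l}\bar{S}_{p,l}$ supply the scalar phases, the generated group is all of $\bar{\mathcal{C}}_k$. What remains is routine: verifying the conjugation identities by their action on Pauli generators, checking the symplectic-generation and spanning statements (and the small-$k$ cases), and carrying out the assembly described in the first paragraph.
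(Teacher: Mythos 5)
Your proposal is correct, and its skeleton coincides with the paper's proof: first obtain the full logical Clifford group $\bar{\mathcal{C}}_k$ on each single block from sets 1 and 2 by a case analysis over the four alternatives, then extend across blocks using set 3 together with within-block SWAPs built from within-block CNOTs (converting $\overline{\mathrm{CZ}}$ to $\overline{\mathrm{CNOT}}$ by a Hadamard conjugation where needed). Where you differ is in how the missing single-qubit gates are synthesized in each sub-case: the paper exhibits an explicit circuit diagram for each of the four alternatives, whereas you use algebraic conjugation identities --- extracting $\bar{Z}_{p,j}$ as a quotient of two signed transversal phase gates, and isolating the order-three single-qubit Clifford $g_{p,1}=(\bar H_{p,1}\bar S_{p,1}\bar H_{p,1})\bar S_{p,1}^{-1}$ which, together with the available transposition and the Paulis, generates the single-qubit Clifford group. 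For the alternative \eqref{eq:non-trans4}, which contains no single-qubit gate, you replace the paper's circuit by a surjectivity argument onto $\mathrm{Sp}(2k,\mathbb{Z}_2)$ via block-diagonal, transvection, and off-diagonal generators; this is a genuinely different and more structural route, making completeness of the generating set transparent at the cost of being non-constructive, while the paper's circuits yield concrete gate sequences. One small point to tighten: the diagonal of $AA^{T}$ for invertible $A$ is the row-sum vector $A\mathbf{1}$, which ranges over all \emph{nonzero} vectors but never $\mathbf{0}$; this is harmless because the zero-diagonal case is already supplied by the CZ transvections (and because the achievable $M$ form a group under addition), but the claim that these diagonals ``exhaust'' $\mathbb{Z}_2^k$ should be qualified accordingly.
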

The proof of \cref{prop:full_Clifford} is provided in \cref{app:full_Clifford}. 

If the code $\mathcal{Q}$ is a self-dual CSS code in which a compatible symplectic basis exists, the logical gates in \cref{eq:trans_from_compatible} can be fault-tolerantly implemented by some physical-level transversal gates. Addressable logical Clifford gates in \cref{eq:non-trans1,eq:non-trans2,eq:non-trans3,eq:non-trans4,eq:non-trans_btw1,eq:non-trans_btw2} can be implemented by logical gate teleportation, but the process requires $2k$ steps of fault-tolerant joint logical Pauli measurements and a specific ancilla state for each logical Clifford gate. When the code is a CSS code, an alternative teleportation-based FTQC scheme proposed by Brun et al. \cite{BZHJL15} can be applied. The FTQC scheme is reviewed in \cref{app:gate_TP}.

\section{A motivating example: the \codepar{15,7,3} quantum Hamming code} \label{sec:qHamming}

In this section, we consider the \codepar{15,7,3} quantum Hamming code as a motivating example, describe how the choice of a symplectic basis can affect the logical operation of a transversal implementation of physical $H$ (or $S$ and $S^\dagger$) gates at the physical level, and provide a construction of a compatible symplectic basis (as in \cref{def:compatible_basis}) of the code. The construction will be later extended to a general self-dual CSS code in \cref{sec:suf_nec_cond}.

The \codepar{15,7,3} quantum Hamming code is a self-dual CSS code in which $X$-type and $Z$-type stabilizer generators are constructed from the check matrix of the $[15,11,3]$ classical Hamming code. One possible choice of stabilizer generators is
\begin{equation}
    \begin{matrix*}[l]
        g^x_1 = X_{1} X_{3} X_{5} X_{7} X_{9} X_{11} X_{13} X_{15}, & g^z_1 = Z_{1} Z_{3} Z_{5} Z_{7} Z_{9} Z_{11} Z_{13} Z_{15}, \\
        g^x_2 = X_{2} X_{3} X_{6} X_{7} X_{10} X_{11} X_{14} X_{15}, & g^z_2 = Z_{2} Z_{3} Z_{6} Z_{7} Z_{10} Z_{11} Z_{14} Z_{15}, \\
        g^x_3 = X_{4} X_{5} X_{6} X_{7} X_{12} X_{13} X_{14} X_{15}, & g^z_3 = Z_{4} Z_{5} Z_{6} Z_{7} Z_{12} Z_{13} Z_{14} Z_{15}, \\
        g^x_4 = X_{8} X_{9} X_{10} X_{11} X_{12} X_{13} X_{14} X_{15}, & g^z_4 = Z_{8} Z_{9} Z_{10} Z_{11} Z_{12} Z_{13} Z_{14} Z_{15}. 
    \end{matrix*} \label{eq:3D_stb}
\end{equation}

There are many ways to define logical Pauli operators of the \codepar{15,7,3} quantum Hamming code. One way is to define them from gauge operators of the \codepar{15,1,3} 3D color code \cite{Bombin15}, a subsystem code with one logical qubit and six gauge qubits whose stabilizer group is the same as that of the \codepar{15,7,3} quantum Hamming code. In particular, the \codepar{15,7,3} quantum Hamming code can be viewed as the \codepar{15,1,3} 3D color code in which both logical and gauge qubits are used to encode logical information. Let $(\bar{X}_i,\bar{Z}_i), i\in[6]$ denote gauge operators that define gauge qubits and let $(\bar{X}_7,\bar{Z}_7)$ denote the logical $X$-type and $Z$-type Pauli operators of the \codepar{15,1,3} 3D color code. One conventional choice of such operators is,
\begin{equation}
    \begin{matrix*}[l]
        \bar{X}_1 = X_{3} X_{7} X_{11} X_{15}, & \bar{Z}_1 = Z_{12} Z_{13} Z_{14} Z_{15}, \\
        \bar{X}_2 = X_{12} X_{13} X_{14} X_{15}, & \bar{Z}_2 = Z_{3} Z_{7} Z_{11} Z_{15}, \\
        \bar{X}_3 = X_{5} X_{7} X_{13} X_{15}, & \bar{Z}_3 = Z_{10} Z_{11} Z_{14} Z_{15}, \\
        \bar{X}_4 = X_{10} X_{11} X_{14} X_{15}, & \bar{Z}_4 = Z_{5} Z_{7} Z_{13} Z_{15}, \\
        \bar{X}_5 = X_{9} X_{11} X_{13} X_{15}, & \bar{Z}_5 = Z_{6} Z_{7} Z_{14} Z_{15}, \\
        \bar{X}_6 = X_{6} X_{7} X_{14} X_{15}, & \bar{Z}_6 = Z_{9} Z_{11} Z_{13} Z_{15}, \\
        \bar{X}_7 = X^{\otimes 15}, & \bar{Z}_7 = Z^{\otimes 15}.
    \end{matrix*} \label{eq:3D_gauge}
\end{equation}
Stabilizer generators and gauge operators of the \codepar{15,1,3} 3D color code from \cref{eq:3D_stb,eq:3D_gauge} can be illustrated by volume and face operators as in \cref{fig:3D}. 

\begin{figure}[htbp]
	\centering
    \begin{subfigure}[b]{0.3\textwidth}
        \includegraphics[width=\textwidth]{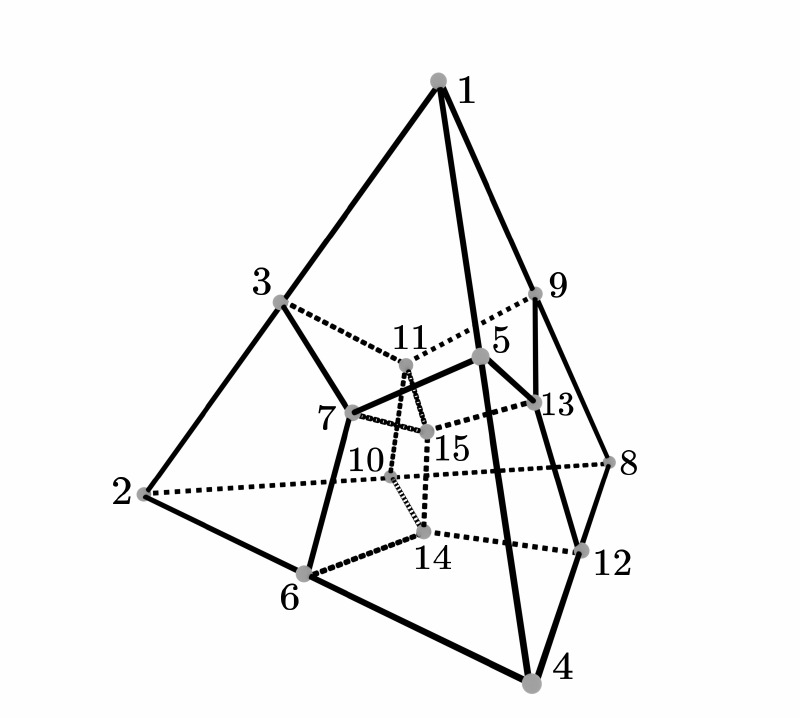}
        \caption{}
    \end{subfigure}
    \begin{subfigure}[b]{0.3\textwidth}
        \includegraphics[width=\textwidth]{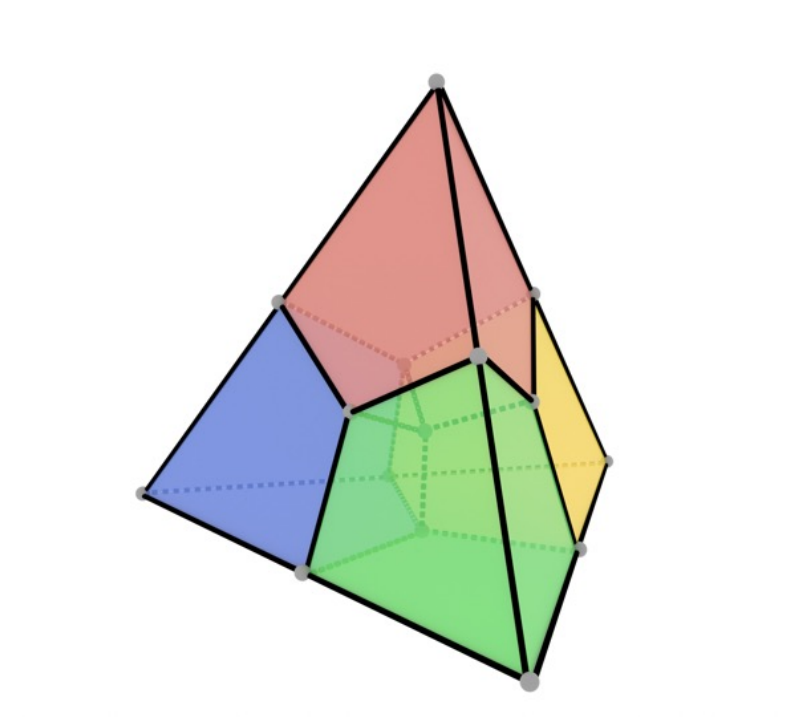}
        \caption{}
    \end{subfigure}
    \begin{subfigure}[b]{0.3\textwidth}
        \includegraphics[width=\textwidth]{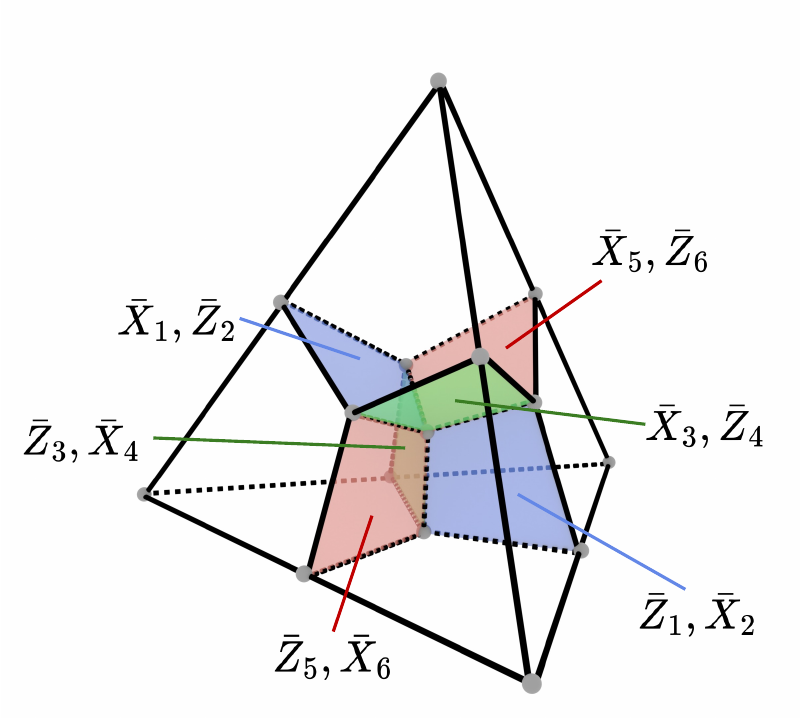}
        \caption{}
    \end{subfigure}
	\caption{The \codepar{15,1,3} 3D color code. (a) Each physical qubit of the code is represented by a vertex with the corresponding numbering. (b) Each stabilizer generator of the code is presented by an 8-body operator. There are $X$-type and $Z$-type stabilizer generators that act on the same supporting qubits. (c) Each gauge generator of the code is presented by a 4-body operator. Each $X$-type gauge generator anticommutes with only one $Z$-type gauge generator (and vice versa). Logical $X$ and logical $Z$ operators are $X^{\otimes 15}$ and $Z^{\otimes 15}$. We can use 1 logical and 6 gauge pairs of the \codepar{15,1,3} 3D color code to define 7 pairs of logical operators of the \codepar{15,7,3} quantum Hamming code. 
    }
	\label{fig:3D}%
\end{figure}

A symplectic basis $\{(\bar{X}_j,\bar{Z}_j)\}_{j\in[7]}$ of the \codepar{15,7,3} quantum Hamming code can be obtained by defining its logical Pauli operators according to \cref{eq:3D_gauge}. With this definition, we find that (1)
$\{\bar{X}_j,\bar{Z}_j\}=0$ for any $j\in[7]$, $[\bar{X}_j,\bar{Z}_{j'}]=0$ for any $j,j'\in[7]$ such that $j\neq j'$, and $[\bar{X}_j,\bar{X}_{j'}]=[\bar{Z}_j,\bar{Z}_{j'}]=0$ for any $j,j' \in [7]$, and (2) the following pairs of operators share the same supporting qubits: $\bar{X}_1$ and $\bar{Z}_2$, $\bar{X}_2$ and $\bar{Z}_1$, $\bar{X}_3$ and $\bar{Z}_4$, $\bar{X}_4$ and $\bar{Z}_3$, $\bar{X}_5$ and $\bar{Z}_6$, $\bar{X}_6$ and $\bar{Z}_5$, and $\bar{X}_7$ and $\bar{Z}_7$.

Let us consider the action of $\bigotimes_{i=1}^{15} H_i$, a transversal implementation of physical $H$ gates in our choice of symplectic basis. $\bigotimes_{i=1}^{15} H_i$ preserves the stabilizer group under conjugation, thus it is a logical operator. However, each logical Pauli operator in the following pairs is transformed to the other operator in the same pair under the action of $\bigotimes_{i=1}^{15} H_i$: $(\bar{X}_1,\bar{Z}_2)$, $(\bar{X}_2,\bar{Z}_1)$, $(\bar{X}_3,\bar{Z}_4)$, $(\bar{X}_4,\bar{Z}_3)$, $(\bar{X}_5,\bar{Z}_6)$, $(\bar{X}_6,\bar{Z}_5)$, and $(\bar{X}_7,\bar{Z}_7)$. In other words, the logical operation of $\bigotimes_{i=1}^{15} H_i$ in this symplectic basis is $\bigotimes_{j=1}^{7} \bar{H}_j$ followed by swapping logical qubits $(1,2)$, $(3,4)$, and $(5,6)$. Therefore, the symplectic basis defined by \cref{eq:3D_gauge} is not a compatible symplectic basis for the \codepar{15,7,3} quantum Hamming code according to \cref{def:compatible_basis}.

Next, let us consider a new symplectic basis $\{(\bar{X}'_j,\bar{Z}'_j)\}_{j\in[7]}$ defined as follows.
\begin{equation}
    \begin{matrix*}[l]
        \bar{X}'_1 =\bar{X}_1\bar{X}_7, & \bar{Z}'_1 =\bar{Z}_2\bar{Z}_7, \\
        \bar{X}'_2 =\bar{X}_2\bar{X}_7, & \bar{Z}'_2 =\bar{Z}_1\bar{Z}_7, \\
        \bar{X}'_3 = \bar{X}_1\bar{X}_2\bar{X}_3\bar{X}_7, & \bar{Z}'_3 = \bar{Z}_1\bar{Z}_2\bar{Z}_4\bar{Z}_7, \\
        \bar{X}'_4 = \bar{X}_1\bar{X}_2\bar{X}_4\bar{X}_7, & \bar{Z}'_4 = \bar{Z}_1\bar{Z}_2\bar{Z}_3\bar{Z}_7, \\
        \bar{X}'_5 = \bar{X}_1\bar{X}_2\bar{X}_3\bar{X}_4\bar{X}_5\bar{X}_7, & \bar{Z}'_5 = \bar{Z}_1\bar{Z}_2\bar{Z}_3\bar{Z}_4\bar{Z}_6\bar{Z}_7, \\
        \bar{X}'_6 = \bar{X}_1\bar{X}_2\bar{X}_3\bar{X}_4\bar{X}_6\bar{X}_7, & \bar{Z}'_6 = \bar{Z}_1\bar{Z}_2\bar{Z}_3\bar{Z}_4\bar{Z}_5\bar{Z}_7, \\
        \bar{X}'_7 = \bar{X}_1\bar{X}_2\bar{X}_3\bar{X}_4\bar{X}_5\bar{X}_6\bar{X}_7, & \bar{Z}'_7 = \bar{Z}_1\bar{Z}_2\bar{Z}_3\bar{Z}_4\bar{Z}_5\bar{Z}_6\bar{Z}_7,
    \end{matrix*} \label{eq:qHamming_new_basis}
\end{equation}
where $\bar{X}_j$ and $\bar{Z}_j$, $j\in[7]$ are defined by \cref{eq:3D_gauge}. With this new definition, we find that (1) the logical Pauli operators satisfy expected commutation and anticommutation relations, and (2) for all $j$, $\bar{X}'_j$ and $\bar{Z}'_j$ share the same supporting qubits. Therefore, the logical operation of $\bigotimes_{i=1}^{15} H_i$ in the new symplectic basis is $\bigotimes_{j=1}^{7} \bar{H}_j$.

Now, let us consider the action of $\bigotimes_{i=1}^{15} S_i$ in this new symplectic basis. $\bigotimes_{i=1}^{15} S_i$ preserves the stabilizer group as it transforms any $g^z_i$ to itself and transforms any $g^x_i$ to $g^x_ig^z_i$. It also transforms logical Pauli operators as follows:
\begin{equation}
    \begin{matrix*}[l]
        \bar{Z}'_j \mapsto \bar{Z}'_j & \mathrm{for}\; j=1,\dots,7, \\
        \bar{X}'_j \mapsto i\bar{X}'_j\bar{Z}'_j & \mathrm{for}\; j=3,4,7, \\
        \bar{X}'_j \mapsto -i\bar{X}'_j\bar{Z}'_j & \mathrm{for}\; j=1,2,5,6.
    \end{matrix*}
\end{equation}
A logical gate $\bar{S}_j$ ($\bar{S}_j^{\dagger}$) transforms $\bar{X}'_j$ to $i\bar{X}'_j\bar{Z}'_j$ ($-i\bar{X}'_j\bar{Z}'_j$) and transforms $\bar{Z}'_j$ to $\bar{Z}'_j$, thus the logical operation of $\bigotimes_{i=1}^{15} S_i$ in the new symplectic basis is $\bigotimes_{j=1}^{7} \bar{S}_j^{a_j}$ with $a_1=a_2=a_5=a_6=-1$ and $a_3=a_4=a_7=1$. An operator $\bigotimes_{j=1}^{7} \bar{S}_j^{a'_j}$ with any $a'_j$ can be obtained by simply multiplying $\bigotimes_{j=1}^{7} \bar{S}_j^{a_j}$ with logical Pauli operators $\bar{Z}'_j$ whenever $\bigotimes_{j=1}^{7} \bar{S}_j^{a_j}$ does not give the desired phase in the transformation of $\bar{X}'_j$. This is possible because $\bar{S}_j\bar{Z}'_j = \bar{S}_j^{\dagger}$ and $\bar{S}_j^\dagger\bar{Z}'_j = \bar{S}_j$. Note that the resulting operator can still be implemented transversally by physical $S$ and $S^\dagger$ gates since any $\bar{Z}'_j$ is composed of physical $Z$ gates, and $S_i Z_i = S^{\dagger}$ and $S_i^\dagger Z_i = S$. For example, the logical gate $\bigotimes_{j=1}^{7} \bar{S}_j$ can be implemented by $\left(\bigotimes_{i=1}^{15} S_i\right)\left(\bar{Z}'_1\bar{Z}'_2\bar{Z}'_5\bar{Z}'_6\right)=\bigotimes_{i=1}^{15} S_i^{b_i}$ with $b_3=b_6=b_9=b_{12}=-1$ and $b_i=1$ for other $i$'s. Therefore, the symplectic basis defined by \cref{eq:qHamming_new_basis} is a compatible symplectic basis for the \codepar{15,7,3} quantum Hamming code according to \cref{def:compatible_basis}.

\section{Necessary and sufficient conditions for existence of a compatible symplectic basis} \label{sec:suf_nec_cond}

In the previous section, we have shown how a compatible symplectic basis for the \codepar{15,7,3} quantum Hamming code could be constructed. In this section, we will extend the similar techniques to general self-dual CSS codes and derive necessary and sufficient conditions for existence of such a compatible symplectic basis.

We start by presenting the necessary and sufficient conditions for existence of a compatible symplectic basis for any \codepar{n,k,d} self-dual CSS code, which is the first main theorem of this work. In what follows, $\mathcal{P}_n^{(x)}$ and $\mathcal{P}_n^{(z)}$ denote the group of $X$-type and $Z$-type Pauli operators defined on $n$ qubits, $\mathrm{supp}(U)$ denotes the set of supporting (physical) qubits of a unitary operator $U$, and $\mathbf{a}\cdot \mathbf{b} = \sum_{i=1}^n a_i b_i\;(\mathrm{mod}\;2)$ denotes the dot product of two binary vectors $\mathbf{a}, \mathbf{b} \in \mathbb{Z}_2^n$.

\begin{theorem} \label{thm:main1}
    Let $\mathcal{Q}$ be an \codepar{n,k,d} self-dual CSS code with $k \geq 1$ which is constructed from a parity check matrix of a classical binary linear code $\mathcal{D}$ satisfying $\mathcal{D}^\perp \subsetneq \mathcal{D}$. Suppose that $\mathcal{D}^\perp = \langle \mathbf{g}_i \rangle$ and $\mathcal{D} = \langle \mathbf{g}_i,\mathbf{h}_j \rangle$, where $i \in [r]$, $j \in [k]$, $\mathbf{g}_i,\mathbf{h}_j \in \mathbb{Z}_2^n$, and $r=(n-k)/2$.
    The following statements are equivalent:
    \begin{enumerate}
        \item There exists $j \in [k]$ such that $\mathbf{h}_j \cdot \mathbf{h}_j = 1$.
        \item There exists at least one hyperbolic pair $(\bar{L}^x,\bar{L}^z)$ of $\mathcal{Q}$ where $\bar{L}^x \in \mathcal{P}_n^x$, $\bar{L}^z \in \mathcal{P}_n^z$ such that $\mathrm{supp}(\bar{L}^x)=\mathrm{supp}(\bar{L}^z)$.
        \item There exists a symplectic basis $\{(\bar{X}_j,\bar{Z}_j)\}_{j\in[k]}$ of $\mathcal{Q}$ where $\{\bar{X}_j\} \subsetneq \mathcal{P}_n^x$, $\{\bar{Z}_j\} \subsetneq \mathcal{P}_n^z$ such that $\mathrm{supp}(\bar{X}_j)=\mathrm{supp}(\bar{Z}_j)$ for all $j \in [k]$. 
        \item There exists a symplectic basis of $\mathcal{Q}$ which is compatible with multilevel transversal Clifford operations.
    \end{enumerate}
\end{theorem}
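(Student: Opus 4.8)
The plan is to prove the four statements equivalent by establishing the cycle $1 \Rightarrow 2 \Rightarrow 3 \Rightarrow 4 \Rightarrow 1$. I work throughout with the standard dictionary between $\mathcal{Q}$ and the classical data: the $X$-type (resp.\ $Z$-type) stabilizer group corresponds to $\mathcal{D}^\perp$, a nontrivial $X$-type (resp.\ $Z$-type) logical Pauli corresponds to a nonzero coset in $V:=\mathcal{D}/\mathcal{D}^\perp$, two $X$-/$Z$-type operators with supports $\mathbf{u},\mathbf{v}$ commute iff $\mathbf{u}\cdot\mathbf{v}=0$, and Hamming weight is $\mathbb{Z}_2$-linear modulo $2$. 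Two observations will be used repeatedly and should be set down first: every $\mathbf{g}\in\mathcal{D}^\perp$ has even weight (since $\mathcal{D}^\perp\subseteq\mathcal{D}=(\mathcal{D}^\perp)^\perp$ forces $\mathbf{g}\cdot\mathbf{g}=0$), so the symmetric bilinear form $\beta(\bar{\mathbf{u}},\bar{\mathbf{v}}):=\mathbf{u}\cdot\mathbf{v}$ and the ``norm'' $q(\bar{\mathbf{u}}):=\mathbf{u}\cdot\mathbf{u}=\mathrm{wt}(\mathbf{u})\bmod 2$ descend to well-defined maps on $V$; and $\beta$ is nondegenerate (its radical equals $(\mathcal{D}\cap\mathcal{D}^\perp)/\mathcal{D}^\perp=0$) while $q$ is $\mathbb{Z}_2$-linear.

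For $1 \Rightarrow 2$: if $\mathbf{h}_j\cdot\mathbf{h}_j=1$, let $\bar{L}^x,\bar{L}^z$ be the $X$-type and $Z$-type operators supported on $\mathrm{supp}(\mathbf{h}_j)$; membership $\mathbf{h}_j\in\mathcal{D}$ gives commutation with all stabilizers, $\mathbf{h}_j\notin\mathcal{D}^\perp$ gives nontriviality, and $\mathbf{h}_j\cdot\mathbf{h}_j=1$ gives $\{\bar{L}^x,\bar{L}^z\}=0$, so $(\bar{L}^x,\bar{L}^z)$ is the required hyperbolic pair with equal support. For $2 \Rightarrow 3$: writing $\mathbf{a}$ for the common support indicator of such a pair, $\bar{\mathbf{a}}\in V$ is nonzero with $q(\bar{\mathbf{a}})=1$, so the crux is the linear-algebra lemma that \emph{a nondegenerate symmetric bilinear form over $\mathbb{Z}_2$ admitting a norm-$1$ vector has an orthonormal basis}. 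I would prove it by induction on $\dim V$: the case $\dim V=1$ is immediate; for $\dim V=k\ge 2$, let $w$ be the unique vector with $\beta(w,\cdot)=q(\cdot)$ (existing by nondegeneracy), note the norm-$1$ vectors form $V\setminus\ker q$ (here $q$ is linear and nonzero), of which there are $2^{k-1}\ge 2$, so choose a norm-$1$ vector $v\ne w$; then $v\notin v^\perp$, giving an orthogonal splitting $V=\langle v\rangle\oplus v^\perp$ with $\beta|_{v^\perp}$ nondegenerate and $v^\perp\ne\ker q$ (otherwise $\beta(v,\cdot)=q(\cdot)$, i.e.\ $v=w$), so $v^\perp$ again carries a norm-$1$ vector and the induction closes. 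Lifting an orthonormal basis $\{v_j\}$ of $(V,\beta)$ to vectors $\{\mathbf{a}_j\}\subseteq\mathcal{D}$ and taking $\bar{X}_j,\bar{Z}_j$ to be the $X$-type and $Z$-type operators supported on $A_j:=\mathrm{supp}(\mathbf{a}_j)$ yields the desired symplectic basis, since $\mathbf{a}_j\cdot\mathbf{a}_{j'}=\delta_{jj'}$ encodes exactly the hyperbolic commutation relations and the $2k$ resulting operators are linearly independent modulo the stabilizer group, hence a complete symplectic basis.

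For $3 \Rightarrow 4$, take such a basis with $\bar{X}_j=\prod_{i\in A_j}X_i$, $\bar{Z}_j=\prod_{i\in A_j}Z_i$, each $|A_j|$ odd, and verify \cref{def:compatible_basis} by direct conjugation: $\bigotimes_i H_i$ preserves the self-dual CSS stabilizer group and conjugates $\bar{X}_j\leftrightarrow\bar{Z}_j$ with no phase, so its logical action agrees with $\bigotimes_j\bar{H}_j$ on every logical Pauli and hence equals it on the code space (up to a global phase, by irreducibility of the logical Pauli action); and $\bigotimes_i S_i$ fixes each $\bar{Z}_j$ while sending $\bar{X}_j\mapsto i^{|A_j|}\bar{X}_j\bar{Z}_j=(\pm i)\bar{X}_j\bar{Z}_j$, so it realizes $\bigotimes_j\bar{S}_j^{c_j}$ for a fixed $\mathbf{c}\in\{-1,1\}^k$ determined by the residues $|A_j|\bmod 4$; then for an arbitrary $\mathbf{a}\in\{-1,1\}^k$, right-multiplying by $\prod_{j:a_j\ne c_j}\bar{Z}_j$ (a $Z$-type Pauli $\prod_i Z_i^{e_i}$) flips exactly the needed phases and turns $\bigotimes_i S_i$ into $\bigotimes_i S_i Z_i^{e_i}=\bigotimes_i S_i^{b_i}$, using $S_iZ_i=S_i^{-1}$ and $\bar{S}_j\bar{Z}_j=\bar{S}_j^{-1}$. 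For $4 \Rightarrow 1$ I argue the contrapositive: if no $\mathbf{h}_j$ has odd weight then, weight being linear mod $2$ and every $\mathbf{g}_i$ even, every codeword of $\mathcal{D}$ is even, so the all-ones vector lies in $\mathcal{D}^\perp$ and $Z^{\otimes n}$ is a stabilizer, acting trivially on the code space; but \cref{def:compatible_basis}(2) with $\mathbf{a}=(1,\dots,1)$ gives $\bigotimes_j\bar{S}_j=\bigotimes_i S_i^{b_i}$ on the code space, and squaring (using $S^{\pm 2}=Z$) gives $\bigotimes_j\bar{Z}_j=\bigotimes_i Z_i=Z^{\otimes n}$ on the code space up to a phase, forcing the nontrivial logical Pauli $\bigotimes_j\bar{Z}_j$ to act as a scalar --- impossible since $k\ge 1$.

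I expect the linear-algebra lemma inside $2\Rightarrow 3$ --- the orthonormalization over $\mathbb{Z}_2$, where one must steer clear of the canonical vector $w$ defined by $\beta(w,\cdot)=q(\cdot)$ at each step --- together with the fourth-root-of-unity phase bookkeeping in $3\Rightarrow 4$ (confirming that the residual logical Pauli is absorbed exactly by sign choices within the transversal $S^{\pm1}$ family, and that the global-phase ambiguities do not matter) to be where the real work lies; the implications $1\Rightarrow 2$ and $4\Rightarrow 1$ are short.
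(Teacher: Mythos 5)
Your cycle $1\Rightarrow2\Rightarrow3\Rightarrow4\Rightarrow1$ matches the paper's, and two of your legs are correct but genuinely different: for $2\Rightarrow3$ you orthonormalize the nondegenerate form $\beta$ on $V=\mathcal{D}/\mathcal{D}^\perp$ directly (the classical $\mathbb{F}_2$ dichotomy between alternating forms and forms with an orthonormal basis, essentially Theorem 3.4 of Haah et al.\ cited in the paper's note added), whereas the paper runs a modified symplectic Gram--Schmidt (\cref{lem:alg}) and then repairs ``crossed'' pairs three at a time via \cref{lem:basis_change}; and for $4\Rightarrow1$ your squaring argument ($\bigotimes_i S_i^{2b_i}=Z^{\otimes n}$ is a stabilizer when all of $\mathcal{D}$ is even-weight, forcing the nontrivial logical $\bar{Z}^{\otimes k}$ to act as a scalar) is shorter than the paper's route through \cref{lem:mixed_type,lem:vdotv=1} and, unlike the paper's, uses only condition 2 of \cref{def:compatible_basis}. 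Both of these legs check out.

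The gap is in $3\Rightarrow4$, in the phase-gate half. You conjugate by $\bigotimes_i S_i$ and track only its action on the logical operators $\bar{X}_j,\bar{Z}_j$, but you never verify that $\bigotimes_i S_i$ preserves the stabilizer group --- and in general it does not. It sends each $X$-type generator $g^x$ to $i^{w}\,g^xg^z$ where $w=\mathrm{wt}(g^x)$; self-duality only forces $w$ even, so $i^{w}=-1$ whenever $w\equiv 2\pmod 4$, and $-g^xg^z$ is not a stabilizer. In that case $\bigotimes_i S_i$ maps the code space onto a different simultaneous eigenspace and is not a logical gate at all, so the subsequent phase bookkeeping has nothing to attach to. (Your argument silently assumes the code is doubly even; the hexagonal color code is a standard self-dual counterexample.) This is exactly what the paper's \cref{lem:stb_preserving_S} exists to repair: one must first choose a sign pattern $\mathbf{v}\in\mathbb{Z}_2^n$ with $\mathbf{v}\cdot\mathbf{g}_i\equiv w_i/2\pmod 2$ for every generator --- a solvable linear system since $H$ has full row rank --- and work with $W=\bigotimes_i S_i^{(-1)^{v_i}}$ instead of $S^{\otimes n}$. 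Your computation of the residual phases on the $\bar{X}_j$ (odd $|A_j|$ giving $\pm i$, corrected by multiplying with logical $Z$'s, which stay within the $S^{\pm1}$ family) then goes through verbatim for $W$, with $q_j=|A_j|-2\sum_i v_i\ell_{ji}$ replacing $|A_j|$. With that substitution the proof is complete; without it, statement 4 is not established.
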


To prove \cref{thm:main1}, we require the following five lemmas.

\begin{lemma} \label{lem:alg}
    For any \codepar{n,k,d} self-dual CSS code $\mathcal{Q}$ with $k \geq 1$, there exists a symplectic basis $\{(\bar{X}_j,\bar{Z}_j)\}_{j\in[k]}$ where $\{\bar{X}_j\} \subsetneq \mathcal{P}_n^x$, $\{\bar{Z}_j\} \subsetneq \mathcal{P}_n^z$ such that for each $j$, exactly one of the following is true:
    \begin{enumerate}
        \item $\mathrm{supp}(\bar{X}_j)=\mathrm{supp}(\bar{Z}_j)$; or
        \item there exists exactly one $j'\in[k]$, $j'\neq j$ such that $\mathrm{supp}(\bar{X}_j)=\mathrm{supp}(\bar{Z}_{j'})$ and $\mathrm{supp}(\bar{X}_{j'})=\mathrm{supp}(\bar{Z}_{j})$.
    \end{enumerate}
\end{lemma}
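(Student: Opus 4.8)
The plan is to translate everything into the classical-coding picture and reduce the lemma to a normal-form statement for a symmetric bilinear form over $\mathbb{F}_2$. Write $X(\mathbf{v})$ and $Z(\mathbf{v})$ for the $X$- and $Z$-type Pauli operators whose support is $\{i : v_i = 1\}$. Because $\mathcal{Q}$ is self-dual, the $X$-type operators in the centralizer are exactly the $X(\mathbf{v})$ with $\mathbf{v}\in\mathcal{D}$, trivial (a stabilizer) iff $\mathbf{v}\in\mathcal{D}^\perp$, and likewise for $Z$-type; moreover $X(\mathbf{u})$ anticommutes with $Z(\mathbf{v})$ iff $\mathbf{u}\cdot\mathbf{v}=1$. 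So both sets of logicals are copies of $V := \mathcal{D}/\mathcal{D}^\perp \cong \mathbb{Z}_2^k$, and since $\mathcal{D}^\perp\subseteq\mathcal{D}$ the dot product descends to a symmetric bilinear form $B$ on $V$. The one structural fact I would record up front is that $B$ is \emph{non-degenerate}: its radical is $(\mathcal{D}\cap\mathcal{D}^\perp)/\mathcal{D}^\perp = \mathcal{D}^\perp/\mathcal{D}^\perp = 0$. I would also note that $\mathcal{D}^\perp$ is self-orthogonal (as $\mathcal{D}^\perp\subseteq(\mathcal{D}^\perp)^\perp=\mathcal{D}$), so every $\mathbf{g}\in\mathcal{D}^\perp$ has even weight and hence $[\mathbf{v}]\mapsto |\mathbf{v}|\bmod 2 = B([\mathbf{v}],[\mathbf{v}])$ is a well-defined function on $V$.

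Next I would observe that a symplectic basis of $\mathcal{Q}$ with all $\bar{X}_j$ of $X$-type and all $\bar{Z}_j$ of $Z$-type is the same data as a basis $\{\mathbf{e}_1,\dots,\mathbf{e}_k\}$ of $V$ (the classes of the $\bar{X}_j$), the classes of the $\bar{Z}_j$ being then forced to be the $B$-dual basis. Choosing concrete representatives in $\mathcal{D}$, and using that two binary vectors with equal support are equal, the support conditions in the lemma say exactly that the $B$-dual basis coincides, as a set, with $\{\mathbf{e}_j\}$, matched up by an involution $\sigma$ of $[k]$ — alternative~1 occurs at fixed points of $\sigma$, alternative~2 at its transpositions. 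A one-line computation shows the $B$-dual of $\{\mathbf{e}_j\}$ is $\{\mathbf{e}_j\}$ via $\sigma$ precisely when the Gram matrix of $\{\mathbf{e}_j\}$ equals the permutation matrix of $\sigma$, i.e.\ is block-diagonal with blocks $(1)$ (from fixed points) and $\bigl(\begin{smallmatrix}0&1\\1&0\end{smallmatrix}\bigr)$ (from transpositions). Thus the lemma reduces to: every non-degenerate symmetric bilinear form over $\mathbb{F}_2$ admits a basis whose Gram matrix has this block form.

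That last fact I would prove by induction on $k=\dim V$, using the standard splitting $V = U\oplus U^{\perp_B}$ with $B|_{U^{\perp_B}}$ non-degenerate whenever $B|_U$ is. If $B([\mathbf{v}],[\mathbf{v}])=1$ for some $\mathbf{v}$, then $\langle\mathbf{v}\rangle$ gives a $(1)$-block and the complement is a non-degenerate form in dimension $k-1$; apply induction. Otherwise $B$ is alternating, so (being non-degenerate) $k$ is even; pick any $\mathbf{e}_1\neq 0$ and, by non-degeneracy, $\mathbf{e}_2$ with $B(\mathbf{e}_1,\mathbf{e}_2)=1$, and since $B(\mathbf{e}_i,\mathbf{e}_i)=0$ the span $\langle\mathbf{e}_1,\mathbf{e}_2\rangle$ is a $\bigl(\begin{smallmatrix}0&1\\1&0\end{smallmatrix}\bigr)$-block whose complement is non-degenerate (still alternating) in dimension $k-2$. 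Finally I would translate back: lift $\mathbf{e}_j$ to $\mathbf{v}_j\in\mathcal{D}$, set $\bar{X}_j:=X(\mathbf{v}_j)$ and $\bar{Z}_j:=Z(\mathbf{v}_{\sigma(j)})$, and verify directly that these form a symplectic basis — the block-diagonal Gram matrix makes all the required commutations hold and the diagonal blocks give the hyperbolic pairs — that the $\bar{X}_j$ (resp.\ $\bar{Z}_j$) are independent modulo the stabilizer, and that for each $j$ exactly one of the two support alternatives holds, ``exactly one'' because distinct basis vectors have distinct, hence unequal, supports. The same argument can be organized as an explicit sequence of symplectic row operations on the logical-$X$ generator matrix, which is the form convenient for the algorithm used later in the paper.

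I expect the main obstacle to be bookkeeping rather than ideas: making sure a ``block-diagonalizing basis'' genuinely yields a bona fide symplectic basis built from $X$- and $Z$-type operators satisfying the exact dichotomy, and keeping straight the distinction between a class in $V$ and the concrete representative — it is the Hamming support of the latter that the lemma constrains, while commutation and linear independence only see the former. I would also be careful to invoke non-degeneracy of $B$ at exactly the three places it is needed: the orthogonal-complement splitting, the existence of a partner $\mathbf{e}_2$ in the alternating case, and the exclusion of a radical vector (which would fit into neither block type).
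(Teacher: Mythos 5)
Your proof is correct, and at its core it is the same induction as the paper's: the paper proves \cref{lem:alg} by running an explicit modified symplectic Gram--Schmidt procedure (Algorithm~1, imported from Ref.~[TN24]) on the coset representatives $\mathbf{h}_j$, whose Step~1 is exactly your ``split off a $(1)$-block at a non-isotropic vector'' case and whose Step~2 is exactly your ``extract a hyperbolic $\bigl(\begin{smallmatrix}0&1\\1&0\end{smallmatrix}\bigr)$-block in the alternating case,'' with the reassignments $\mathbf{w}'_q:=\mathbf{w}_q+(\mathbf{w}_q\cdot\mathbf{w}_p)\mathbf{w}_p$ playing the role of your orthogonal-complement splitting. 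The difference is one of packaging: you quotient to $V=\mathcal{D}/\mathcal{D}^\perp$, observe that the dot product descends to a \emph{non-degenerate} symmetric bilinear form there, and invoke (and prove) the block-diagonal normal form over $\mathbb{Z}_2$, whereas the paper works directly with representatives and defers the soundness of the algorithm to Theorem~4 of Ref.~[TN24]. Your version buys self-containedness and makes explicit where non-degeneracy is used (the complement splitting, the existence of a hyperbolic partner, and the exclusion of radical vectors), and your careful separation of classes in $V$ from concrete representatives --- choosing the $\bar{Z}_{j}$ representative to be literally the same binary vector as the matched $\bar{X}$, so that the support equalities and the ``exactly one'' dichotomy follow from distinct vectors having distinct supports --- is exactly the bookkeeping the lemma's statement requires; the paper gets the same thing for free because its algorithm only ever outputs $\Bell^x$ and $\Bell^z$ that are equal as vectors within each matched pair.
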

\begin{proof}
    Let $\mathcal{Q}$ be an \codepar{n,k,d} self-dual CSS code constructed from a check matrix $H \in \mathbb{Z}_2^{r\times n}$ of a classical binary linear code $\mathcal{D}$ satisfying $\mathcal{D}^\perp \subseteq \mathcal{D}$ (where $r=(n-k)/2$). As $k\geq 1$, $\mathcal{D}^\perp$ is a proper subset of $\mathcal{D}$ (denoted by $\mathcal{D}^\perp \subsetneq \mathcal{D}$). Let $\mathbf{g}_i$ be rows of $H$. As $H$ is a generator matrix of $\mathcal{D}^\perp$, we can write $\mathcal{D}^\perp = \langle \mathbf{g}_i \rangle$, $i\in[r]$.  As $\mathcal{D}^\perp$ is a subgroup of $\mathcal{D}$, we can write $\mathcal{D} = \langle \mathbf{g}_i,\mathbf{h}_j \rangle$, where $\{\mathbf{h}_j\}$, $j\in[k]$ generates a set of coset representatives of $\mathcal{D}^\perp$ in $\mathcal{D}$. As $\mathbf{g}_i\in \mathcal{D}^\perp \subsetneq \mathcal{D}$ and $\mathbf{h}_j \in \mathcal{D}\setminus\mathcal{D}^\perp$, we have that $\mathbf{g}_i \cdot \mathbf{g}_{i'} = 0$ for all $i,i'\in[r]$ and $\mathbf{g}_i \cdot \mathbf{h}_j = 0$ for all $i\in[r]$, $j\in[k]$. Note that for each pair of $j,j'\in[k]$, the value of $\mathbf{h}_{j} \cdot \mathbf{h}_{j'}$ can be either $0$ or $1$. $\mathbf{g}_i$ and $\mathbf{h}_j$ correspond to stabilizer generators and generators of logical Pauli operators of $\mathcal{Q}$, respectively.

    To construct a symplectic basis of $\mathcal{Q}$ with the desired property, we apply Algorithm 1 from Ref. \cite{TN24}, which is also presented below. Algorithm 1 is a modified version of symplectic Gram-Schmidt orthogonalization, a method to construct a symplectic basis from a candidate set of logical Pauli operators, which is tailored for a self-dual CSS code. A more general version of symplectic Gram-Schmidt orthogonalization for any stabilizer code can be found in Refs. \cite{BDH06,Wilde09b}.
    
\begin{algorithm} \cite{TN24} \label{alg:1}
Let $\{\mathbf{h}_j\}$, $j\in[k]$ be generators of coset representatives of $\mathcal{D}^\perp$ in $\mathcal{D}$.
The algorithm starts by letting $\{\mathbf{w}_1,\dots,\mathbf{w}_{k}\} = \{\mathbf{h}_1,\dots,\mathbf{h}_{k}\}$, $s=k$, and $m=1$. At Round $m$, consider $\{\mathbf{w}_1,\dots,\mathbf{w}_s\}$ and find the smallest $p$ such that $\mathbf{w}_p \cdot \mathbf{w}_p = 1$.
\begin{enumerate}
    \item If the smallest $p$ such that $\mathbf{w}_p \cdot \mathbf{w}_p = 1$ can be found, then do the following.
    \begin{enumerate}
        \item Assign $\Bell_m^x:=\mathbf{w}_p$ and $\Bell_m^z:=\mathbf{w}_p$.
        \item For $q=1,\dots,p-1$, assign $\mathbf{w}'_{q}:=\mathbf{w}_q+(\mathbf{w}_q\cdot \mathbf{w}_p)\mathbf{w}_p$. \\
        For $q=p+1,\dots,s$, assign $\mathbf{w}'_{q-1}:=\mathbf{w}_q+(\mathbf{w}_q\cdot \mathbf{w}_p)\mathbf{w}_p$.\\
        (This is to ensure that $\mathbf{w}'_q \cdot \mathbf{w}_p=0$ for all $q=1,\dots,s-1$.)
        \item Assign $s := s-1$, $m:=m+1$, and $\{\mathbf{w}_1,\dots,\mathbf{w}_{s}\}:= \{\mathbf{w}'_1,\dots,\mathbf{w}'_{s}\}$, then continue to the next round.
    \end{enumerate}
    \item If the smallest $p$ such that $\mathbf{w}_p \cdot \mathbf{w}_p = 1$ cannot be found, then do the following.
    \begin{enumerate}
        \item Find the smallest $p$ such that $\mathbf{w}_1 \cdot \mathbf{w}_p = 1$. 
        \item Assign $\Bell_m^x:=\mathbf{w}_1$, $\Bell_m^z:=\mathbf{w}_p$, $\Bell_{m+1}^x:=\mathbf{w}_p$, and $\Bell_{m+1}^z:=\mathbf{w}_1$.
        \item For $q=2,\dots,p-1$, assign $\mathbf{w}'_{q-1}:=\mathbf{w}_q+(\mathbf{w}_q\cdot \mathbf{w}_p)\mathbf{w}_1$. \\
            For $q=p+1,\dots,s$, assign $\mathbf{w}'_{q-2}:=\mathbf{w}_q+(\mathbf{w}_q\cdot \mathbf{w}_1)\mathbf{w}_p+(\mathbf{w}_q\cdot \mathbf{w}_p)\mathbf{w}_1$.\\
            (This is to ensure that $\mathbf{w}'_q \cdot \mathbf{w}_1=\mathbf{w}'_q \cdot \mathbf{w}_p=0$ for all $q=1,\dots,s-2$.)
        \item Assign $s := s-2$, $m:=m+2$, and $\{\mathbf{w}_1,\dots,\mathbf{w}_{s}\}:= \{\mathbf{w}'_1,\dots,\mathbf{w}'_{s}\}$, then continue to the next round.
    \end{enumerate}
\end{enumerate}
After the loop terminates at $s=0$, output $\{\Bell_j^x\}$ and $\{\Bell_j^z\}$ where $j\in[k]$.
\end{algorithm}
Each logical Pauli operator $\bar{X}_j$ ($\bar{Z}_j$) is constructed from each $\Bell_j^x$ ($\Bell_j^z$) where $0$ and $1$ correspond to $I$ and $X$ ($Z$), resulting in a symplectic basis $\{(\bar{X}_j,\bar{Z}_j)\}_{j\in[k]}$. A brief explanation of \cref{alg:1} is as follows: In each round of iteration, there is no $\mathbf{w}_p$ such that $\mathbf{w}_p\cdot \mathbf{w}_q = 0$ for all $q \in [s]$ (otherwise $\mathbf{w}_p$ is in $\mathcal{D}^\perp$), so either Step 1 or Step 2 of \cref{alg:1} is executed. If Step 1 is executed, $\Bell_m^x$ and $\Bell_m^z$ obtained from the algorithm give a hyperbolic pair ($\bar{X}_m$,$\bar{Z}_m$) such that $\mathrm{supp}(\bar{X}_m)=\mathrm{supp}(\bar{Z}_m)$. $\Bell_m^x \cdot \Bell_m^z=1$ implies that $\{\bar{X}_m$,$\bar{Z}_m\}=0$.
If Step 2 is executed, $\Bell_m^x$, $\Bell_m^z$, $\Bell_{m+1}^x$, and $\Bell_{m+1}^z$ obtained from the algorithm give hyperbolic pairs ($\bar{X}_m$,$\bar{Z}_m$) and ($\bar{X}_{m+1}$,$\bar{Z}_{m+1}$) such that $\mathrm{supp}(\bar{X}_m)=\mathrm{supp}(\bar{Z}_{m+1})$ and $\mathrm{supp}(\bar{X}_{m+1})=\mathrm{supp}(\bar{Z}_{m})$. $\Bell_m^x \cdot \Bell_m^z=1$, $\Bell_{m+1}^x \cdot \Bell_{m+1}^z=1$, $\Bell_m^x \cdot \Bell_{m+1}^z=0$, and $\Bell_{m+1}^x \cdot \Bell_m^z=0$ imply that $\{\bar{X}_m$,$\bar{Z}_m\}=0$, $\{\bar{X}_{m+1}$,$\bar{Z}_{m+1}\}=0$, $[\bar{X}_m$,$\bar{Z}_{m+1}]=0$, and $[\bar{X}_{m+1}$,$\bar{Z}_m]=0$. The modifications of $\mathbf{w}_q$ in both steps ensure that the hyperbolic pairs to be constructed afterwards commute with the existing hyperbolic pairs. As a result, a symplectic basis with the desired properties can be obtained. (For more details on the soundness of \cref{alg:1}, please refer to the proof of Theorem 4 of Ref. \cite{TN24}.) 
\end{proof}
A symplectic basis constructed by \cref{alg:1} consists of $u$ hyperbolic pairs $(\bar{X}_j,\bar{Z}_j)$ such that $\mathrm{supp}(\bar{X}_j)=\mathrm{supp}(\bar{Z}_j)$, and $v$ pairs of hyperbolic pairs $\left((\bar{X}_{j'},\bar{Z}_{j'}),(\bar{X}_{j''},\bar{Z}_{j''})\right)$ such that $\mathrm{supp}(\bar{X}_{j'})=\mathrm{supp}(\bar{Z}_{j''})$ and $\mathrm{supp}(\bar{X}_{j''})=\mathrm{supp}(\bar{Z}_{j'})$, where $u+2v = k$. One can illustrate relations between the supporting qubits of each logical Pauli operator by a bipartite graph in \cref{fig:supp_qubits}.

\begin{figure}[htbp]
  \centering
  \begin{subfigure}[b]{0.2\textwidth}
    \centering
    \includegraphics[height=3.2cm, keepaspectratio]{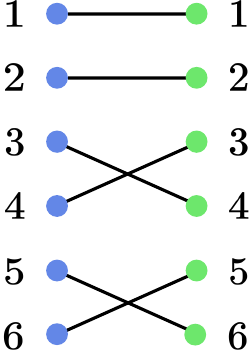}
    \caption{}
  \end{subfigure}
  \begin{subfigure}[b]{0.39\textwidth}
    \centering
    \includegraphics[height=3.2cm, keepaspectratio]{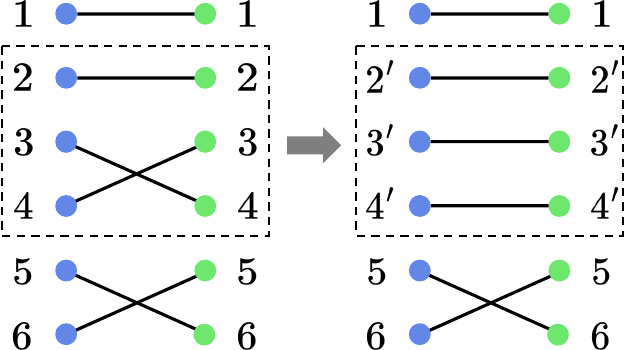}
    \caption{}
  \end{subfigure}
  \begin{subfigure}[b]{0.39\textwidth}
    \centering
    \includegraphics[height=3.2cm, keepaspectratio]{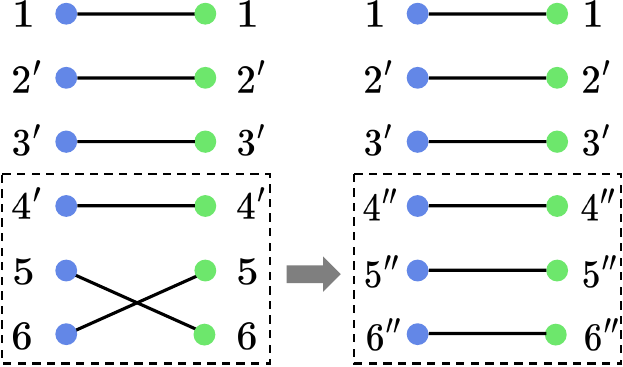}
    \caption{}
  \end{subfigure}
	\caption{(a) For any \codepar{n,k,d} self-dual CSS code, a symplectic basis constructed by \cref{alg:1} has $u$ hyperbolic pairs $(\bar{X}_j,\bar{Z}_j)$ such that $\mathrm{supp}(\bar{X}_j)=\mathrm{supp}(\bar{Z}_j)$, and $v$ pairs of hyperbolic pairs $\left((\bar{X}_{j'},\bar{Z}_{j'}),(\bar{X}_{j''},\bar{Z}_{j''})\right)$ such that $\mathrm{supp}(\bar{X}_{j'})=\mathrm{supp}(\bar{Z}_{j''})$ and $\mathrm{supp}(\bar{X}_{j''})=\mathrm{supp}(\bar{Z}_{j'})$, where $u+2v = k$. The symplectic basis can be illustrated with a bipartite graph in (a). Here, blue (green) vertices represent $\bar{X}_j$ ($\bar{Z}_j$). Logical operators from the same hyperbolic pair are labeled with the same index, and $X$-type and $Z$-type logical operators with exactly the same support are connected by an edge. (b) If $u$ is at least 1, we can apply \cref{lem:basis_change} to the current symplectic basis, resulting in a new symplectic basis represented by a bipartite graph in which $u$ increases by 2 and $v$ decreases by 1. (c) By applying \cref{lem:basis_change} repeatedly, a compatible symplectic basis of the self-dual CSS code can be obtained.}
	\label{fig:supp_qubits}%
\end{figure}

\begin{lemma} \label{lem:basis_change}
    Let $\mathcal{Q}$ be an \codepar{n,k,d} self-dual CSS code with $k \geq 3$. Suppose that there exist three hyperbolic pairs $(\bar{X}_1,\bar{Z}_1)$, $(\bar{X}_2,\bar{Z}_2)$, $(\bar{X}_3,\bar{Z}_3)$ of $\mathcal{Q}$ where $\{\bar{X}_j\} \subsetneq \mathcal{P}_n^x$, $\{\bar{Z}_j\} \subsetneq \mathcal{P}_n^z$ such that $\mathrm{supp}(\bar{X}_1)=\mathrm{supp}(\bar{Z}_1)$, $\mathrm{supp}(\bar{X}_2)=\mathrm{supp}(\bar{Z}_3)$, and $\mathrm{supp}(\bar{X}_3)=\mathrm{supp}(\bar{Z}_2)$. Then, there exist other three hyperbolic pairs $(\bar{X}'_1,\bar{Z}'_1)$, $(\bar{X}'_2,\bar{Z}'_2)$, $(\bar{X}'_3,\bar{Z}'_3)$ where $\{\bar{X}'_j\} \subsetneq \mathcal{P}_n^x$, $\{\bar{Z}'_j\} \subsetneq \mathcal{P}_n^z$ such that $\langle \bar{X}'_j \rangle = \langle \bar{X}_j \rangle$, $\langle \bar{Z}'_j \rangle = \langle \bar{Z}_j \rangle$, and $\mathrm{supp}(\bar{X}'_j)=\mathrm{supp}(\bar{Z}'_j)$ for all $j=1,2,3$.
\end{lemma}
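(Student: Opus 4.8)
The plan is to move to the binary-vector representation of $X$-type and $Z$-type Pauli operators and to recognize the required basis change as an ``orthonormalization'' of a $3$-dimensional binary inner-product space.

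First I would write $\bar{X}_j = X^{\mathbf{x}_j}$, $\bar{Z}_j = Z^{\mathbf{z}_j}$ for indicator vectors $\mathbf{x}_j,\mathbf{z}_j\in\mathbb{Z}_2^n$. The three support hypotheses become $\mathbf{z}_1=\mathbf{x}_1$, $\mathbf{z}_2=\mathbf{x}_3$, $\mathbf{z}_3=\mathbf{x}_2$, so in particular $V:=\mathrm{span}(\mathbf{x}_1,\mathbf{x}_2,\mathbf{x}_3)=\mathrm{span}(\mathbf{z}_1,\mathbf{z}_2,\mathbf{z}_3)$. Using that an $X$-type and a $Z$-type operator commute iff their vectors overlap evenly, the symplectic-basis relations among $(\bar{X}_1,\bar{Z}_1),(\bar{X}_2,\bar{Z}_2),(\bar{X}_3,\bar{Z}_3)$ translate --- after substituting the support identities --- into the Gram matrix of the dot product on $V$ in the basis $(\mathbf{x}_1,\mathbf{x}_2,\mathbf{x}_3)$: $\mathbf{x}_1\cdot\mathbf{x}_1=1$ (since $\bar{X}_1$ anticommutes with $\bar{Z}_1=Z^{\mathbf{x}_1}$), $\mathbf{x}_2\cdot\mathbf{x}_2=\mathbf{x}_3\cdot\mathbf{x}_3=0$, $\mathbf{x}_1\cdot\mathbf{x}_2=\mathbf{x}_1\cdot\mathbf{x}_3=0$, and $\mathbf{x}_2\cdot\mathbf{x}_3=1$ (since $\bar{X}_2$ anticommutes with $\bar{Z}_2=Z^{\mathbf{x}_3}$). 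I would also note that $\mathbf{x}_1,\mathbf{x}_2,\mathbf{x}_3$ are linearly independent over $\mathbb{Z}_2$, since a nontrivial relation would express the identity as a nontrivial product of $\bar{X}_1,\bar{X}_2,\bar{X}_3$, contradicting their independence modulo the stabilizer group.

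The reduction is then: a hyperbolic pair of the desired form is precisely a vector $\mathbf{u}\in V$, via $\bar{X}'=X^{\mathbf{u}}$ and $\bar{Z}'=Z^{\mathbf{u}}$ (equal supports by construction, and $\bar{Z}'\in\langle\bar{Z}_1,\bar{Z}_2,\bar{Z}_3\rangle$ since $\mathbf{u}\in\mathrm{span}(\mathbf{z}_1,\mathbf{z}_2,\mathbf{z}_3)$), and for three such pairs every symplectic-basis relation collapses to $\mathbf{u}_i\cdot\mathbf{u}_j=\delta_{ij}$. So I must produce an orthonormal basis of $(V,\cdot)$ --- which exists because this non-degenerate, non-alternating binary form is equivalent to the standard one in dimension $3$. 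Explicitly I would take $\mathbf{u}_1=\mathbf{x}_1+\mathbf{x}_2$, $\mathbf{u}_2=\mathbf{x}_1+\mathbf{x}_3$, $\mathbf{u}_3=\mathbf{x}_1+\mathbf{x}_2+\mathbf{x}_3$: each has $\mathbf{x}_1$-coefficient $1$, forcing $\mathbf{u}_i\cdot\mathbf{u}_i=1$ (because $\mathrm{span}(\mathbf{x}_2,\mathbf{x}_3)$ is orthogonal to $\mathbf{x}_1$ and totally isotropic), while the cross terms come out to $1+\mathbf{x}_2\cdot\mathbf{x}_3=0$. In operator form, $\bar{X}'_1=\bar{X}_1\bar{X}_2$, $\bar{X}'_2=\bar{X}_1\bar{X}_3$, $\bar{X}'_3=\bar{X}_1\bar{X}_2\bar{X}_3$ and $\bar{Z}'_1=\bar{Z}_1\bar{Z}_3$, $\bar{Z}'_2=\bar{Z}_1\bar{Z}_2$, $\bar{Z}'_3=\bar{Z}_1\bar{Z}_2\bar{Z}_3$.

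To finish, I would verify that the change-of-basis matrix sending $(\mathbf{x}_1,\mathbf{x}_2,\mathbf{x}_3)$ to $(\mathbf{u}_1,\mathbf{u}_2,\mathbf{u}_3)$ is invertible over $\mathbb{Z}_2$; this yields $\langle\bar{X}'_1,\bar{X}'_2,\bar{X}'_3\rangle=\langle\bar{X}_1,\bar{X}_2,\bar{X}_3\rangle$ and likewise for the $Z$'s, so the new pairs are nontrivial $X$-type/$Z$-type logical operators, and being products of $\bar{X}_1,\bar{X}_2,\bar{X}_3$ (resp.\ $\bar{Z}_1,\bar{Z}_2,\bar{Z}_3$) they commute with all stabilizers and with every other hyperbolic pair $(\bar{X}_m,\bar{Z}_m)$, $m\ge 4$, of the ambient symplectic basis --- hence replacing the first three pairs by the primed ones is a legitimate symplectic-basis change. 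I expect the only real (and mild) obstacle to be the orthonormalization step: obvious guesses such as $\mathbf{u}_1=\mathbf{x}_1$, $\mathbf{u}_2=\mathbf{x}_1+\mathbf{x}_2$ fail because $\mathbf{u}_1\cdot\mathbf{u}_2=1$, and since every vector in $\mathrm{span}(\mathbf{x}_2,\mathbf{x}_3)$ has even weight one is genuinely forced to spread $\mathbf{x}_1$ over all three new generators, after which the symmetric choice above is essentially the only option.
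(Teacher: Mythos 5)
Your proposal is correct and arrives at exactly the paper's construction: your primed operators are the same products $\bar{X}_1\bar{X}_2,\ \bar{X}_1\bar{X}_3,\ \bar{X}_1\bar{X}_2\bar{X}_3$ (with the matching $Z$'s) that the paper uses, merely with the indices permuted. The only difference is presentational — you verify the commutation and support conditions systematically via the $\mathbb{Z}_2$ Gram matrix of $(\mathbf{x}_1,\mathbf{x}_2,\mathbf{x}_3)$, which the paper leaves as a direct check.
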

\begin{proof}
    Suppose that the hyperbolic pairs $(\bar{X}_1,\bar{Z}_1)$, $(\bar{X}_2,\bar{Z}_2)$, and $(\bar{X}_3,\bar{Z}_3)$ of $\mathcal{Q}$ such that $\mathrm{supp}(\bar{X}_1)=\mathrm{supp}(\bar{Z}_1)$, $\mathrm{supp}(\bar{X}_2)=\mathrm{supp}(\bar{Z}_3)$, and $\mathrm{supp}(\bar{X}_3)=\mathrm{supp}(\bar{Z}_2)$ exist. Let $\bar{X}'_1 = \bar{X}_1\bar{X}_2\bar{X}_3$, $\bar{Z}'_1 = \bar{Z}_1\bar{Z}_2\bar{Z}_3$, $\bar{X}'_2 = \bar{X}_1\bar{X}_2$, $\bar{Z}'_2 = \bar{Z}_1\bar{Z}_3$, $\bar{X}'_3 = \bar{X}_1\bar{X}_3$, and $\bar{Z}'_3 = \bar{Z}_1\bar{Z}_2$. We find that $\langle \bar{X}'_j \rangle$ and $\langle \bar{X}_j \rangle$ ($\langle \bar{Z}'_j \rangle$ and $\langle \bar{Z}_j \rangle$) generate the same group, and $(\bar{X}'_1,\bar{Z}'_1)$, $(\bar{X}'_2,\bar{Z}'_2)$, and $(\bar{X}'_3,\bar{Z}'_3)$ satisfy the commutation and anticommutation relations of logical Pauli operators, thus they are hyperbolic pairs of $\mathcal{Q}$. We also have that $\mathrm{supp}(\bar{X}'_j)=\mathrm{supp}(\bar{Z}'_j)$ for all $j=1,2,3$.
\end{proof}

\begin{lemma} \label{lem:stb_preserving_S}
    For any self-dual CSS code $\mathcal{Q}$, there exists $(c_1,\dots,c_n)\in\{-1,1\}^n$ such that $\bigotimes_{i=1}^{n} S^{c_i}$ preserves the stabilizer group $S$ of $\mathcal{Q}$ under conjugation.
\end{lemma}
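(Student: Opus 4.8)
The plan is to make the conjugation action of $\bigotimes_{i=1}^{n} S^{c_i}$ on $S$ completely explicit and then reduce the existence of a suitable sign vector $(c_1,\dots,c_n)$ to the solvability of an $\mathbb{F}_2$-linear system. I would start from the one-qubit identities $S^{c}ZS^{-c}=Z$ and $S^{c}XS^{-c}=i^{c}XZ$ for $c\in\{-1,1\}$ (reading $i^{-1}:=-i$), and use that because $\mathcal{Q}$ is self-dual, its stabilizer group is generated by pairs $g_i^x,g_i^z$, $i\in[r]$, coming from a basis $\mathbf{g}_1,\dots,\mathbf{g}_r$ of $\mathcal{D}^\perp$, with $\mathrm{supp}(g_i^x)=\mathrm{supp}(g_i^z)=\mathrm{supp}(\mathbf{g}_i)$; write $w_i$ for this common weight. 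Since $g_i^x$ and $g_i^z$ commute, $w_i$ is even. As conjugation is a group homomorphism, it suffices to check that each generator is mapped into $S$.

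The $Z$-type generators are fixed. For $g_i^x$, the qubit-wise action replaces each $X_j$ with $j\in\mathrm{supp}(\mathbf{g}_i)$ by $i^{c_j}X_jZ_j$, and since the one-qubit factors on distinct qubits commute with no extra sign, conjugation sends $g_i^x$ to $\big(\prod_{j\in\mathrm{supp}(\mathbf{g}_i)}i^{c_j}\big)\,g_i^x g_i^z$. The operator $g_i^x g_i^z$ already lies in $S$, so the only obstruction is the scalar. Introducing $d_j\in\{0,1\}$ by $c_j=1-2d_j$ (so $d_j=1$ iff $c_j=-1$) and using that $w_i$ is even, this scalar becomes $i^{w_i}(-1)^{\mathbf{d}\cdot\mathbf{g}_i}=(-1)^{w_i/2+\mathbf{d}\cdot\mathbf{g}_i}$, where $\mathbf{d}=(d_1,\dots,d_n)$ and $\cdot$ is the $\mathbb{F}_2$ dot product. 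Hence $\bigotimes_{i=1}^{n}S^{c_i}$ preserves $S$ under conjugation if and only if $\mathbf{d}$ solves
\[
\mathbf{d}\cdot\mathbf{g}_i \equiv w_i/2 \pmod 2 \qquad\text{for all } i\in[r].
\]

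It then remains to exhibit a solution $\mathbf{d}$. The coefficient matrix of this system has the vectors $\mathbf{g}_i$ as its rows; since they are linearly independent over $\mathbb{F}_2$ (a basis of $\mathcal{D}^\perp$), the linear map $\mathbf{d}\mapsto(\mathbf{d}\cdot\mathbf{g}_1,\dots,\mathbf{d}\cdot\mathbf{g}_r)$ is surjective onto $\mathbb{F}_2^r$, so the prescribed right-hand side is attained by some $\mathbf{d}$; taking $c_i=1-2d_i$ for such a $\mathbf{d}$ finishes the proof (and the same $\mathbf{d}$ handles the inverse direction, since $\pm1$ are self-inverse, giving $\bigotimes_i S^{c_i} S \bigotimes_i S^{-c_i}=S$). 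The only delicate point I anticipate is the phase bookkeeping in the conjugation step — correctly tracking the factor $i^{w_i}$ per generator, reducing it to $(-1)^{w_i/2}$, and noticing that evenness of $w_i$ (equivalently, mutual commutation of $g_i^x$ and $g_i^z$) is precisely where self-duality enters; once the constraint is in the displayed form, consistency is immediate because the defining functionals are independent.
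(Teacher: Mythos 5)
Your proof is correct and follows essentially the same route as the paper's: both reduce the problem to the $\mathbb{F}_2$-linear system $\mathbf{d}\cdot\mathbf{g}_i \equiv w_i/2 \pmod 2$ (the paper phrases the right-hand side as $0$ or $1$ according to whether $w_i\equiv 0$ or $2 \pmod 4$, which is the same condition) and solve it using linear independence of the rows of the check matrix. Your phase bookkeeping via $i^{c_j}$ aggregated to $i^{w_i}(-1)^{\mathbf{d}\cdot\mathbf{g}_i}$ is just a repackaging of the paper's count of $S$ versus $S^\dagger$ factors, $(i)^{w_i-\beta_i}(-i)^{\beta_i}$.
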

\begin{proof}
Let $\mathbf{g}_i \in \mathbb{Z}_2^n$ be the binary vector representing both stabilizer generators $g_i^x$ and $g_i^z$ (which is possible since $\mathcal{Q}$ is a self-dual CSS code) where $0$ and $1$ represent $I$ and $X$ or $Z$, $i\in[r]$ where $r=(n-k)/2$ is the number of stabilizer generators of one type, and let $w_i=\mathrm{wt}(\mathbf{g}_i)$ be the Hamming weight of $\mathbf{g}_i$. Since $[g_i^x,g_i^z]=0$ for any $i$, $w_i$ is an even number of the form $4m$ or $4m+2$ for some non-negative integer $m$. Also, let $\mathbf{v} \in \mathbb{Z}_2^n$ be a binary vector such that for all $i$, the number of bits that are 1 for both $\mathbf{v}$ and $\mathbf{g}_i$, denoted by $\beta_i$, is even if $w_i=4m$ and is odd if $w_i=4m+2$ (or equivalently, $\mathbf{v}\cdot\mathbf{g}_i=0$ if $w_i=4m$ and $\mathbf{v}\cdot\mathbf{g}_i=1$ if $w_i=4m+2$). We know that $\mathbf{v}$ exists since $\langle \mathbf{g}_i \rangle$ is a subgroup of $\mathbb{Z}_2^n$ and $\mathbf{v}$ is a coset representative of $\langle \mathbf{g}_i \rangle$ in $\mathbb{Z}_2^n$ (we can think of $\mathbf{v}$ as a binary vector representing a $Z$-type Pauli operator $E_z$ that commutes with $g_i^x$ if $w_i=4m$ and anticommutes with $g_i^x$ if $w_i=4m+2$). Such $\mathbf{v}$ can be found by solving the equation $H\mathbf{v} = \boldsymbol{\delta}$, where $H$ is the check matrix whose rows are $\mathbf{g}_i$'s, and the $i$-th bit of $\boldsymbol{\delta}$ is the desired value of $\mathbf{v}\cdot\mathbf{g}_i$ (either 0 or 1). The equation can always be solved by Gaussian elimination with free variables since $H$ is full row rank (all rows of $H$ are linearly independent), and the number of rows of $H$ is less than the number of columns of $H$. Note that for any self-dual CSS code $\mathcal{Q}$, there are many possible choices of $\mathbf{v}$ that satisfies the equation. These choices of vectors are related by additions of $\mathbf{g}_i$ or $\mathbf{h}_j$ (or both), where $\mathbf{g}_i$ and $\mathbf{h}_j$ are binary vectors that correspond to stabilizer generators and generators of logical Pauli operators of $\mathcal{Q}$, respectively.

Let $W = \bigotimes_{i=1}^{n} S^{c_i}$ where $c_i = (-1)^{v_i}$, $v_i$ is the $i$-th bit of $\mathbf{v}$. For any $g_i^z$, $Wg_i^zW^\dagger=g_i^z$. For any $g_i^x$, $Wg_i^xW^\dagger = (i)^{p_i}g_i^xg_i^z$ for some integer $p_i$. The numbers of $S$ and $S^\dagger$ acting on $g_i^x$ are $w_i-\beta_i$ and $\beta_i$, respectively, so $(i)^{p_i} = (i)^{w_i-\beta_i}(-i)^{\beta_i}$ or $p_i=w_i-2\beta_i$. For $\mathbf{g}_i$ of weight $w_i=4m$ which gives even $\beta_i$, $\beta_i=2s$ for some integer $s$ and thus $p_i=4m-2(2s)=4(m-s)$ is divisible by 4, leading to $(i)^{p_i}=1$. For $\mathbf{g}_i$ of weight $w_i=4m+2$ which gives odd $\beta_i$, $\beta_i=2s+1$ for some integer $s$ and thus $p_i=4m+2-2(2s+1) = 4(m-s)$ is also divisible by 4. Therefore, $W$ preserves the stabilizer group as $Wg_i^xW^\dagger = g_i^xg_i^z$ for any $g_i^x$.
\end{proof}

The construction of $W = \bigotimes_{i=1}^{n} S^{c_i}$ in \cref{lem:stb_preserving_S} also covers two well-known examples of self-dual CSS codes. The first example is when the code is doubly even, i.e., the weight of any stabilizer generator $w_i$ is divisible by 4. In this case, we can pick $\mathbf{v}=\mathbf{0}$ which leads to $W = S^{\otimes n}$. This is consistent with the well-known fact that for any doubly-even self-dual CSS code, $S^{\otimes n}$ preserves the stabilizer group. 

The second example is when the code is a hexagonal color code \cite{BM06}. For this code, one can find a combination of $S$ and $S^\dagger$ gates that preserves the stabilizer group by describing the physical qubits of the code by vertices of a bipartite graph, then applying $S$ gates on qubits in one of the two sets and applying $S^\dagger$ gates on qubits in the other set \cite{KB15}. If we consider a $Z$-type error acting on all qubits that $S^\dagger$ gates are applied to, we will find that for this error, a syndrome bit evaluated by any $X$-type stabilizer generator of weight 4 is even and a syndrome bit evaluated by any $X$-type stabilizer generator of weight 6 is odd. Therefore, a binary vector $\mathbf{v}$ representing this $Z$-type error satisfies $\mathbf{v}\cdot\mathbf{g}_i=0$ if $w_i=4m$ and $\mathbf{v}\cdot\mathbf{g}_i=1$ if $w_i=4m+2$. In this case, the operator $W$ constructed by \cref{lem:stb_preserving_S} from this choice of $\mathbf{v}$ and the operator obtained from the construction in Ref. \cite{KB15} are the same operator.

\begin{lemma} \label{lem:mixed_type}
For any self-dual CSS code $\mathcal{Q}$ with $k \geq 1$, there exists a hyperbolic pair $(P,Q)$ where $P,Q \in \mathcal{P}_n$ such that $\bigotimes_{i=1}^{n} H_i$ transforms $P$ to $Q$ and $Q$ to $P$ if and only if there exists a hyperbolic pair $(\bar{L}^x,\bar{L}^z)$ where $\bar{L}^x\in \mathcal{P}_n^x$, $\bar{L}^z\in \mathcal{P}_n^z$ such that $\bigotimes_{i=1}^{n} H_i$ transforms $\bar{L}^x$ to $\bar{L}^z$ and $\bar{L}^z$ to $\bar{L}^x$.
\end{lemma}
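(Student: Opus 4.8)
The plan is to translate both conditions into the binary symplectic representation of Pauli operators. The ``if'' direction is immediate: if a hyperbolic pair $(\bar{L}^x,\bar{L}^z)$ with $\bar{L}^x\in\mathcal{P}_n^x$, $\bar{L}^z\in\mathcal{P}_n^z$ is swapped by $\bigotimes_{i=1}^{n}H_i$, then since $\mathcal{P}_n^x,\mathcal{P}_n^z\subseteq\mathcal{P}_n$ the same pair, taken as $(P,Q)=(\bar{L}^x,\bar{L}^z)$, witnesses the left-hand statement. So the work is in the ``only if'' direction.

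For that direction, realize $\mathcal{Q}$ from a check matrix whose rows $\mathbf{g}_i$ generate $\mathcal{D}^\perp$, with stabilizer generators $X^{\mathbf{g}_i}$ and $Z^{\mathbf{g}_i}$, and recall two standard facts: writing a Pauli as $(\mathbf{a}\,|\,\mathbf{b})$ (meaning $\propto\prod_i X_i^{a_i}Z_i^{b_i}$), it lies in $C(S)$ iff $\mathbf{a},\mathbf{b}\in(\mathcal{D}^\perp)^\perp=\mathcal{D}$, and two Paulis anticommute iff their symplectic inner product equals $1$. Now suppose $(P,Q)$ is a hyperbolic pair with $P,Q\in\mathcal{P}_n$ and $\bigotimes_{i=1}^{n}H_i$ swapping $P$ and $Q$; write $P=(\mathbf{a}\,|\,\mathbf{b})$. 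Conjugation by $\bigotimes_{i=1}^{n}H_i$ exchanges $X_i\leftrightarrow Z_i$ on every qubit, so $Q$ has representation $(\mathbf{b}\,|\,\mathbf{a})$ up to an irrelevant phase. Then $P\in C(S)$ forces $\mathbf{a},\mathbf{b}\in\mathcal{D}$, and $\{P,Q\}=0$ says the symplectic inner product of $(\mathbf{a}\,|\,\mathbf{b})$ and $(\mathbf{b}\,|\,\mathbf{a})$ equals $1$; that inner product is $\mathbf{a}\cdot\mathbf{a}+\mathbf{b}\cdot\mathbf{b}\equiv\mathrm{wt}(\mathbf{a})+\mathrm{wt}(\mathbf{b})\pmod 2$, so $\mathrm{wt}(\mathbf{a})+\mathrm{wt}(\mathbf{b})$ is odd and hence exactly one of $\mathbf{a},\mathbf{b}$ has odd weight. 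Denote it $\mathbf{c}$, so $\mathbf{c}\in\mathcal{D}$ and $\mathrm{wt}(\mathbf{c})$ is odd.

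It remains to build the pure-type pair: set $\bar{L}^x:=X^{\mathbf{c}}\in\mathcal{P}_n^x$ and $\bar{L}^z:=Z^{\mathbf{c}}\in\mathcal{P}_n^z$. Both lie in $C(S)$ since $\mathbf{c}\in\mathcal{D}$; they anticommute since $\mathbf{c}\cdot\mathbf{c}=1$; and since $\mathcal{D}^\perp$ is self-orthogonal ($\mathbf{g}_i\cdot\mathbf{g}_j=0$ for all $i,j$) every vector of $\mathcal{D}^\perp$ has even weight, so $\mathbf{c}\notin\mathcal{D}^\perp$ and therefore $\bar{L}^x,\bar{L}^z\notin S$. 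Thus $(\bar{L}^x,\bar{L}^z)$ is a genuine (nontrivial, anticommuting) pair of logical Pauli operators, i.e.\ a hyperbolic pair of $\mathcal{Q}$ (it completes to a symplectic basis by symplectic Gram-Schmidt orthogonalization), and $\bigotimes_{i=1}^{n}H_i$ conjugates $X^{\mathbf{c}}\mapsto Z^{\mathbf{c}}$ and $Z^{\mathbf{c}}\mapsto X^{\mathbf{c}}$, so it swaps $\bar{L}^x$ and $\bar{L}^z$, as required. I do not foresee a serious obstacle: the one point worth a remark is the precise reading of ``$\bigotimes_{i=1}^{n}H_i$ transforms $P$ to $Q$'' — as an operator identity up to phase, or only modulo the stabilizer group — but both readings give the same conclusion, since replacing $\mathbf{a}$ or $\mathbf{b}$ by a coset representative differing by some $\mathbf{m}\in\mathcal{D}^\perp$ alters neither membership in $\mathcal{D}$ nor (because $\mathbf{a},\mathbf{b}$ are orthogonal to $\mathcal{D}^\perp$) the value of the relevant inner product. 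The essence is simply that a mixed-type pair swapped by transversal Hadamard forces one of its two binary halves to be an odd-weight codeword of $\mathcal{D}$, which instantly produces the desired pure-type pair.
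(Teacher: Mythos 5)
Your proof is correct and follows essentially the same route as the paper's: the forward direction splits the mixed-type pair into its $X$- and $Z$-parts (your $(\mathbf{a}\,|\,\mathbf{b})$ bookkeeping is just the binary-symplectic phrasing of the paper's decomposition $P=A^xB^z$, $Q=C^xD^z$), uses the anticommutation of $P$ and $Q$ to force exactly one of the two supports to have odd weight, and takes the corresponding pure-type pair $(X^{\mathbf{c}},Z^{\mathbf{c}})$; the reverse direction is the same trivial containment. Your explicit check that $\mathbf{c}\notin\mathcal{D}^\perp$ (via self-orthogonality of $\mathcal{D}^\perp$) is a small extra verification the paper leaves implicit, but the argument is the same.
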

\begin{proof}
($\Rightarrow$) Suppose that a hyperbolic pair $(P,Q)$ where $P,Q \in \mathcal{P}_n$ such that $\bigotimes_{i=1}^{n} H_i$ transforms $P$ to $Q$ and $Q$ to $P$ exists. $P$ and $Q$ can be written as $P=A^xB^z$ and $Q=C^xD^z$ (up to some phase) where $A^x,C^x \in \mathcal{P}_n^x$ and $B^z,D^z \in \mathcal{P}_n^z$. As $P$ is a non-trivial logical Pauli operator, $A^x$ or $B^z$ (or both) is a non-trivial logical Pauli operator of a single-type ($X$ or $Z$). Similarly for $Q$, $C^x$ or $D^z$ (or both) is a non-trivial logical Pauli operator of a single type. The assumption that $\bigotimes_{i=1}^{n} H_i$ transforms $P$ to $Q$ and $Q$ to $P$ implies that $\mathrm{supp}(A^x)=\mathrm{supp}(D^z)$ and $\mathrm{supp}(B^z)=\mathrm{supp}(C^x)$. In other words, $\bigotimes_{i=1}^{n} H_i$ transforms $A^x$ to $D^z$ and $B^z$ to $C^x$. 
The anticommutation relation $\{P,Q\}=0$ implies that either $\{A^x,D^z\}=0$ or $\{B^z,C^x\}=0$ (but not both). In case that $\{A^x,D^z\}=0$, both $A^x$ and $D^z$ are non-trivial logical Pauli operators, thus the hyperbolic pair $(\bar{L}^x,\bar{L}^z) = (A^x,D^z)$ has the desired property. Similarly, in case that $\{B^z,C^x\}=0$, the hyperbolic pair $(\bar{L}^x,\bar{L}^z) = (B^z,C^x)$ has the desired property.

($\Leftarrow$) Suppose that a hyperbolic pair $(\bar{L}^x,\bar{L}^z)$ where $\bar{L}^x\in \mathcal{P}_n^x$, $\bar{L}^z\in \mathcal{P}_n^z$ such that $\bigotimes_{i=1}^{n} H_i$ transforms $\bar{L}^x$ to $\bar{L}^z$ and $\bar{L}^z$ to $\bar{L}^x$ exists. As $\mathcal{P}_n^x,\mathcal{P}_n^z \subsetneq \mathcal{P}_n$, we find that the hyperbolic pair $(P,Q) = (\bar{L}^x,\bar{L}^z)$ has the desired property.
\end{proof}

\begin{lemma} \label{lem:vdotv=1}
Let $\mathcal{D}$ be a classical binary linear code satisfying $\mathcal{D}^\perp \subsetneq \mathcal{D}$, and suppose that $\mathcal{D}^\perp = \langle \mathbf{g}_i \rangle$ and $\mathcal{D} = \langle \mathbf{g}_i,\mathbf{h}_j \rangle$, where $i \in [r]$, $j \in [k]$, $\mathbf{g}_i,\mathbf{h}_j \in \mathbb{Z}_2^n$, and $r=(n-k)/2$. There exists $\mathbf{v} \in \mathcal{D}$ such that $\mathbf{v} \cdot \mathbf{v} = 1$ if and only if there exists $j \in [k]$ such that $\mathbf{h}_j \cdot \mathbf{h}_j = 1$
\end{lemma}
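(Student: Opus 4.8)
The plan is to prove both directions of the equivalence by exploiting the orthogonality structure of $\mathcal{D}$ established in the proof of \cref{lem:alg}: namely that $\mathbf{g}_i \cdot \mathbf{g}_{i'} = 0$ for all $i,i' \in [r]$ and $\mathbf{g}_i \cdot \mathbf{h}_j = 0$ for all $i \in [r]$, $j \in [k]$, since $\mathbf{g}_i \in \mathcal{D}^\perp$ and every element of $\mathcal{D}$ (including each $\mathbf{h}_j$) is orthogonal to all of $\mathcal{D}^\perp$. The backward direction ($\Leftarrow$) is immediate: if $\mathbf{h}_j \cdot \mathbf{h}_j = 1$ for some $j \in [k]$, then $\mathbf{v} := \mathbf{h}_j \in \mathcal{D}$ satisfies $\mathbf{v} \cdot \mathbf{v} = 1$.

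For the forward direction ($\Rightarrow$), suppose $\mathbf{v} \in \mathcal{D}$ with $\mathbf{v} \cdot \mathbf{v} = 1$. Since $\mathcal{D} = \langle \mathbf{g}_i, \mathbf{h}_j \rangle$, we can write $\mathbf{v} = \sum_{i \in I} \mathbf{g}_i + \sum_{j \in J} \mathbf{h}_j$ for some $I \subseteq [r]$, $J \subseteq [k]$ (all sums mod $2$). The key computation is to expand $\mathbf{v} \cdot \mathbf{v}$ using bilinearity of the mod-$2$ dot product. Cross terms of the form $\mathbf{g}_i \cdot \mathbf{g}_{i'}$ and $\mathbf{g}_i \cdot \mathbf{h}_j$ all vanish by the orthogonality noted above, and diagonal terms $\mathbf{g}_i \cdot \mathbf{g}_i$ also vanish because $\mathbf{g}_i \cdot \mathbf{g}_i = 0$ (each $\mathbf{g}_i \in \mathcal{D}^\perp \subseteq \mathcal{D}$, so it is orthogonal to itself as an element of $\mathcal{D}^\perp$, or equivalently $\mathbf{g}_i \cdot \mathbf{g}_i = \mathrm{wt}(\mathbf{g}_i) \bmod 2 = 0$ since $\mathbf{g}_i$ corresponds to a pair of commuting $X$- and $Z$-type stabilizer generators and hence has even weight). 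What survives is
\begin{equation}
    1 = \mathbf{v} \cdot \mathbf{v} = \sum_{j,j' \in J} \mathbf{h}_j \cdot \mathbf{h}_{j'} = \sum_{j \in J} \mathbf{h}_j \cdot \mathbf{h}_j + 2\!\!\sum_{\{j,j'\} \subseteq J,\, j \neq j'}\!\! \mathbf{h}_j \cdot \mathbf{h}_{j'} = \sum_{j \in J} \mathbf{h}_j \cdot \mathbf{h}_j \pmod 2,
\end{equation}
where the off-diagonal pairs contribute an even multiple and vanish mod $2$. Hence $\sum_{j \in J} (\mathbf{h}_j \cdot \mathbf{h}_j) = 1$, which forces at least one $j \in J \subseteq [k]$ with $\mathbf{h}_j \cdot \mathbf{h}_j = 1$, completing the proof.

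I do not expect any genuine obstacle here; the only point requiring a little care is justifying $\mathbf{g}_i \cdot \mathbf{g}_i = 0$ and collecting which cross terms vanish, but both follow directly from $\mathcal{D}^\perp \subseteq \mathcal{D}$ together with the self-dual CSS structure (even-weight stabilizer generators), facts already in hand from the proof of \cref{lem:alg} and the discussion preceding \cref{lem:stb_preserving_S}. The lemma then feeds into \cref{thm:main1} by converting the abstract existence statement (2)/(3) about same-support hyperbolic pairs into the concrete, checkable condition (1) on the generators $\mathbf{h}_j$.
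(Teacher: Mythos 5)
Your proof is correct and follows essentially the same route as the paper's: both expand $\mathbf{v}\cdot\mathbf{v}$ over the generating set $\{\mathbf{g}_i,\mathbf{h}_j\}$, kill the $\mathbf{g}$-terms using $\mathcal{D}^\perp\subseteq\mathcal{D}$, and observe that the off-diagonal $\mathbf{h}_j\cdot\mathbf{h}_{j'}$ contributions cancel mod $2$, leaving $\mathbf{v}\cdot\mathbf{v}=\sum_{j\in J}\mathbf{h}_j\cdot\mathbf{h}_j$. The converse via $\mathbf{v}=\mathbf{h}_j$ is also identical to the paper's.
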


\begin{proof}
    Any codeword $\mathbf{v} \in \mathcal{D}$ can be written as $\mathbf{v} = \sum_{i=1}^r a_i\mathbf{g}_i + \sum_{j=1}^k b_j\mathbf{h}_j$ for some coefficients $a_i,b_j \in \mathbb{Z}_2$ (where the sum is bitwise modulo 2). Note that $\mathbf{g}_i$ is in $\mathcal{D}^\perp$ and $\mathbf{g}_i,\mathbf{h}_j$ are in $\mathcal{D}$, so we have that $\mathbf{g}_i \cdot \mathbf{g}_{i'} = \mathbf{g}_i \cdot \mathbf{h}_j = 0$ for any $i,i' \in [r]$, $j \in [k]$. Thus,
\begin{align}
    \mathbf{v} \cdot \mathbf{v} &= \left(\sum_{i=1}^r a_i\mathbf{g}_i + \sum_{j=1}^k b_j\mathbf{h}_j\right) \cdot \left(\sum_{i'=1}^r a_{i'}\mathbf{g}_{i'} + \sum_{j'=1}^k b_{j'}\mathbf{h}_{j'}\right) \nonumber \\
    &= \sum_{j=1}^k b_j\mathbf{h}_j \cdot \sum_{j'=1}^k b_{j'}\mathbf{h}_{j'} \nonumber \\
    &= \left(\sum_{j=1}^k (b_j)^2 (\mathbf{h}_j \cdot \mathbf{h}_j)\right) + \frac{1}{2}\left(\sum_{\substack{j,j'=1\\j \neq j'}}^k b_jb_{j'} (\mathbf{h}_j \cdot \mathbf{h}_{j'}) + b_{j'}b_j (\mathbf{h}_{j'} \cdot \mathbf{h}_j)\right) \nonumber \\
    &= \sum_{j=1}^k (b_j)^2 (\mathbf{h}_j \cdot \mathbf{h}_j). \label{eq:vdotv}
\end{align}
If there exists $\mathbf{v} \in \mathcal{D}$ such that $\mathbf{v} \cdot \mathbf{v} = 1$, by \cref{eq:vdotv}, there must exist $j \in [k]$ such that $\mathbf{h}_j \cdot \mathbf{h}_j = 1$. On the other hand, if there exists $j \in [k]$ such that $\mathbf{h}_j \cdot \mathbf{h}_j = 1$, then $\mathbf{v}=\mathbf{h}_j$ is a codeword in $\mathcal{D}$ that satisfies $\mathbf{v} \cdot \mathbf{v} = 1$.
\end{proof}

Now we are ready to prove the first main theorem of this work.

\begingroup 
\renewcommand\proofname{Proof of \cref{thm:main1}}
\begin{proof}
    $(1 \Rightarrow 2)$ Suppose that there exists $j \in [k]$ such that $\mathbf{h}_j \cdot \mathbf{h}_j = 1$. We can construct a logical Pauli operator $\bar{L}^x \in \mathcal{P}_n^x$ ($\bar{L}^z \in \mathcal{P}_n^z$) from $\mathbf{h}_j$, where $0$ and $1$ correspond to $I$ and $X$ ($Z$). $\bar{L}^x$ and $\bar{L}^z$ have the same support, and since $\mathbf{h}_j \cdot \mathbf{h}_j = 1$, $\bar{L}^x$ and $\bar{L}^z$ anticommute. Thus, $(\bar{L}^x,\bar{L}^z)$ is a hyperbolic pair with the desired property.

    $(2 \Rightarrow 3)$ For any \codepar{n,k,d} self-dual CSS code $\mathcal{Q}$ with $k \geq 1$, we can apply \cref{lem:alg} to construct the first symplectic basis $\{(\bar{X}_j,\bar{Z}_j)\}_{j\in[k]}$. Assume that 
    there exists at least one hyperbolic pair $(\bar{L}^x,\bar{L}^z)$ of $\mathcal{Q}$ where $\bar{L}^x \in \mathcal{P}_n^x$, $\bar{L}^z \in \mathcal{P}_n^z$ such that $\mathrm{supp}(\bar{L}^x)=\mathrm{supp}(\bar{L}^z)$. Without loss of generality, we can choose $\mathbf{h}_1$ in the input $\{\mathbf{h}_1,\dots,\mathbf{h}_{k}\}$ of \cref{alg:1} to be a binary vector representing both $\bar{L}^x$ and $\bar{L}^z$. For any $k \geq 1$, $(\bar{X}_1,\bar{Z}_1)$ from \cref{alg:1} is exactly $(\bar{L}^x,\bar{L}^z)$. In case that $k=1$ or $2$, the symplectic basis obtained from \cref{alg:1} is a symplectic basis such that $\mathrm{supp}(\bar{X}_j)=\mathrm{supp}(\bar{Z}_j)$ for all $j$, so no further process is required. In case that $k \geq 3$, the symplectic basis obtained from \cref{alg:1} may not be a desired symplectic basis, but it can be represented by a bipartite graph in \cref{fig:supp_qubits} with $u \geq 1$ since $\mathrm{supp}(\bar{X}_1)=\mathrm{supp}(\bar{Z}_1)$. We can apply \cref{lem:basis_change} to one hyperbolic pair $(\bar{X}_a,\bar{Z}_a)$ such that $\mathrm{supp}(\bar{X}_a)=\mathrm{supp}(\bar{Z}_a)$ and one pair of hyperbolic pairs $\left((\bar{X}_{b},\bar{Z}_{b}),(\bar{X}_{c},\bar{Z}_{c})\right)$ such that $\mathrm{supp}(\bar{X}_b)=\mathrm{supp}(\bar{Z}_c)$ and $\mathrm{supp}(\bar{X}_c)=\mathrm{supp}(\bar{Z}_b)$ to construct three hyperbolic pairs $\left((\bar{X}''_{a},\bar{Z}''_{a}),(\bar{X}''_{b},\bar{Z}''_{b}),(\bar{X}''_{c},\bar{Z}''_{c})\right)$ such that $\mathrm{supp}(\bar{X}''_j)=\mathrm{supp}(\bar{Z}''_j)$ for all $j\in\{a,b,c\}\subsetneq [k]$. Each application of \cref{lem:basis_change} leads to a new bipartite graph in which $u$ increases by 2 and $v$ decreases by 1. By applying \cref{lem:basis_change} repeatedly, a symplectic basis $\{(\bar{X}'_j,\bar{Z}'_j)\}_{j\in[k]}$ where $\{\bar{X}'_j\} \subsetneq \mathcal{P}_n^x$, $\{\bar{Z}'_j\} \subsetneq \mathcal{P}_n^z$ such that $\mathrm{supp}(\bar{X}'_j)=\mathrm{supp}(\bar{Z}'_j)$ for all $j \in [k]$ can be obtained.
    
    $(3 \Rightarrow 4)$
    Assume that there exists a symplectic basis $\{(\bar{X}_j,\bar{Z}_j)\}_{j\in[k]}$ of $\mathcal{Q}$ where $\{\bar{X}_j\} \subsetneq \mathcal{P}_n^x$, $\{\bar{Z}_j\} \subsetneq \mathcal{P}_n^z$ such that $\mathrm{supp}(\bar{X}_j)=\mathrm{supp}(\bar{Z}_j)$ for all $j \in [k]$. $\bigotimes_{i=1}^{n} H_i$ preserves the stabilizer group and transforms $\bar{X}_j$ to $\bar{Z}_j$ ($\bar{Z}_j$ to $\bar{X}_j$) for any $j$. Thus, $\bigotimes_{i=1}^{n} H_i=\bigotimes_{j=1}^{k} \bar{H}_j$. Next, let us consider a phase-type gate. From \cref{lem:stb_preserving_S}, there exists an operator $W = \bigotimes_{i=1}^{n} S^{c_i}$ with $(c_1,\dots,c_n)\in\{-1,1\}^n$ that preserves the stabilizer group. Let $c_i = (-1)^{v_i}$ and $\mathbf{v}$ be defined as in the proof of \cref{lem:stb_preserving_S}. Also, for any hyperbolic pair $(\bar{X}_j,\bar{Z}_j)$, let $\Bell_j \in \mathbb{Z}_2^n$ be a binary vector representing both $\bar{X}_j$ and $\bar{Z}_j$ (which is possible since $\mathrm{supp}(\bar{X}_j)=\mathrm{supp}(\bar{Z}_j)$). As $\{\bar{X}_j,\bar{Z}_j\}=1$, the Hamming weight $\mathrm{wt}(\Bell_j)$ of $\Bell_j$ is always odd.

    For any $\bar{Z}_j$, $W\bar{Z}_jW^\dagger = \bar{Z}_j$. For any $\bar{X}_j$, $W\bar{X}_jW^\dagger = (i)^{q_j}\bar{X}_j\bar{Z}_j$ for some integer $q_j$. Let $v_i$ and $\ell_{ji}$ denote the $i$-th bits of $\mathbf{v}$ and $\Bell_j$. The numbers of $S$ and $S^\dagger$ acting on $\bar{X}_j$ are $\mathrm{wt}(\Bell_j)-\sum_{i=1}^{n} v_i \ell_{ji}$ and $\sum_{i=1}^{n} v_i \ell_{ji}$. Thus, $q_j = \mathrm{wt}(\Bell_j)-2\sum_{i=1}^{n} v_i \ell_{ji}$, which is an odd number of the form $4m+1$ or $4m+3$ for some integer $m$. Let $\tilde{q}_j = [(q_j+2)\;\mathrm{mod}\;4]-2$ (so that $\tilde{q}_j = 1$ if $q_j=4m+1$ and $\tilde{q}_j = -1$ if $q_j=4m+3$). The logical operation of $W$ is thus $\bigotimes_{j=1}^{k} \bar{S}_j^{\tilde{q}_j}$. For any $(a_1,\dots,a_k)\in\{-1,1\}^k$, $\bigotimes_{j=1}^{k} \bar{S}_j^{a_j}$ can be written as $\bigotimes_{j=1}^{k} \bar{S}_j^{\tilde{q}_j+2r_j}$ for some $r_j \in \{0,1\}$. This logical operation is the same as $W\left(\prod_{j=1}^k\bar{Z}_j^{r_j}\right)$, which can be implemented transversally by physical $S$ and $S^\dagger$ gates since any $\bar{Z}_j$ is composed of physical $Z$ gates, and $S_i Z_i = S_i^{\dagger}$ and $S_i^\dagger Z_i = S_i$. Therefore, the symplectic basis $\{(\bar{X}_j,\bar{Z}_j)\}_{j\in[k]}$ is a compatible symplectic basis according to \cref{def:compatible_basis}. 
    
    $(4 \Rightarrow 1)$ By contrapositive, assume that there is no $j \in [k]$ such that $\mathbf{h}_j \cdot \mathbf{h}_j = 1$. By \cref{lem:vdotv=1}, there is no $\mathbf{v} \in \mathcal{D}$ such that $\mathbf{v} \cdot \mathbf{v} = 1$. Consequently, a hyperbolic pair $(\bar{L}^x,\bar{L}^z)$ of $\mathcal{Q}$ where $\bar{L}^x \in \mathcal{P}_n^x$, $\bar{L}^z \in \mathcal{P}_n^z$ such that $\mathrm{supp}(\bar{L}^x)=\mathrm{supp}(\bar{L}^z)$ does not exist. This implies that a hyperbolic pair $(\bar{M}^x,\bar{M}^z)$ of $\mathcal{Q}$ where $\bar{M}^x \in \mathcal{P}_n^x$, $\bar{M}^z \in \mathcal{P}_n^z$ such that $\bigotimes_{j=1}^{k} H_i$ transforms $\bar{M}^x$ to $\bar{M}^z$ and $\bar{M}^z$ to $\bar{M}^x$ does not exist. By \cref{lem:mixed_type}, we find that a hyperbolic pair $(P,Q)$ where $P,Q \in \mathcal{P}_n$ such that $\bigotimes_{i=1}^{n} H_i$ transforms $P$ to $Q$ and $Q$ to $P$ does not exist. Therefore, a compatible symplectic basis for $\mathcal{Q}$ does not exist. 
\end{proof}
\endgroup

By \cref{thm:main1}, one can verify the existence of a compatible symplectic basis for any self-dual CSS code by simply showing that the code has at least one anticommuting pair of logical $X$ and logical $Z$ operators that have the same support. Furthermore, a compatible symplectic basis of the code can be constructed by applying \cref{lem:alg,lem:basis_change}. A full procedure for constructing a compatible symplectic basis and phase-type logical-level transversal gates for a self-dual CSS code (if they exist) is provided in \cref{app:python}. A Python implementation of the procedure is also available at \url{https://github.com/yugotakada/mlvtrans}.

\cref{thm:main1} also leads to the following corollary.

\begin{corollary} \label{cor:odd_n}
    Let $\mathcal{Q}$ be an \codepar{n,k,d} self-dual CSS code with $k \geq 1$. If $n$ is odd, then there exists a symplectic basis of $\mathcal{Q}$ which is compatible with multilevel transversal Clifford operations.
\end{corollary}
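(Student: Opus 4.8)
The plan is to reduce \cref{cor:odd_n} to statement~(1) of \cref{thm:main1} via the equivalence $(1)\Leftrightarrow(4)$, so that it suffices to produce an index $j\in[k]$ with $\mathbf{h}_j\cdot\mathbf{h}_j=1$. By \cref{lem:vdotv=1}, this is in turn equivalent to exhibiting a codeword $\mathbf{v}\in\mathcal{D}$ with $\mathbf{v}\cdot\mathbf{v}=1$, i.e.\ a codeword of $\mathcal{D}$ of odd Hamming weight; so the whole task becomes: \emph{when $n$ is odd, $\mathcal{D}$ is not an even-weight code}.

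To carry this out I would first record that $\mathcal{D}^\perp$ is self-orthogonal: since $\mathcal{D}^\perp\subsetneq\mathcal{D}=(\mathcal{D}^\perp)^\perp$, any $\mathbf{g},\mathbf{g}'\in\mathcal{D}^\perp$ satisfy $\mathbf{g}'\in\mathcal{D}$ and $\mathbf{g}\perp\mathcal{D}$, hence $\mathbf{g}\cdot\mathbf{g}'=0$; in particular every element of $\mathcal{D}^\perp$ (and every stabilizer generator $\mathbf{g}_i$) has even weight. Now consider the all-ones vector $\mathbf{1}\in\mathbb{Z}_2^n$: its weight is $n$, which is odd, so $\mathbf{1}\cdot\mathbf{1}=1$ and therefore $\mathbf{1}\notin\mathcal{D}^\perp$. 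Using $(\mathcal{D}^\perp)^\perp=\mathcal{D}$, the condition $\mathbf{1}\notin\mathcal{D}^\perp$ is equivalent to $\mathcal{D}\not\subseteq\mathbf{1}^\perp$, so there exists $\mathbf{v}\in\mathcal{D}$ with $\mathbf{v}\cdot\mathbf{1}=1$, i.e.\ $\mathrm{wt}(\mathbf{v})$ odd and hence $\mathbf{v}\cdot\mathbf{v}=1$. Applying \cref{lem:vdotv=1} gives some $j\in[k]$ with $\mathbf{h}_j\cdot\mathbf{h}_j=1$, and then $(1)\Rightarrow(4)$ of \cref{thm:main1} yields a compatible symplectic basis.

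There is no substantive obstacle here; the only mild care needed is the duality bookkeeping ($\mathbf{1}\notin\mathcal{D}^\perp \Leftrightarrow \mathcal{D}\not\subseteq\mathbf{1}^\perp$, via $\mathcal{D}^{\perp\perp}=\mathcal{D}$) and deciding which of the four equivalent conditions of \cref{thm:main1} to target. One can equivalently argue by contrapositive directly from \cref{eq:vdotv}: if $\mathbf{h}_j\cdot\mathbf{h}_j=0$ for all $j$, then \cref{eq:vdotv} forces $\mathbf{v}\cdot\mathbf{v}=0$ for every $\mathbf{v}\in\mathcal{D}$, so $\mathcal{D}\subseteq\mathbf{1}^\perp$, i.e.\ $\mathbf{1}\in\mathcal{D}^\perp$, contradicting that $\mathbf{1}$ has odd weight while $\mathcal{D}^\perp$ is self-orthogonal. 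As a consistency remark, when $n$ is odd one has $k=n-2r$ odd, so the hypothesis $k\geq1$ is automatically nonvacuous.
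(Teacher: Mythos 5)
Your proof is correct. It rests on the same underlying facts as the paper's argument --- $\mathcal{D}^\perp$ is self-orthogonal (so every stabilizer generator has even weight), and odd $n$ then forces an odd-weight codeword in $\mathcal{D}$ --- but it enters \cref{thm:main1} through condition (1) via \cref{lem:vdotv=1} and a duality argument, whereas the paper enters through condition (2) by directly exhibiting the hyperbolic pair $(\bar{L}^x,\bar{L}^z)=(X^{\otimes n},Z^{\otimes n})$. The paper's route is slightly more economical and more informative: since every $\mathbf{g}_i$ has even weight, $\mathbf{1}\cdot\mathbf{g}_i=0$ for all $i$, so $\mathbf{1}\in(\mathcal{D}^\perp)^\perp=\mathcal{D}$ already, and $\mathbf{1}\cdot\mathbf{1}=n=1\pmod 2$ makes $\mathbf{1}$ itself the desired odd-weight codeword --- you do not need the extra step of passing from $\mathbf{1}\notin\mathcal{D}^\perp$ to the existence of some \emph{other} $\mathbf{v}\in\mathcal{D}$ with $\mathbf{v}\cdot\mathbf{1}=1$. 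What your detour buys is nothing essential here, though your contrapositive remark via \cref{eq:vdotv} is a clean alternative, and your observation that $k=n-2r$ is automatically odd (hence $\geq 1$) when $n$ is odd is a nice sanity check not made explicit in the paper. Either way, the concrete witness $(X^{\otimes n},Z^{\otimes n})$ is worth recording, as it is exactly the pair used as $\mathbf{h}_1$ in the constructive procedure following the theorem.
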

\begin{proof}
    For any \codepar{n,k,d} self-dual CSS code with odd $n$, any stabilizer generator must have even weight. This is because an $X$-type stabilizer generator and a $Z$-type stabilizer generator with the same support must commute. Thus, we have that $(\bar{L}^x,\bar{L}^z)=(X^{\otimes n},Z^{\otimes n})$ is a hyperbolic pair satisfying $\mathrm{supp}(\bar{L}^x)=\mathrm{supp}(\bar{L}^z)$. By \cref{thm:main1}, a compatible symplectic basis of the code exists.
\end{proof}

We point out that while having odd $n$ is sufficient for the existence of a compatible symplectic basis, this is not a necessary condition. Below, we show that the \codepar{4,2,2} code \cite{LNCY97} is an example of a self-dual CSS code with even $n$ such that a compatible symplectic basis does not exist, and the \codepar{6,2,2} code \cite{Knill05b} is an example of a self-dual CSS code with even $n$ such that a compatible symplectic basis exists.

The \codepar{4,2,2} code \cite{LNCY97} can be described by stabilizer generators,
\begin{equation}
    \begin{matrix*}[l]
        g^x_1 = X_{1} X_{2} X_{3} X_{4}, & g^z_1 = Z_{1} Z_{2} Z_{3} Z_{4}. 
    \end{matrix*}\label{eq:422_stb}
\end{equation}
To show that a compatible symplectic basis for this code does not exist, we will apply \cref{thm:main1} and show that there is no hyperbolic pair $(\bar{L}^x,\bar{L}^z)$ where $\bar{L}^x \in \mathcal{P}_n^x$, $\bar{L}^z \in \mathcal{P}_n^z$ such that $\mathrm{supp}(\bar{L}^x)=\mathrm{supp}(\bar{L}^z)$. Let us consider $\bar{L}^x$ first. For $\bar{L}^x$ to be an $X$-type logical Pauli operator, it must commute with $g^z_1$ but cannot be $g^x_1$. Thus, $\bar{L}^x$ must be an $X$-type operator of weight 2. Similarly, we find that $\bar{L}^z$ must be a $Z$-type operator of weight 2. On the other hand, suppose that $\bar{L}^x$ and $\bar{L}^z$ have the same support. Because they must anticommute, we have that both $\bar{L}^x$ and $\bar{L}^z$ must have odd weight, causing contradiction. Therefore, a compatible symplectic basis for the \codepar{4,2,2} code does not exist.

Next, we consider the \codepar{6,2,2} code and show that it has a compatible symplectic basis. The \codepar{6,2,2} code \cite{Knill05b} can be described by stabilizer generators,
\begin{equation}
    \begin{matrix*}[l]
        g^x_1 = X_{1} X_{2} X_{5} X_{6} , & g^z_1 = Z_{1} Z_{2} Z_{5} Z_{6}, \\
        g^x_2 = X_{3} X_{4} X_{5} X_{6}, & g^z_2 = Z_{3} Z_{4} Z_{5} Z_{6}.
    \end{matrix*}\label{eq:622_stb}
\end{equation}
One possible compatible symplectic basis of this code is $\{(\bar{X}_j,\bar{Z}_j)\}_{j\in[2]}$, where
\begin{equation}
    \begin{matrix*}[l]
        \bar{X}_1 = X_{1} X_{3} X_{5}, & \bar{Z}_1 = Z_{1} Z_{3} Z_{5}, \\
        \bar{X}_2 = X_{2} X_{4} X_{6}, & \bar{Z}_2 = Z_{2} Z_{4} Z_{6}.
        \label{eq:622_log}
    \end{matrix*}
\end{equation}

\section{Clifford gates with multilevel transversality} \label{sec:multilevel_trans}

In previous sections, we have considered a self-dual CSS code and try to find a symplectic basis in which transversal application of some Clifford gates at the physical level lead to similar logical Clifford operations which are transversal at the logical level; i.e., on a compatible symplectic basis, such Clifford operations are transversal at two levels. In this section, we extend the idea to a concatenated code constructed from self-dual CSS codes. We first define transversal gates at each level of concatenation, then show that it is possible to obtain Clifford gates which are transversal at more than two levels if every code in the concatenation has a compatible symplectic basis. Applications of Clifford gates with multilevel transversality will be further discussed in the next section. 

We start by defining a concatenated code with multiple levels of concatenation, as well as logical qubits at each level of concatenation.

\begin{definition} \label{def:concat_code}
Let $\mathcal{Q}_i$ be an \codepar{n_i,k_i,d_i} stabilizer code where $i \in [L]$. An \emph{L-level concatenated code} $\mathcal{Q}_\mathrm{con}^{(L)} = \mathcal{Q}_L \circ \cdots \circ \mathcal{Q}_1$ is an \codepar{N,K,D} stabilizer code with $N = \prod_{i=1}^L n_i$, $K = \prod_{i=1}^L k_i$, and $D = \prod_{i=1}^L d_i$.
A \emph{logical qubit at the $l$-th level} is a logical qubit of the inner code $\mathcal{Q}_l \circ \cdots \circ \mathcal{Q}_1$ of the concatenated code $\mathcal{Q}_\mathrm{con}^{(L)}$. The number of logical qubits at the $l$-th level of the concatenated code is defined as $N^{(l)}=\prod_{i=1}^l k_i \prod_{j=l+1}^L n_j$. 
\end{definition}

By \cref{def:concat_code}, the physical qubits of $\mathcal{Q}_\mathrm{con}^{(L)}$ are logical qubits at the zeroth level, and the actual logical qubits of $\mathcal{Q}_\mathrm{con}^{(L)}$ are logical qubits at the $L$-th level. 

$\mathcal{Q}_\mathrm{con}^{(L)}$ encodes $N^{(0)}=N$ physical qubits, each labeled by $(i_1,\dots,i_L) \in [n_1]\times\cdots\times[n_L]$, to $N^{(L)}=K$ logical qubits, each labeled by $(j_1,\dots,j_L) \in [k_1]\times\cdots\times[k_L]$. In particular, for each $l \in \{0,\dots,L\}$ and for any $j_1\in[k_1],\dots,j_{l-1}\in[k_{l-1}],i_{l+1}\in[n_{l+1}],\dots,i_L\in[n_L]$, $\mathcal{Q}_l$ encodes $n_l$ logical qubits at the $(l-1)$-th level with indices $(j_1,\dots,j_{l-1},i_l,i_{l+1},\dots,i_L)$, $i_l \in [n_l]$ to $k_l$ logical qubits at the $l$-th level with indices $(j_1,\dots,j_{l-1},j_l,i_{l+1},\dots,i_L)$, $j_l \in [k_l]$. An example of the encoding of logical qubits of a $2$-level concatenated codes is given in \cref{fig:code_concat}.

\begin{figure}[htbp]
	\centering
	\includegraphics[width=0.8\textwidth]{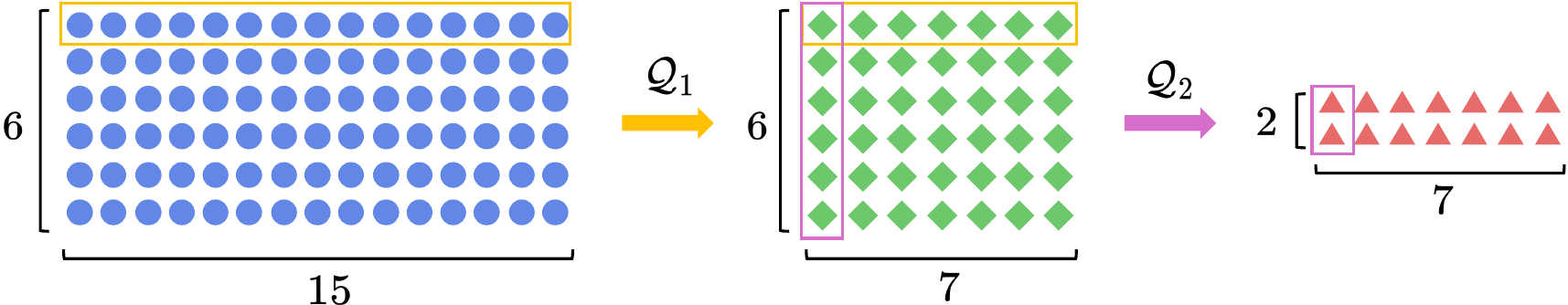}
	\caption{The encoding of logical qubits of a \codepar{90,14,\geq 6} concatenated code $\mathcal{Q}_2 \circ \mathcal{Q}_1$, where $\mathcal{Q}_1$ is the \codepar{15,7,3} Hamming code and $\mathcal{Q}_2$ is the \codepar{6,2,2} code. The logical qubits at the zeroth, the first, and the second level of concatenation are represented by blue circles, green diamonds, and red squares, respectively.}
	\label{fig:code_concat}%
\end{figure}

Next, we define transversal gates at each level of concatenation.
\begin{definition} \label{def:transversal_at_each_level}
Let $\mathcal{Q}_\mathrm{con}^{(L)}$ be an \codepar{N,K,D} concatenated code as defined in \cref{def:concat_code}, and
let $U^{(l)}$ be a quantum gate defined on one block of the concatenated code that acts on $N^{(l)}$ logical qubits at the $l$-th level (which is equivalent to $N^{(m)}$ logical qubits at the $m$-th level). $U^{(l)}$ is \emph{transversal at the $m$-th level} if there exists a decomposition $U^{(l)}=\bigotimes_{i=1}^{N^{(m)}}G_i^{(m)}$ where $G_i^{(m)}$ is a single-qubit gate acting on the $i$-th logical qubit at the $m$-th level. Let $V^{(l)}$ be a quantum gate defined on two blocks of the concatenated code that, on each block, acts on $N^{(l)}$ logical qubits at the $l$-th level (which is equivalent to $N^{(m)}$ logical qubits at the $m$-th level). $V^{(l)}$ is \emph{transversal at the $m$-th level} if there exists a decomposition $V^{(l)}=\bigotimes_{i=1}^{N^{(m)}}F^{(m)}_{1:i,2:i}$, where $F^{(m)}_{1:i,2:i}$ is a two-qubit gate acting on the $i$-th logical qubit at the $m$-th level of the first code block, and the $i$-th logical qubit at the $m$-th level of the second code block.
\end{definition}

Transversality of a gate at the $m$-level implies that an implementation of the gate is fault tolerant to some extent, as stated in the proposition below.
\begin{proposition} \label{prop:FT_of_transversal}
    If a quantum gate acting on one code block $U^{(l)}$ is transversal at the $m$-th level, then there exists a gate gadget implementing $U^{(l)}$ such that $s$ faults in the gadget lead to errors on no more than $s$ logical qubits at the $m$-th level on the supporting code block of $U^{(l)}$. If a quantum gate acting on two code blocks $V^{(l)}$ is transversal at the $m$-th level, then there exists a gate gadget implementing $V^{(l)}$ such that $s$ faults in the gadget lead to errors on no more than $s$ logical qubits at the $m$-th level on each of the supporting code blocks of $V^{(l)}$.
\end{proposition}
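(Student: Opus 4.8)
The plan is to read off a gate gadget for $U^{(l)}$ directly from its level-$m$ transversal decomposition and then argue fault locality block by block. By \cref{def:transversal_at_each_level}, transversality at the $m$-th level gives a decomposition $U^{(l)} = \bigotimes_{i=1}^{N^{(m)}} G_i^{(m)}$, where $G_i^{(m)}$ acts only on the $i$-th logical qubit at the $m$-th level of the supporting code block. By \cref{def:concat_code}, that $i$-th level-$m$ logical qubit is itself encoded (via the inner code $\mathcal{Q}_m \circ \cdots \circ \mathcal{Q}_1$) into a block $B_i$ of $\prod_{r=1}^{m} n_r$ physical qubits, and the blocks $B_1,\dots,B_{N^{(m)}}$ are pairwise disjoint and partition the $N$ physical qubits of the code block. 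For each $i$ I choose any gate gadget $\mathcal{G}_i$ implementing $G_i^{(m)}$ on the qubits of $B_i$ together with whatever ancillas it uses (such a gadget always exists --- in the worst case, decode the level-$m$ logical qubit, apply $G_i^{(m)}$, and re-encode; a more fault-tolerant one may be used when available). Define $\mathcal{G} := \bigotimes_{i=1}^{N^{(m)}} \mathcal{G}_i$; since the $\mathcal{G}_i$ act on disjoint registers and $U^{(l)} = \bigotimes_i G_i^{(m)}$, the gadget $\mathcal{G}$ implements $U^{(l)}$.

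For the fault analysis, note that each fault location of $\mathcal{G}$ belongs to exactly one $\mathcal{G}_i$, because the $\mathcal{G}_i$ act on disjoint sets of data and ancilla qubits. A single fault inside $\mathcal{G}_i$ can propagate only within $\mathcal{G}_i$, so the error it leaves on the output is supported on $B_i$; that is, it is an error on at most one logical qubit at the $m$-th level, no matter how badly it corrupts the physical qubits of $B_i$. Hence $s$ faults lie in at most $s$ of the gadgets $\mathcal{G}_i$, and the output of $\mathcal{G}$ differs from applying the ideal $U^{(l)}$ to the input by an error supported on at most $s$ of the blocks $B_i$, i.e.\ on at most $s$ logical qubits at the $m$-th level.

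The two-block case is identical in structure: I write $V^{(l)} = \bigotimes_{i=1}^{N^{(m)}} F^{(m)}_{1:i,2:i}$, and for each $i$ take a gadget $\mathcal{F}_i$ implementing the two-qubit gate on the block $B_i^{(1)}$ encoding the $i$-th level-$m$ logical qubit of the first code block and the block $B_i^{(2)}$ encoding the $i$-th level-$m$ logical qubit of the second; set $\mathcal{F} := \bigotimes_i \mathcal{F}_i$. The supports $B_i^{(1)} \cup B_i^{(2)}$ are pairwise disjoint, so a fault in $\mathcal{F}_i$ leaves an error confined to $B_i^{(1)} \cup B_i^{(2)}$, which is at most one level-$m$ logical qubit on each of the two code blocks, and $s$ faults therefore produce errors on at most $s$ level-$m$ logical qubits on each block.

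I do not expect a genuine obstacle: this is the direct generalization to level $m$ of the standard argument that a physical-level transversal gate (the $m=0$ case) is fault tolerant, the point being simply that distinct level-$m$ logical qubits of a concatenated code are carried by disjoint physical sub-blocks, so gadgets chosen block-wise cannot spread a fault beyond the level-$m$ qubit in which it occurred. The only mildly delicate step is stating precisely what ``an error on no more than $s$ logical qubits at the $m$-th level'' means --- in particular that a fault is allowed to corrupt an entire level-$m$ sub-block and this still counts as a single erroneous level-$m$ logical qubit --- and confirming that the per-block gadgets $\mathcal{G}_i$ (resp.\ $\mathcal{F}_i$) exist at all, which they trivially do.
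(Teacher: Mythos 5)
The paper states this proposition without proof, so there is no official argument to compare against; judged on its own merits, your proposal has a genuine gap at its structural core. You claim that the physical supports $B_1,\dots,B_{N^{(m)}}$ of the distinct level-$m$ logical qubits are pairwise disjoint and partition the $N$ physical qubits. For the concatenated codes of \cref{def:concat_code} this is false whenever some inner code $\mathcal{Q}_r$ with $r\le m$ has $k_r>1$ --- which is the entire point of this paper. Tracing the indexing in \cref{def:concat_code}, the level-$m$ logical qubit $(j_1,\dots,j_m,i_{m+1},\dots,i_L)$ is supported on all physical qubits $(i_1,\dots,i_m,i_{m+1},\dots,i_L)$ with $(i_1,\dots,i_m)\in[n_1]\times\cdots\times[n_m]$ and the \emph{fixed} tail $(i_{m+1},\dots,i_L)$; the support does not depend on $(j_1,\dots,j_m)$ at all. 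So the $\prod_{r=1}^{m}k_r$ level-$m$ logical qubits sharing a tail index all live on the \emph{same} sub-block of $\prod_{r=1}^{m}n_r$ physical qubits (e.g., for $\mathcal{Q}_1$ the $[\![15,7,3]\!]$ code and $m=1$, seven level-$1$ logical qubits share one set of $15$ physical qubits). Your own arithmetic betrays this: $N^{(m)}\cdot\prod_{r=1}^{m}n_r=\bigl(\prod_{r=1}^{m}k_r\bigr)N\neq N$ unless every $k_r=1$. Consequently the gadgets $\mathcal{G}_i$ cannot be placed on disjoint registers, the tensor product $\mathcal{G}=\bigotimes_i\mathcal{G}_i$ is not well-formed, and the key step ``a fault in $\mathcal{G}_i$ stays inside $B_i$ and hence corrupts only one level-$m$ logical qubit'' collapses: a gadget for an addressable gate $G_i^{(m)}$ on a multi-logical-qubit sub-block necessarily touches physical qubits carrying the other level-$m$ logical qubits of that sub-block.

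To repair the argument you must group the factors $G_i^{(m)}$ by physical sub-block (i.e., by the tail index $(i_{m+1},\dots,i_L)$), implement each group's product $\bigotimes_{i\in\text{sub-block}}G_i^{(m)}$ by a single gadget on that sub-block --- these \emph{are} on disjoint registers --- and then argue that each such per-sub-block gadget has the property that $s$ faults corrupt at most $s$ of the level-$m$ logical qubits \emph{within} that sub-block. That last property is no longer free: it requires either that the gate is also transversal at a lower level (so one can recurse down to the physical level, where disjointness is trivial), or an appeal to the existence of gate gadgets that are fault tolerant in the sense of \cref{def:FT_gate_gadget} for the inner code $\mathcal{Q}_m\circ\cdots\circ\mathcal{Q}_1$, such as the teleportation-based gadgets of \cref{app:gate_TP}. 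Your ``decode, apply, re-encode'' fallback does not supply this property. The two-block case inherits the same problem.
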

Also, transversality at the $m$-level is preserved under operator multiplication.
\begin{proposition} \label{prop:product_of_transversal}
    A multiplication of gates which are transversal at the $m$-th level is also transversal at the $m$-th level.
\end{proposition}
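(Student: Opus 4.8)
The plan is to unfold \cref{def:transversal_at_each_level} and observe that being \emph{transversal at the $m$-th level} is purely a statement about the existence of a tensor-product decomposition indexed by the $N^{(m)}$ logical qubits at the $m$-th level, together with the elementary fact that operators supported on distinct qubits commute, so that any two such decompositions over the same index set compose factorwise.

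First I would treat the single-block case. Suppose $U^{(l)}$ and $\tilde U^{(\tilde l)}$ act on one and the same code block and are both transversal at the $m$-th level, say $U^{(l)}=\bigotimes_{i=1}^{N^{(m)}}G_i^{(m)}$ and $\tilde U^{(\tilde l)}=\bigotimes_{i=1}^{N^{(m)}}\tilde G_i^{(m)}$, where $G_i^{(m)}$ and $\tilde G_i^{(m)}$ act only on the $i$-th logical qubit at the $m$-th level. Then
\[
U^{(l)}\tilde U^{(\tilde l)}=\Big(\bigotimes_{i=1}^{N^{(m)}}G_i^{(m)}\Big)\Big(\bigotimes_{i=1}^{N^{(m)}}\tilde G_i^{(m)}\Big)=\bigotimes_{i=1}^{N^{(m)}}\big(G_i^{(m)}\tilde G_i^{(m)}\big),
\]
and since each $G_i^{(m)}\tilde G_i^{(m)}$ is again a single-qubit gate on the $i$-th logical qubit at the $m$-th level, the product is transversal at the $m$-th level. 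The two-block case is identical with each $G_i^{(m)}$ replaced by a two-qubit gate $F^{(m)}_{1:i,2:i}$ acting on the $i$-th logical qubit at the $m$-th level of each of the two blocks, using $(\bigotimes_i F^{(m)}_{1:i,2:i})(\bigotimes_i \tilde F^{(m)}_{1:i,2:i})=\bigotimes_i (F^{(m)}_{1:i,2:i}\tilde F^{(m)}_{1:i,2:i})$.

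Next I would handle products of gates whose supporting code blocks differ. A single-block gate on block $1$ that is transversal at the $m$-th level can be rewritten, after tensoring each factor with the identity on the corresponding $m$-th-level logical qubit of block $2$, as a two-block gate that is still transversal at the $m$-th level; more generally, a gate acting on some set of code blocks and transversal at the $m$-th level remains so after padding with identities on any additional blocks. Hence any finite product of gates each transversal at the $m$-th level can, after such padding, be regarded as a product of gates with a common (one- or two-block) support, to which the factorwise-composition argument above applies; a straightforward induction on the number of factors then completes the proof.

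I do not expect a genuine obstacle here: the only point requiring care is the bookkeeping — ensuring that the index set of the tensor-product decomposition (the logical qubits at the $m$-th level) and the set of supporting code blocks are the same for all factors before composing, which is precisely what the identity-padding step arranges.
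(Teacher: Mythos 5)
Your factorwise composition $\bigl(\bigotimes_i G_i^{(m)}\bigr)\bigl(\bigotimes_i \tilde G_i^{(m)}\bigr)=\bigotimes_i G_i^{(m)}\tilde G_i^{(m)}$ is exactly the justification the paper relies on; the paper in fact states this proposition without any written proof, treating it as immediate from \cref{def:transversal_at_each_level}, and everywhere it is invoked (e.g.\ in the proofs of \cref{thm:main2,thm:main3}) only the single-block case is needed. Your extra identity-padding step for gates on different blocks is correct and harmless, though strictly speaking a product spanning three or more blocks falls outside the one- and two-block scope the paper restricts \cref{def:transversal_at_each_level} to, so it is simplest to note that the proposition is only ever applied to factors sharing the same supporting block(s).
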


If a certain logical gate is transversal at several levels, the best practice to implement it would be using transversal logical gates at the lowest possible level. This is because in case that some faults occur in the implementation, the spread of errors could be limited to the level on which the transversal logical gates are operated. In general, logical errors at a lower level of concatenation can be corrected more easily than logical errors at a higher level since a QEC gadget at a lower level has a simpler construction. Note that in case that the lowest level for transversal gate implementation is not the physical level, an additional technique such as teleportation-based FTQC in Ref. \cite{BZHJL15} may be required for implementing some gates.

In this work, we are interested in the following types of transversal gates.
\begin{definition} \label{def:types_of_transversal_concat}
Let $\mathcal{Q}_\mathrm{con}^{(L)}=\mathcal{Q}_L \circ \cdots \circ \mathcal{Q}_1$ be an $L$-level concatenated code with parameter \codepar{N,K,D} as defined in \cref{def:concat_code}. Let $B_1 \subseteq [k_1],\dots,B_l \subseteq [k_l],A_{l+1} \subseteq [n_{l+1}],\dots,A_L \subseteq [n_L]$ be sets of indices of logical qubits at the $l$-th level, where $l \in \{0,\dots,L\}$.
\begin{enumerate}
    \item A \emph{Pauli-type transversal gate at the $l$-th level} $U_P^{(l)}\left(B_1,\dots,B_l,A_{l+1},\dots,A_L \right)$ is a quantum gate defined on one code block of $\mathcal{Q}_\mathrm{con}^{(L)}$ such that a Pauli-type gate ($X^{(l)}$, $Y^{(l)}$, or $Z^{(l)}$) is applied to each logical qubit at the $l$-th level indexed by $(j_1,\dots,j_l,i_{l+1},\dots,i_L) \in B_1\times \cdots \times B_l \times A_{l+1} \times \cdots \times A_L$, and the identity gates are applied to other logical qubits at the $l$-th level.
    \item A \emph{Hadamard-type transversal gate at the $l$-th level} $U_H^{(l)}\left(B_1,\dots,B_l,A_{l+1},\dots,A_L \right)$ is a quantum gate defined on one code block of $\mathcal{Q}_\mathrm{con}^{(L)}$ such that a Hadamard gate $H^{(l)}$ is applied to each logical qubit at the $l$-th level indexed by $(j_1,\dots,j_l,i_{l+1},\dots,i_L) \in B_1\times \cdots \times B_l \times A_{l+1} \times \cdots \times A_L$, and the identity gates are applied to other logical qubits at the $l$-th level.
    \item A \emph{phase-type transversal gate at the $l$-th level} $U_S^{(l)}\left(B_1,\dots,B_l,A_{l+1},\dots,A_L;\mathbf{a} \right)$ is a quantum gate defined on one code block of $\mathcal{Q}_\mathrm{con}^{(L)}$ such that a phase-type gate ($S^{(l)}$ or $S^{\dagger(l)}$, specified by $\mathbf{a} \in \{1,-1\}^{c}$) is applied to each logical qubit at the $l$-th level indexed by $(j_1,\dots,j_l,i_{l+1},\dots,i_L) \in B_1\times \cdots \times B_l \times A_{l+1} \times \cdots \times A_L$, and the identity gates are applied to other logical qubits at the $l$-th level, where $c = |B_1\times \cdots \times B_l \times A_{l+1} \times \cdots \times A_L|$.
    \item A \emph{CNOT-type transversal gate at the $l$-th level} $U_\mathrm{CNOT}^{(l)}\left(B_1,\dots,B_l,A_{l+1},\dots,A_L \right)$ is a quantum gate defined on two code blocks of $\mathcal{Q}_\mathrm{con}^{(L)}$ such that a CNOT gate $\mathrm{CNOT}^{(l)}$ is applied to each pair of logical qubits at the $l$-th level, with a control qubit from the first block and a target qubit from the second block, both indexed by $(j_1,\dots,j_l,i_{l+1},\dots,i_L) \in B_1\times \cdots \times B_l \times A_{l+1} \times \cdots \times A_L$, and the identity gates are applied to other logical qubits at the $l$-th level.
\end{enumerate}
\end{definition}

With the definitions presented above, the following theorems can be obtained.

\begin{theorem} \label{thm:main2}
Let $\mathcal{Q}_\mathrm{con}^{(L)}=\mathcal{Q}_L \circ \cdots \circ \mathcal{Q}_1$ be an $L$-level concatenated code with parameter \codepar{N,K,D} as defined in \cref{def:concat_code}, and suppose that any code $\mathcal{Q}_i$ ($i \in [L]$) is a self-dual CSS code satisfying the condition in \cref{thm:main1} and logical Pauli operators of each code are defined by a compatible symplectic basis. Then, for any $l \in \{0,\dots,L\}$ and for any $B_1\subseteq[k_1],\dots,B_{l}\subseteq[k_l],A_{l+1}\subseteq[n_{l+1}],\dots,A_L\subseteq[n_L]$, $U_P^{(l)}\left(B_1,\dots,B_l,A_{l+1},\dots,A_L\right)$ is transversal at levels $0,\dots,l$.
\end{theorem}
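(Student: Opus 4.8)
The plan is to argue by downward induction on the level index: starting from $m=l$, I would show that if the gate $U \coloneqq U_P^{(l)}\!\left(B_1,\dots,B_l,A_{l+1},\dots,A_L\right)$ is transversal at level $m$ with a decomposition into single-qubit Pauli gates on the level-$m$ logical qubits, then it is also transversal at level $m-1$ with a decomposition of the same form. Iterating this from $m=l$ down to $m=1$ then yields transversality at every level $0,1,\dots,l$. The base case $m=l$ is immediate: by \cref{def:types_of_transversal_concat}, $U$ applies a single-qubit Pauli gate ($X^{(l)}$, $Y^{(l)}$, $Z^{(l)}$, or the identity) to each logical qubit at the $l$-th level, which is precisely a decomposition of the type required by \cref{def:transversal_at_each_level} at level $l$.

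For the inductive step, suppose $U=\bigotimes_{i=1}^{N^{(m)}}G_i^{(m)}$ with each $G_i^{(m)}$ a single-qubit Pauli gate on the $i$-th level-$m$ logical qubit. By \cref{def:concat_code}, the level-$m$ logical qubits split into disjoint blocks, one for each choice of $j_1\in[k_1],\dots,j_{m-1}\in[k_{m-1}],i_{m+1}\in[n_{m+1}],\dots,i_L\in[n_L]$; the block indexed by such a tuple consists of the $k_m$ level-$m$ qubits $(j_1,\dots,j_{m-1},j_m,i_{m+1},\dots,i_L)$, $j_m\in[k_m]$, which $\mathcal{Q}_m$ encodes from the $n_m$ level-$(m-1)$ qubits $(j_1,\dots,j_{m-1},i_m,i_{m+1},\dots,i_L)$, $i_m\in[n_m]$. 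Restricting $U$ to one block, the factor is $\bigotimes_{j_m=1}^{k_m}\bar P_{j_m}$, where each $\bar P_{j_m}\in\{\bar X_{j_m},\bar Y_{j_m},\bar Z_{j_m},\bar I_{j_m}\}$ is a logical Pauli operator of $\mathcal{Q}_m$ with respect to its compatible symplectic basis. Since $\mathcal{Q}_m$ is a CSS code equipped with a compatible symplectic basis, each $\bar X_{j_m}$ is a purely $X$-type Pauli operator on the $n_m$ level-$(m-1)$ qubits of the block, each $\bar Z_{j_m}$ a purely $Z$-type one, and $\bar Y_{j_m}=i\,\bar X_{j_m}\bar Z_{j_m}$; consequently the product $\bigotimes_{j_m}\bar P_{j_m}$ is itself (up to a global phase) a Pauli operator on those $n_m$ level-$(m-1)$ qubits, i.e.\ a tensor product of single-qubit Pauli gates on them. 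Assembling these block-wise decompositions over all blocks of $\mathcal{Q}_m$ --- which, again by \cref{def:concat_code}, partition the full set of $N^{(m-1)}$ level-$(m-1)$ logical qubits --- produces a decomposition $U=\bigotimes_{i=1}^{N^{(m-1)}}G_i^{(m-1)}$ into single-qubit Pauli gates on level-$(m-1)$ qubits. By \cref{def:transversal_at_each_level}, $U$ is transversal at level $m-1$, and the strengthened hypothesis (single-qubit \emph{Pauli} factors) is preserved, so the induction goes through.

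I do not expect a genuine mathematical obstacle: the only non-trivial part is bookkeeping the concatenation indices from \cref{def:concat_code}, together with the elementary observation that any product of Pauli operators on a fixed register is again a tensor product of single-qubit Paulis --- this is exactly why Pauli-type transversality propagates down all levels of concatenation. In fact the compatible-basis hypothesis is needed here only to fix, consistently, what the logical qubits and logical Pauli operators at each level are; the conclusion itself would hold for any symplectic basis, because logical Pauli operators of a CSS code can always be chosen to be physical Pauli operators (cf.\ the remarks after \cref{def:types_of_transversal}). Alternatively, the inductive step can be phrased through \cref{prop:product_of_transversal}: each $\bar P_{j_m}$, viewed at level $m-1$, is transversal at level $m-1$ on its block, and a product of gates transversal at level $m-1$ is transversal at level $m-1$.
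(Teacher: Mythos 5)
Your proposal is correct and follows essentially the same route as the paper's proof: both arguments descend level by level using the fact that logical Pauli operators of each CSS code $\mathcal{Q}_m$ are themselves (tensor products of) Pauli operators on the level-$(m-1)$ qubits, and that products of Paulis remain Paulis (the paper invokes \cref{prop:product_of_transversal} for this closure step, which you also cite as an alternative phrasing). Your explicit strengthened induction hypothesis and your remark that the compatible-basis assumption is not actually needed for the Pauli-type case are both accurate and consistent with the paper's own discussion following \cref{def:types_of_transversal}.
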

\begin{proof}
    For any $l \in \{0,\dots,L\}$, any $X$-type ($Z$-type) logical Pauli operator of $\mathcal{Q}_l$ can be defined in a way that it is composed of only physical $X$ ($Z$) and identity gates. Thus, a tensor product of $X^{(l)}$ ($Z^{(l)}$) on a single logical qubit at the $l$-th level and identity gates on other logical qubits is composed of $X^{(l-1)}$ ($Z^{(l-1)}$) and identity gates on logical qubits at the $(l-1)$-th level, and is transversal at level $l-1$. For any $B_1\subseteq[k_1],\dots,B_{l}\subseteq[k_l],A_{l+1}\subseteq[n_{l+1}],\dots,A_L\subseteq[n_L]$, $U_P^{(l)}\left(B_1,\dots,B_l,A_{l+1},\dots,A_L\right)$ can be written as a tensor product of $X^{(l)}$, $Y^{(l)}$, or $Z^{(l)}$ and identity gates, which is a multiplication of tensor products of each $X^{(l)}$ ($Z^{(l)}$) and identity gates on other logical qubits at the $l$-th level. By \cref{prop:product_of_transversal}, $U_P^{(l)}\left(B_1,\dots,B_l,A_{l+1},\dots,A_L\right)$ is also transversal at level $l-1$. With appropriate choices of symplectic bases for $\mathcal{Q}_{l-1},\dots,\mathcal{Q}_{1}$, we can apply the same argument repeatedly and show that $U_P^{(l)}\left(B_1,\dots,B_l,A_{l+1},\dots,A_L\right)$ is transversal at levels $l-2,\dots,0$.
\end{proof}

\begin{theorem} \label{thm:main3}
Let $\mathcal{Q}_\mathrm{con}^{(L)}=\mathcal{Q}_L \circ \cdots \circ \mathcal{Q}_1$ be an $L$-level concatenated code with parameter \codepar{N,K,D} as defined in \cref{def:concat_code}, and suppose that any code $\mathcal{Q}_i$ ($i \in [L]$) is a self-dual CSS code satisfying the condition in \cref{thm:main1} and logical Pauli operators of each code are defined by a compatible symplectic basis. Then, for any $l,m \in \{0,\dots,L\}$ where $m < l$ and for any $j_1 \in [k_1],\dots,j_m\in [k_m],i_{l+1} \in [n_{l+1}],\dots,i_L \in [n_L]$, the following operators are transversal at levels $m,\dots,l$.
\begin{enumerate}
    \item $U_H^{(l)}\left(\{j_1\},\dots,\{j_m\},[k_{m+1}],\dots,[k_l],\{i_{l+1}\},\dots,\{i_L\}\right)$.
    \item $U_S^{(l)}\left(\{j_1\},\dots,\{j_m\},[k_{m+1}],\dots,[k_l],\{i_{l+1}\},\dots,\{i_L\};\mathbf{a}\right)$ for any $\mathbf{a} \in \{1,-1\}^c$, $c = |\{j_1\} \times \dots \times \{j_m\} \times [k_{m+1}] \times \dots \times [k_l] \times \{i_{l+1}\} \times \dots \times \{i_L\}|$.
    \item $U_\mathrm{CNOT}^{(l)}\left(\{j_1\},\dots,\{j_m\},[k_{m+1}],\dots,[k_l],\{i_{l+1}\},\dots,\{i_L\}\right)$.
\end{enumerate}
\end{theorem}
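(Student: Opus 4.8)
The plan is to establish all three items by one downward induction on the level, unfolding the gate one layer at a time via the compatible-symplectic-basis property of each constituent code. First I would record three elementary identities that a compatible symplectic basis of a self-dual CSS code $\mathcal{Q}_i$ provides, read with the $k_i$ logical qubits of $\mathcal{Q}_i$ playing the role of one layer and the $n_i$ encoding qubits the layer below: (a) $\bigotimes_{j=1}^{k_i}\bar H_j=\bigotimes_{p=1}^{n_i}H_p$; (b) for every $\mathbf a\in\{-1,1\}^{k_i}$ there is $\mathbf b\in\{-1,1\}^{n_i}$ with $\bigotimes_{j=1}^{k_i}\bar S_j^{a_j}=\bigotimes_{p=1}^{n_i}S_p^{b_p}$; and (c) $\bigotimes_{j=1}^{k_i}\overline{\mathrm{CNOT}}_{1:j,2:j}=\bigotimes_{p=1}^{n_i}\mathrm{CNOT}_{1:p,2:p}$. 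Here (a) and (b) are exactly \cref{def:compatible_basis}, while (c) follows from the CSS discussion of \cref{sec:definitions}: by \cref{thm:main1} a compatible basis has $\bar X_j\in\mathcal{P}_n^{(x)}$ and $\bar Z_j\in\mathcal{P}_n^{(z)}$, so transversal physical CNOTs between two copies of $\mathcal{Q}_i$ implement $\bigotimes_j\overline{\mathrm{CNOT}}_{1:j,2:j}$.

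The core observation, which I would state as a short lemma, is structural. Fix $l$, $m<l$, $j_1,\dots,j_m$, $i_{l+1},\dots,i_L$. In each block of the code $\mathcal{Q}_l$ embedded in $\mathcal{Q}_\mathrm{con}^{(L)}$ (one block for each choice of the remaining coordinates $j_{m+1},\dots,j_{l-1}$), either all $k_l$ of its level-$l$ logical qubits are acted on by the same gate --- one of $H^{(l)}$, a fixed $S^{(l)}$ or $S^{\dagger(l)}$, or $\mathrm{CNOT}^{(l)}$ --- or none of them is, precisely because the outermost $m$ index slots are the singletons $\{j_1\},\dots,\{j_m\}$, the innermost $L-l$ slots are the singletons $\{i_{l+1}\},\dots,\{i_L\}$, and the $l$-th slot is the full set $[k_l]$. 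Applying (a)/(b)/(c) block by block to the full-block factors then rewrites the level-$l$ operator as an operator of the same type at level $l-1$ whose index pattern is obtained by replacing the $l$-th slot $[k_l]$ with $[n_l]$ (and, in the phase case, by some new sign vector; it is harmless that (b) may yield a block-dependent $\mathbf b$, since the assembled operator is still a tensor product of single-qubit phase-type gates). In particular the operator is transversal at level $l-1$, and the new pattern again has singletons in its outermost $m$ and innermost $L-l$ slots and full sets in every middle slot, so the same step applies verbatim.

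Iterating from level $l$ down to level $m$, I would conclude that for each $m\le m'\le l$ the original operator equals the level-$m'$ operator with index pattern $(\{j_1\},\dots,\{j_m\},[k_{m+1}],\dots,[k_{m'}],[n_{m'+1}],\dots,[n_l],\{i_{l+1}\},\dots,\{i_L\})$, which is by inspection a tensor product of single-qubit (or, in the CNOT case, single-pair) gates on level-$m'$ logical qubits, hence transversal at level $m'$; \cref{prop:product_of_transversal} is not even needed, since every unfolding already produces an explicit tensor-product form. The recursion stops exactly at $m'=m$: the outermost $m$ slots being singletons, a block of $\mathcal{Q}_m$ does not have all $k_m$ of its level-$m$ qubits acted on, so (a)/(b)/(c) for $\mathcal{Q}_m$ cannot be invoked --- matching the claim that transversality holds at levels $m,\dots,l$ and, generically, not below.

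I expect the only real work to be bookkeeping: writing out the per-block tensor decomposition in the multi-index notation of \cref{def:concat_code,def:types_of_transversal_concat} and verifying that the \emph{all-or-nothing per block} property survives each unfolding. Secondary points to handle with care are that the phase-type sign vectors produced along the descent may vary from block to block yet still assemble into a legitimate $U_S^{(m')}(\cdots;\mathbf a')$, and that in the CNOT case an unfolding step acts on the matched pair of blocks in the two code copies, so that (c) is applied pairwise. None of this is conceptually difficult; the content is entirely in the compatible-basis identities (a)--(c) and the block structure of concatenation from \cref{def:concat_code}.
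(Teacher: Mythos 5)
Your proposal is correct and follows essentially the same route as the paper's proof: unfold the operator one level at a time using the compatible-symplectic-basis identities for each $\mathcal{Q}_{l'}$, observing that at every intermediate level the pattern keeps full sets $[k_{l'}]$ (or $[n_{l'}]$) in the middle slots and singletons outside, so the unfolding applies block by block until it necessarily stops at level $m$. Your explicit ``all-or-nothing per block'' observation and the remark about block-dependent sign vectors in the phase case are just slightly more detailed bookkeeping of the same argument.
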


\begin{proof}
    We start by proving the statement for Hadamard-type transversal gates.
    Consider any $l \in \{0,\dots,L\}$. $\mathcal{Q}_l$ encodes logical qubits at the $(l-1)$-th level with indices in $\{j_1\}\times\cdots\times\{j_{l-1}\}\times[n_l]\times\{i_{l+1}\}\times\cdots\times\{i_L\}$ to logical qubits at the $l$-th level with indices in $\{j_1\}\times\cdots\times\{j_{l-1}\}\times[k_l]\times\{i_{l+1}\}\times\cdots\times\{i_L\}$ for any $j_1\in[k_1],\dots,j_{l-1}\in[k_{l-1}],i_{l+1}\in[n_{l+1}],\dots,i_L\in[n_L]$. Given the facts that $\mathcal{Q}_l$ is a self-dual CSS code satisfying \cref{thm:main1} and its logical Pauli operators are defined by a compatible symplectic basis, we have that for any $l$, $U_H^{(l)}\left(\{j_1\},\dots,\{j_{l-1}\},[k_l],\{i_{l+1}\},\dots,\{i_L\}\right)=U_H^{(l-1)}\left(\{j_1\},\dots,\{j_{l-1}\},[n_l],\{i_{l+1}\},\dots,\{i_L\}\right)$ for any $j_1\in[k_1],\dots,j_{l-1}\in[k_{l-1}],i_{l+1}\in[n_{l+1}],\dots,i_L\in[n_L]$; the operator is transversal at levels $l-1$ and $l$.
    Let $m \in \{0,\dots,L\}$ where $m < l$. Multiplying $U_H^{(l)}\left(\{j_1\},\dots,\{j_{l-1}\},[k_l],\{i_{l+1}\},\dots,\{i_L\}\right)$ with all possible $j_{m+1},\dots,j_{l-1}$ and using the facts that logical Pauli operators of $\mathcal{Q}_{m+1},\dots,\mathcal{Q}_l$ are defined by a compatible symplectic basis, we have that,
\begin{align}
    &U_H^{(l)}\left(\{j_1\},\dots,\{j_m\},[k_{m+1}],\dots,[k_l],\{i_{l+1}\},\dots,\{i_L\}\right) \nonumber \\
    &=U_H^{(l-1)}\left(\{j_1\},\dots,\{j_m\},[k_{m+1}],\dots,[k_{l-1}],[n_l],\{i_{l+1}\},\dots,\{i_L\}\right) \nonumber \\
    &=\dots \nonumber \\
    &=U_H^{(m)}\left(\{j_1\},\dots,\{j_m\},[n_{m+1}],\dots,[n_l],\{i_{l+1}\},\dots,\{i_L\}\right)
\end{align}
That is, for any $j_1 \in [k_1],\dots,j_m\in [k_m],i_{l+1} \in [n_{l+1}],\dots,i_L \in [n_L]$, $U_H^{(l)}(\{j_1\},\dots,\{j_m\},[k_{m+1}],\dots,[k_l],$ $\{i_{l+1}\},\dots,\{i_L\})$ is transversal at levels $m,\dots,l$.

The statements for phase-type and CNOT-type transversal gates can be proved using similar ideas. Note that in the case of phase-type transversal gates, a phase-type gate applied to each logical qubit at the $l$-th level in the support of $U_S^{(l)}\left(\{j_1\},\dots,\{j_m\},[k_{m+1}],\dots,[k_l],\{i_{l+1}\},\dots,\{i_L\};\mathbf{a}\right)$ can be either $S^{(l)}$ or $S^{\dagger(l)}$, and a transversal decomposition of the operator in terms of $S^{(l')}$ and $S^{\dagger(l')}$ exists at any level $l' \in \{m,\dots,l\}$. This is possible by the definition of a compatible symplectic basis (\cref{def:compatible_basis}).
\end{proof}

\begin{corollary} \label{cor:trans_logical_physical}
    Let $\mathcal{Q}_\mathrm{con}^{(L)}=\mathcal{Q}_L \circ \cdots \circ \mathcal{Q}_1$ be an $L$-level concatenated code with parameter \codepar{N,K,D} as defined in \cref{def:concat_code}, and suppose that any code $\mathcal{Q}_i$ ($i \in [L]$) is a self-dual CSS code satisfying the condition in \cref{thm:main1} and logical Pauli operators of each code are defined by a compatible symplectic basis. Then, 
    \begin{enumerate}
        \item $\bigotimes_{j=1}^{K} H^{(L)}_{j}$ is transversal at levels $0,\dots,L$ and can be implemented by $\bigotimes_{i=1}^{N} H^{(0)}_{i}$.
        \item For any $(a_1,\dots,a_{K}) \in \{1,-1\}^K$, $\bigotimes_{j=1}^{K} \left(S_{j}^{(L)}\right)^{a_j}$ is transversal at levels $0,\dots,L$ and can be implemented by $\bigotimes_{i=1}^{N} \left(S^{(0)}_{i}\right)^{b_i}$ for some $(b_1,\dots,b_{N}) \in \{-1,1\}^{N}$.
    \end{enumerate}
\end{corollary}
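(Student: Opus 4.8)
The plan is to obtain the corollary as the extreme special case of \cref{thm:main3} with $l=L$ and $m=0$. Under this choice the index data in \cref{thm:main3} degenerate: the singleton lists $\{j_1\},\dots,\{j_m\}$ and $\{i_{l+1}\},\dots,\{i_L\}$ are empty, while $[k_{m+1}],\dots,[k_l]$ becomes the full list $[k_1],\dots,[k_L]$. Consequently $U_H^{(L)}\!\left([k_1],\dots,[k_L]\right)$ is exactly the operator applying $H^{(L)}$ to every one of the $K=\prod_{i}k_i$ logical qubits at the top level, i.e. (after relabelling the multi-index by a single index $j\in[K]$) it is $\bigotimes_{j=1}^{K} H^{(L)}_j$; similarly $U_S^{(L)}\!\left([k_1],\dots,[k_L];\mathbf{a}\right)$ with $\mathbf{a}\in\{1,-1\}^{K}$ is $\bigotimes_{j=1}^{K}\bigl(S^{(L)}_j\bigr)^{a_j}$. \cref{thm:main3} then immediately yields that both operators are transversal at levels $0,\dots,L$, which is the first clause of each item.

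To pin down the concrete level-$0$ decomposition I would unroll the recursion inside the proof of \cref{thm:main3} rather than merely quote its statement. In the Hadamard case, each step of that recursion replaces $U_H^{(l')}(\dots,[k_{l'}],\dots)$ by $U_H^{(l'-1)}(\dots,[n_{l'}],\dots)$ using \cref{def:compatible_basis}(1), namely $\bigotimes_{j}\bar H_j=\bigotimes_{i}H_i$ applied on every $\mathcal{Q}_{l'}$-block sitting at level $l'-1$; chaining these equalities from $l'=L$ down to $l'=1$ gives $U_H^{(L)}([k_1],\dots,[k_L])=U_H^{(0)}([n_1],\dots,[n_L])=\bigotimes_{i=1}^{N}H^{(0)}_i$, proving item~1. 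For item~2 the analogous recursion invokes \cref{def:compatible_basis}(2): at each level the phase-type transversal gate on a block is rewritten as $\bigotimes_i S_i^{b_i}$ for a sign pattern depending on that block and on the requested $\mathbf{a}$; composing over the $L$ levels assembles a global sign vector $(b_1,\dots,b_N)\in\{-1,1\}^{N}$ with $\bigotimes_{j=1}^{K}\bigl(S^{(L)}_j\bigr)^{a_j}=\bigotimes_{i=1}^{N}\bigl(S^{(0)}_i\bigr)^{b_i}$, proving item~2.

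Since this is a direct corollary, I do not expect a genuine obstacle; the only point needing care is bookkeeping, not new ideas. Specifically, one must check that the single-qubit gates produced at level $0$ by the recursion are literally physical Hadamard gates (resp. physical $S^{\pm1}$ gates) and not some basis-dependent conjugates — this is precisely what \cref{def:compatible_basis} guarantees at each intermediate level — and, in the phase-type case, one should note that $\mathbf{b}$ is not canonical: it depends on the compatible symplectic bases chosen for $\mathcal{Q}_1,\dots,\mathcal{Q}_L$ and on how the $\bar Z$-corrections are distributed across levels, but only its existence is asserted. Finally, a one-line remark suffices to record that $L\geq 1$ (so that $m=0<l=L$ is an admissible pair for \cref{thm:main3}), which holds by \cref{def:concat_code}.
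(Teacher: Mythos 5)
Your proposal is correct and matches the paper's intent exactly: the corollary is stated there without a separate proof precisely because it is the specialization $l=L$, $m=0$ of \cref{thm:main3}, with the explicit level-$0$ decompositions read off from the chain of equalities $U_H^{(L)}([k_1],\dots,[k_L])=\dots=U_H^{(0)}([n_1],\dots,[n_L])$ (and its phase-type analogue) in that theorem's proof. Your additional bookkeeping remarks (non-canonicity of $\mathbf{b}$, admissibility of $m=0<l=L$) are accurate but not needed beyond what the paper already establishes.
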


Consider a certain level $l$ and suppose that logical gates at the $l$-th level from \cref{thm:main3} can be transversally implemented by logical gates at the $(l-1)$-th level or lower. We note that these gates alone cannot generate the full logical Clifford group at the $l$-th level, similar to what was previously discussed in \cref{sec:definitions}. This is because each of these gates simultaneously applies the same Clifford operation to many logical qubits at the $l$-th level. To achieve the full logical Clifford group, additional addressable logical gates at the $l$-th level such as the ones in \cref{prop:full_Clifford} are required.

\section{Applications of compatible symplectic bases and Clifford gates with multilevel transversality} \label{sec:applications}

So far, we have proposed a concept of compatible symplectic basis for a self-dual CSS code, a basis of logical $X$ and logical $Z$ operators such that logical Clifford operators $\bigotimes_{j=1}^{k} \bar{H}_{j}$ and $\bigotimes_{j=1}^{k} \bar{S}_{j}^{a_j}$ for any $a_j \in \{-1,1\}$ can be fault-tolerantly implemented by corresponding transversal gates at the physical level, and show that a compatible symplectic basis exists if the code satisfies \cref{thm:main1}. With implementations of some additional addressable logical gates through a process such as the teleportation-based FTQC scheme in \cref{app:gate_TP}, the full logical Clifford group of the code can be achieved. We also extend the ideas to code concatenation and show that if a concatenated code is constructed from self-dual CSS codes such that each of them has a compatible symplectic basis, then certain logical gates at each level of the concatenated code can be implemented transversally by logical gates at a lower level. In this section, we demonstrate some applications of compatible symplectic bases and Clifford gates with multilevel transversality in FTQC.

\subsection{Explicit construction of symplectic bases for Yamasaki-Koashi FTQC scheme on a concatenated quantum Hamming code}

In Ref. \cite{YK24}, Yamasaki and Koashi have proposed a time-efficient constant-space-overhead FTQC scheme which is operated on a concatenated quantum Hamming code. The code is constructed from concatenating \codepar{2^m-1,2^m-1-m,3} quantum Hamming codes with growing $m$. Their construction of the FTQC scheme and the proofs of fault tolerance relies on the assumption that logical Hadamard gates $\bigotimes_{j=1}^{k} \bar{H}_{j}$ can be implemented by transversal physical Hamadard gates $\bigotimes_{i=1}^{n} H_{i}$. It is obvious that $\bigotimes_{i=1}^{n} H_{i}$ preserves the stabilizer group, thus it is a logical operator of some type. However, its corresponding logical operation is not obvious and depends on the symplectic basis that define logical $X$ and $Z$ operators, as we have demonstrated with an example in \cref{sec:qHamming}.

The main results from this work could complement the FTQC scheme in Ref. \cite{YK24}. In particular, since any \codepar{2^m-1,2^m-1-m,3} quantum Hamming code is a self-dual CSS code of odd length, by \cref{cor:odd_n}, a compatible symplectic basis of the code exists and can be constructed using the procedure provided in \cref{app:python}. Therefore, if logical Pauli operators for a quantum Hamming code at any level are defined with a compatible symplectic basis, the operator $\bigotimes_{j=1}^{k} \bar{H}_{j}$ of the code can be implemented by $\bigotimes_{i=1}^{n} H_{i}$, and the FTQC scheme in Ref. \cite{YK24} could operate as it is originally intended.

We point out that our results also provide possibilities to extend the FTQC scheme in Ref. \cite{YK24} to a broader family of concatenated codes; Any $\mathcal{Q}_i$, $i \in [L]$ of a concatenated code $\mathcal{Q}_\mathrm{con}^{(L)} = \mathcal{Q}_L \circ \cdots \circ \mathcal{Q}_1$ could be any self-dual CSS code satisfying \cref{thm:main1} because $\bigotimes_{j=1}^{k} \bar{H}_{j}$ on each $\mathcal{Q}_i$ can be implemented by $\bigotimes_{i=1}^{n} H_{i}$ and the FTQC scheme still works. However, the scaling of space and time overhead depends on the family of codes being used in the concatenation, and the results may differ from those of concatenated quantum Hamming codes. We also point out possibilities that the FTQC scheme in Ref. \cite{YK24} could be optimized since for any $\mathcal{Q}_i$, $\bigotimes_{j=1}^{k} \bar{S}_{j}^{a_j}$ for any $a_j \in \{-1,1\}$ could be implemented by transversal physical $S$ and $S^\dagger$ gates. We leave this optimization for future work.

\subsection{Gate conversion and optimization for concatenated self-dual CSS codes} \label{subsec:app_concat}

Consider a concatenated code $\mathcal{Q}_\mathrm{con}^{(L)} = \mathcal{Q}_L \circ \cdots \circ \mathcal{Q}_1$ and suppose that any $\mathcal{Q}_l$, $l \in [L]$ is a self-dual CSS code satisfying \cref{thm:main1}. By \cref{thm:main2,thm:main3}, we know that various logical gates of $\mathcal{Q}_\mathrm{con}^{(L)}$ are transversal at multiple levels; i.e., for such a logical gate $U^{(l)}$ acting on logical qubits at the $l$-th level, there exist multiple transversal decompositions, each in terms of gates acting on logical qubits at a level lower than $l$. A gate operating at a lower level is more favorable compared to a gate operating at a higher level as the former one is considered \emph{cheaper}; a preparation of an ancilla state required for implementing a gate at a lower level is generally more simple and consume less space and time overhead to ensure fault tolerance. A gate at a lower level is also \emph{more fault tolerant} in the sense that if faults occur in the implementation of a gate at a lower level, logical errors are limited to the level of implementation and could be removed with error correction gadgets operating on the same level, which also consume less space and time overhead compared to ones for a higher level. Therefore, if a logical gate is transversal at multiple levels, in general, we would like to implement it with gates operating on the lowest possible level.

The fact that a compatible symplectic basis exists for each $\mathcal{Q}_l$ provide us many possible ways to optimize circuits for implementing certain logical gates. We provide below some examples of gate identities which could help converting different logical gates and could possibly minimize the number of gates that require additional process for implementation (such as the FTQC scheme in \cref{app:gate_TP}), reducing the required space and time overhead. Note that what we provide is not an exhaustive list of possible conversions. For simplicity of the description, here we treat an \codepar{n_l,k_l,d_l} code $\mathcal{Q}_l$ as a 1-level concatenated code, so in a compatible symplectic basis, $U_H^{(l)}([k_l])=\bigotimes_{j=1}^{k_l} H^{(l)}_{j}$ can be implemented by $\bigotimes_{i=1}^{n_l} H^{(l-1)}_{i}$, and for any $\mathbf{a} \in \{-1,1\}^{k_l}$, $U_S^{(l)}([k_l];\mathbf{a})=\bigotimes_{j=1}^{k_l} \left(S_{j}^{(l)}\right)^{a_j}$ can be implemented by $\bigotimes_{i=1}^{n_l} \left(S^{(l-1)}_{i}\right)^{b_i}$ for some $(b_1,\dots,b_{n_l}) \in \{-1,1\}^{n_l}$. Examples of gate conversions are shown below.

\begin{enumerate}
    \item For any $A \subseteq [k_l]$, 
    \begin{equation}
        U_H^{(l)}(A) = U_H^{(l)}([k_l])U_H^{(l)}([k_l]\setminus A) = \left(\bigotimes_{i=1}^{n_l} H^{(l-1)}_{i}\right)U_H^{(l)}([k_l]\setminus A).
    \end{equation}
    \item For any $A \subseteq [k_l]$, 
    \begin{equation}
        U_S^{(l)}(A;\mathbf{1}) = U_S^{(l)}([k_l];\mathbf{1})U_S^{(l)}([k_l]\setminus A;-\mathbf{1}) = \left(\bigotimes_{i=1}^{n_l}\left(S^{(l-1)}_{i}\right)^{b_i} \right) U_S^{(l)}([k_l]\setminus A;-\mathbf{1}),
    \end{equation}
    for some $(b_1,\dots,b_{n_l}) \in \{-1,1\}^{n_l}$.
    \item For any $j \in [k_l]$, 
    \begin{equation}
        H_j^{(l)} S_j^{(l)} H_j^{(l)} = U_H^{(l)}([k_l])S_j^{(l)}U_H^{(l)}([k_l]) = \left(\bigotimes_{i=1}^{n_l} H^{(l-1)}_{i}\right) S_j^{(l)} \left(\bigotimes_{i=1}^{n_l} H^{(l-1)}_{i}\right).
    \end{equation}
    \item For any $j \in [k_l]$, 
    \begin{align}
        H_j^{(l)} S_j^{(l)} H_j^{(l)} &= S_j^{\dagger(l)} H_j^{(l)}S_j^{\dagger(l)} = U_S^{(l)}([k_l];-\mathbf{1}) H_j^{(l)} U_S^{(l)}([k_l];(1,\dots,1,\underbracket{-1}_\text{$j$-th},1,\dots,1)) \nonumber \\
        &= \left(\bigotimes_{i=1}^{n_l}\left(S^{(l-1)}_{i}\right)^{b_i} \right) H_j^{(l)} \left(\bigotimes_{i=1}^{n_l}\left(S^{(l-1)}_{i}\right)^{c_i} \right),
    \end{align}
    for some $(b_1,\dots,b_{n_l}),(c_1,\dots,c_{n_l}) \in \{-1,1\}^{n_l}$.
    \item For any $j \in [k_l]$, 
    \begin{align}
    S_j^{(l)} &= H_j^{(l)} S_j^{\dagger(l)} H_j^{(l)} S_j^{\dagger(l)} H_j^{(l)}= U_H^{(l)}([k_l])U_S^{(l)}([k_l];-\mathbf{1}) H_j^{(l)} U_S^{(l)}([k_l];(1,\dots,1,\underbracket{-1}_\text{$j$-th},1,\dots,1)) U_H^{(l)}([k_l]) \nonumber \\
    &= \left(\bigotimes_{i=1}^{n_l} H^{(l-1)}_{i}\right) \left(\bigotimes_{i=1}^{n_l}\left(S^{(l-1)}_{i}\right)^{b_i} \right) H_j^{(l)} \left(\bigotimes_{i=1}^{n_l}\left(S^{(l-1)}_{i}\right)^{c_i} \right) \left(\bigotimes_{i=1}^{n_l} H^{(l-1)}_{i}\right),
    \end{align}
    for some $(b_1,\dots,b_{n_l}),(c_1,\dots,c_{n_l}) \in \{-1,1\}^{n_l}$.
\end{enumerate}
Here we use the identity $HSHSHS = I$ up to some global phase. Circuit diagrams corresponding the examples above are provided in \cref{fig:opt_tricks_1} (where qubits and gates in each diagram correspond to logical gates and logical qubits of the code).

\begin{figure}[htbp]
  \centering
  \begin{subfigure}[b]{0.35\textwidth}
    \centering
    \includegraphics[scale=1]{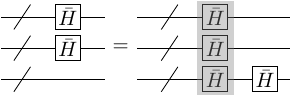}
    \caption{}
  \end{subfigure}
  \vspace{1cm}
  \begin{subfigure}[b]{0.35\textwidth}
    \centering
    \includegraphics[scale=1]{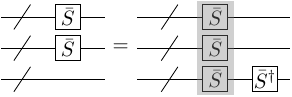}
    \caption{}
  \end{subfigure}
  \vspace{1cm}
  \begin{subfigure}[b]{0.9\textwidth}
    \centering
    \includegraphics[scale=1]{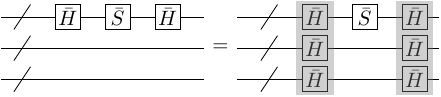}
    \caption{}
  \end{subfigure}
  \vspace{1cm}
  \begin{subfigure}[b]{0.9\textwidth}
    \centering
    \includegraphics[scale=1]{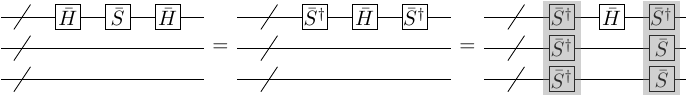}
    \caption{}
  \end{subfigure}
  \vspace{1cm}
  \begin{subfigure}[b]{0.9\textwidth}
    \centering
    \includegraphics[scale=1]{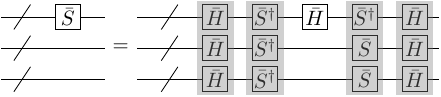}
    \caption{}
  \end{subfigure}
	\caption{Examples of possible gate conversions given that Hadamard-type and phase-type logical-level transversal gates can be implemented by physical-level transversal gates. Diagrams (a)-(e) correspond to Examples 1-5 in \cref{subsec:app_concat}, respectively.}
	\label{fig:opt_tricks_1}%
\end{figure}

In Examples 1 and 2, the number of gates at the $l$-th level to be implemented is $k_l-|A|$ instead of $|A|$, which could be smaller than $|A|$ if $|A| > k_l/2$. In Examples 3, 4, and 5, the number of gates at the $l$-th level to be implemented is reduced to 1 from 3, 3, and 5, respectively. In all examples, by inserting appropriate gates at the $(l-1)$-th level, one might be able to further convert all gates at the $(l-1)$-th level to gates at the physical level. These examples demonstrate possibilities to optimize space and time overhead when a quantum code has a compatible symplectic basis.

Furthermore, it is possible to show that in some cases, a product of two logical gates which are transversal at level $m$ could give a logical gate which is transversal at level $m' < m$. As an example, let $\mathcal{Q}_\mathrm{con}^{(L)} = \mathcal{Q}_2 \circ \mathcal{Q}_1$ be a concatenated code where $\mathcal{Q}_1$ is the \codepar{15,7,3} code and $\mathcal{Q}_2$ is the \codepar{6,2,2} code, and consider a product of two gates $U_1$ and $U_2$ whose supporting logical qubits are displayed in \cref{fig:opt_tricks_2}. $U_1$ is a Hadamard-type transversal gate at the second level, which is also transversal at the first level (by \cref{thm:main3}). Meanwhile, $U_2$ is a Hadamard-type transversal gate which is transversal at the first level only. The product $U_1 U_2$ is a Hadamard-type transversal gate which is transversal at the zeroth and the first levels. As a result, $U_1 U_2$ can be implemented using only physical gates, and additional processes for implementing high-level logical gates are not required.

\begin{figure}[htbp]
	\centering
	\includegraphics[width=0.78\textwidth]{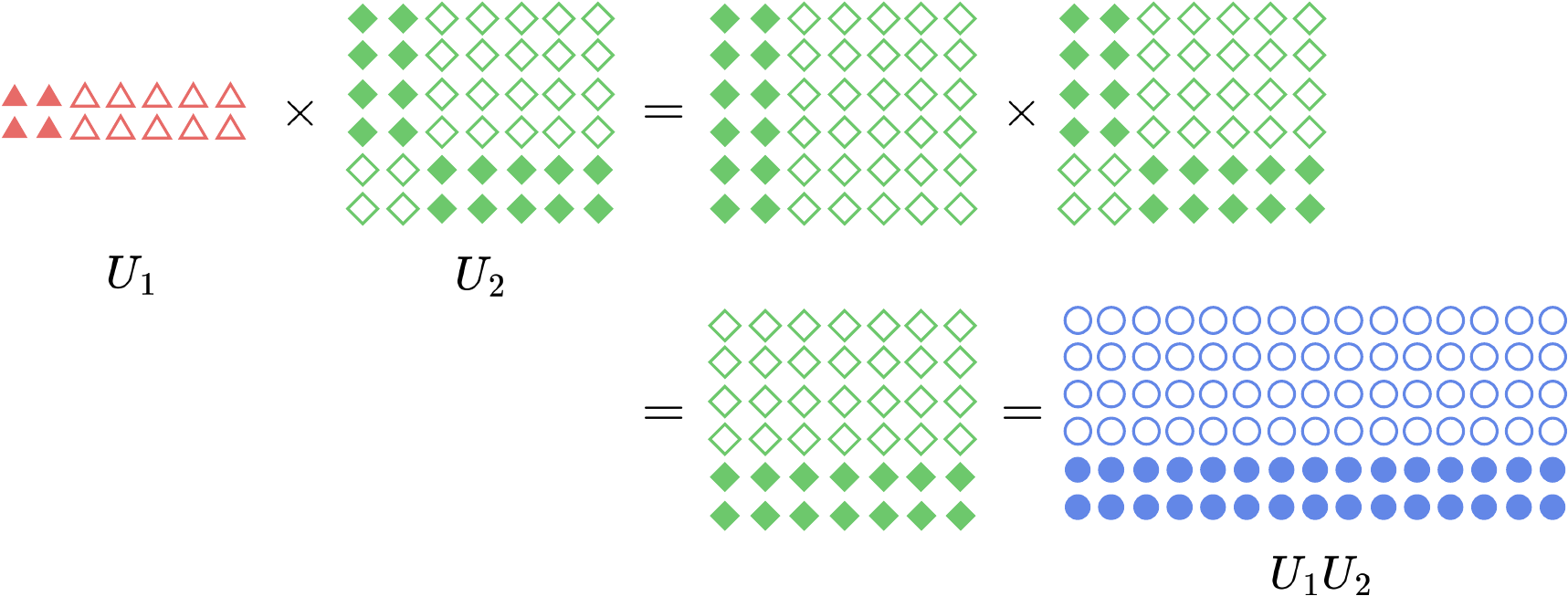}
	\caption{A product of two logical gates which are transversal at level $m$ could give a logical gate which is transversal at level $m' < m$. Consider a concatenated code $\mathcal{Q}_2 \circ \mathcal{Q}_1$ constructed from the \codepar{15,7,3} code $\mathcal{Q}_1$ and the \codepar{6,2,2} code $\mathcal{Q}_2$ as an example. Here, $U_1$ is a Hadamard-type transversal gate at the second level acting on the filled triangle, and $U_2$ is a Hadamard-type transversal gate at the first level acting on the filled diamonds. The product of $U_1$ and $U_2$ is a logical gate that can be implemented by transversal physical gates.}
	\label{fig:opt_tricks_2}%
\end{figure}

We point out that the situations similar to our example where high-level and low-level logical gates are applied consecutively are not uncommon. In an actual implementation of logical gates on a concatenated code, each logical gate at the $l$-th level should be accompanied by QEC gadgets to control the error propagation \cite{AGP06}. Such QEC gadgets are implemented by logical gates at the $(l-1)$-th level, which are also accompanied by error correction gadgets at a lower level. In such situations, gate conversions according to \cref{thm:main2,thm:main3} could lead to potential savings in space and time overhead. 

\subsection{Resource simplification for teleportation-based fault-tolerant quantum computation} \label{subsec:app_TPQC}

The teleportation-based FTQC scheme proposed by Brun et al. \cite{BZHJL15}, which we review in \cref{app:gate_TP}, can be used to perform any logical Clifford gate on a single block of an \codepar{n,k,d} CSS code. The scheme utilizes Steane FTM scheme \cite{Steane99}, so each logical Pauli measurement requires a specific ancilla state, which can be prepared fault-tolerantly and efficiently by the method proposed in Ref. \cite{ZLB18}. Logical Pauli measurements required to perform basic logical Clifford gates $\bar{H}_i$, $\bar{S}_i$, $\overline{\mathrm{CNOT}}_{ij}$, and $\overline{\mathrm{SWAP}}_{ij}$ on any logical qubit (or any pair of logical qubits) on a block can be described by an overcomplete set $\{\bar{X}_{i},\bar{Y}_{i},\bar{Z}_{i},\bar{X}_{i}\bar{X}_{j},\bar{Y}_{i}\bar{Y}_{j},\bar{Z}_{i}\bar{Z}_{j},\bar{X}_{i}\bar{Y}_{j},\bar{Y}_{i}\bar{Z}_{j},\bar{X}_{i}\bar{Z}_{j},$ $\bar{X}_{i}\bar{X}_{j}\bar{X}_{l},\bar{Z}_{i}\bar{Z}_{j}\bar{Z}_{l}\}_{i,j,l\in[k]}$.

If the code being used is a self-dual CSS code satisfying \cref{thm:main1}, our main results could simplify the types of required ancilla states. In particular, suppose that a logical Pauli operator $\bar{P}$ (which can act on multiple logical qubits) is to be measured. Measuring $\bar{P}$ on an input state $\ket{\bar{\psi}}$ corresponds to projecting the state with projection operators $\Pi_{\pm}\!\left(\bar{P}\right) = \frac{\mathbb{1} \pm \bar{P}}{2}$. The resulting state is either $\left(\frac{\mathbb{1} + \bar{P}}{2}\right)\ket{\bar{\psi}}$ or $\left(\frac{\mathbb{1} - \bar{P}}{2}\right)\ket{\bar{\psi}}$, depending on the measurement outcome. Let $\bar{U}$ be a unitary operator, and suppose that we apply $\bar{U}^\dagger$ to the input state $\ket{\bar{\psi}}$ and apply $\bar{U}$ to the state after measurement, as illustrated in \cref{fig:meas_conversion}. Then, the resulting state is,
\begin{align}
    \bar{U}\Pi_{\pm}\!\left(\bar{P}\right)\bar{U}^\dagger\ket{\bar{\psi}} &= \bar{U} \left(\frac{\mathbb{1} \pm \bar{P}}{2}\right) \bar{U}^\dagger\ket{\bar{\psi}} \nonumber \\
    &= \frac{\mathbb{1} \pm \bar{U}\bar{P}\bar{U}^\dagger}{2} \ket{\bar{\psi}} \nonumber \\
    &= \Pi_{\pm}\!\left(\bar{U}\bar{P}\bar{U}^\dagger\right)\ket{\bar{\psi}}.
\end{align}
That is, applying $\bar{U}^\dagger$ before and $\bar{U}$ after measuring $\bar{P}$ is equivalent to measuring an operator $\bar{U}\bar{P}\bar{U}^\dagger$.

\begin{figure}[htbp]
	\centering
	\includegraphics[scale=1]{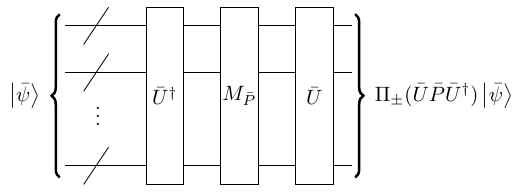}
	\caption{Inserting $\bar{U}^\dagger$ before and $\bar{U}$ after measuring a logical Pauli operator $\bar{P}$ is equivalent to measuring a logical operator $\bar{U}\bar{P}\bar{U}^\dagger$.}
	\label{fig:meas_conversion}%
\end{figure}

Suppose that the self-dual CSS code being used has a compatible symplectic basis. By definition, $\bar{H}^{\otimes k}$ can be implemented by $H^{\otimes n}$, and $\bar{S}^{\otimes k}$ can be implemented by some combination of physical $S$ and $S^\dagger$ gates. The implementations of these operators are fault tolerant, can be done with constant time overhead, and require no additional ancilla qubits. Thus, we can use these transversal logical operators to convert logical Pauli operators to be measured fault-tolerantly and efficiently. With this kind of conversion, the same ancilla state can be used for multiple types of logical Pauli measurements. For example, $\bar{Z}_i$ ($\bar{Y}_i$) can be measured using the same ancilla state as $\bar{X}_i$ by using $U=\bar{H}^{\otimes k}$ ($U=\bar{S}^{\otimes k}$) in the conversion. Transformations for two- and three-qubit logical Pauli measurements are described below (up to $\pm 1$ phase).
\begin{equation}
    \bar{Z}_i\bar{Z}_j \xleftrightarrow{\bar{H}^{\otimes k}}  \bar{X}_i\bar{X}_j \xleftrightarrow{\bar{S}^{\otimes k}} \bar{Y}_i\bar{Y}_j 
\end{equation}
\begin{equation}
    \bar{Z}_i\bar{Z}_j\bar{Z}_l \xleftrightarrow{\bar{H}^{\otimes k}}  \bar{X}_i\bar{X}_j\bar{X}_l \xleftrightarrow{\bar{S}^{\otimes k}} \bar{Y}_i\bar{Y}_j\bar{Y}_l  
\end{equation}
\begin{equation}
    \bar{X}_i\bar{Z}_j \xleftrightarrow{\bar{S}^{\otimes k}} \bar{Y}_i\bar{Z}_j \xleftrightarrow{\bar{H}^{\otimes k}} \bar{Y}_i\bar{X}_j \xleftrightarrow{\bar{S}^{\otimes k}} \bar{X}_i\bar{Y}_j \xleftrightarrow{\bar{H}^{\otimes k}} \bar{Z}_i\bar{Y}_j \xleftrightarrow{\bar{S}^{\otimes k}} \bar{Z}_i\bar{X}_j
\end{equation}
We believe that this simplification of types of required ancilla states could be helpful in the architectural design of quantum computers based on logical teleportation.

It should be noted that any logical Clifford circuit could be implemented fault-tolerantly in constant time by the teleportation-based FTQC scheme proposed in Ref. \cite{ZLBK20}. In that work, the authors proposed a way to decompose any Clifford circuit into a constant number of parts, then implement each part with some clean ancilla states. As noted by the authors, their method turns the complexity of the logical Clifford circuit into the complexity of preparation of the required ancilla states. The required ancilla state for each part of decomposition depends on the supporting qubits of gates in the part, and a fault-tolerant protocol to prepare such ancilla states are also provided in the same work. We point out that it is possible that in some special cases, our implementation of $\bar{H}^{\otimes k}$ and $\bar{S}^{\otimes k}$ could be used to simplify the required ancilla states. However, this application might not be possible in a general case since gates from different parts can act on different sets of logical qubits.

\section{Discussion and conclusions} \label{sec:conclusion}

Transversal implementation of logical gates is a desirable property for any QECC as it provides a way to perform FTQC with constant time overhead and without additional ancilla qubits. On a QECC that encodes more than one logical qubit, the action of certain transversal physical gates could depend on the symplectic basis that define logical Pauli operators of the code. In this work, we study self-dual CSS codes and try to find symplectic basis such that logical Hadamard gates and logical phase gates of any form can be implemented by transversal physical gates of similar types, which we call a compatible symplectic basis (\cref{def:compatible_basis}). If such a basis exists, any logical Clifford gate that acts on all logical qubits in a code block simultaneously (including $\bar{H}^{\otimes k}$ and $\bar{S}^{\otimes k}$ on a single code block and $\overline{\mathrm{CNOT}}^{\otimes k}$ between two code blocks) can be implemented by transversal physical gates.

Our first main result is necessary and sufficient conditions for any \codepar{n,k,d} self-dual CSS code with $k \geq 1$ to have a compatible symplectic basis in \cref{thm:main1}. We also provide a procedure to construct such a compatible symplectic basis of the code if it exists in \cref{app:python}. In addition, we show in \cref{cor:odd_n} that any \codepar{n,k,d} self-dual CSS code with $k \geq 1$ and odd $n$ satisfy the necessary and sufficient conditions. The family of codes includes any quantum Hamming code and any binary qBCH code provided in Ref. \cite{GB99}. 

A direct application of our first result is a construction of symplectic bases for quantum Hamming codes which allow the time-efficient constant-space-overhead FTQC scheme by Yamasaki and Koashi \cite{YK24} to work as intended. Our results also provide possibilities to optimize their FTQC scheme using available transversal logical phase gates, and to generalize the scheme to a broader class of concatenated codes such as codes obtained by concatenating self-dual CSS codes that satisfy \cref{thm:main1}. However, whether the scheme for such codes can obtained the same scaling for space and time overhead is has yet to be proved. 

Another application of our first result is a simplification of the ancilla states required for the teleportation-based FTQC schemes by Brun et al. \cite{BZHJL15}. The FTQC scheme is based on Steane FTM scheme \cite{Steane99} which consumes different ancilla states for different logical Pauli measurements. With transversal logical Hadamard and transversal logical phase gates available, some logical Pauli measurements can be converted to one another, allowing them to use the same ancilla state. This simplification could lead to a simpler architecture of quantum computers based on logical teleportation. We believe that the fact that several transversal logical gates can be implemented by transversal physical gates could provide ways to optimizing other existing fault-tolerant protocols for QECCs with high encoding rate, or even lead to new efficient fault-tolerant protocols for such codes. 

Our second main result is an extension of the first result on transversal gates to concatenated codes, and concrete definitions of Clifford gates with multilevel transversality. We prove in \cref{thm:main2,thm:main3} that by concatenating self-dual CSS codes that satisfy the necessary and sufficient conditions in \cref{thm:main1}, certain logical Clifford gates of the resulting code can be implemented transversally by logical gates at a lower level of concatenation in many different ways. These gates include transversal logical Hadamard and transversal logical phase gates that act on all logical qubits at the top level of concatenation.

With the second result, we demonstrate several circuit identities which can be used to optimize the number of logical gates at each level of concatenation. We also show that a product of some transversal logical gates which are transversal at a certain level could lead to a transversal implementation at lower level (which is not achievable by each individual logical gate). We believe that these tools could be useful for an optimization of space and time overhead in fault-tolerant protocols for concatenated codes.

We note that with only transversal logical Clifford gates allowed by a compatible symplectic basis, one might not be able to achieve the full logical Clifford group defined on all logical qubits across all code blocks if the number of logical qubits in each code block is greater than 1. Some choices of additional addressable logical Clifford gates that can be used together with available transversal gates to achieve the full Clifford group are proposed in \cref{prop:full_Clifford}. We note that these choices of additional gates might not be optimal, and some specific families of self-dual CSS codes may admit a larger set of logical gates that can be implemented through transversal gates, fold-transversal gates, or any other techniques. We leave the study of available addressable logical gates for specific families of self-dual CSS codes for future work.

\textit{Note added}: After completing our manuscript, we noticed that some parts of our necessary and sufficient conditions in \cref{thm:main1} are similar to the results independently discovered by Haah et al. \cite{HHPW17}; A normal magic basis defined in Ref. \cite{HHPW17} is a symplectic basis $\{(\bar{X}_j,\bar{Z}_j)\}_{j\in[k]}$ such that $\mathrm{supp}(\bar{X}_j)=\mathrm{supp}(\bar{Z}_j)$ for all $j \in [k]$, which is equivalent to a compatible symplectic basis in our work by \cref{thm:main1}. Meanwhile, Theorem 3.4 of Ref. \cite{HHPW17} is equivalent to our \cref{lem:alg,lem:basis_change} combined, and Theorem 3.5 of Ref. \cite{HHPW17} is similar to the proof $(3 \Rightarrow 4)$ of our \cref{thm:main1} for transversal logical Hadamard gates. However, there are several differences between our work and Ref. \cite{HHPW17}: (1) While some parts of our proofs of \cref{thm:main1} and relevant lemmas can be viewed as alternative proofs to Theorems 3.4 and 3.5 of Ref. \cite{HHPW17}, we also provide an explicit procedure to construct a compatible symplectic basis for any self-dual CSS code if it exists, which is arguably simpler than the method in the proof of Theorem 3.4 of Ref. \cite{HHPW17}. (2) In addition to transversal logical Hadamard gates from transversal physical Hadamard gates, our definition of compatible symplectic bases also considers transversal logical $S$ and $S^\dagger$ gates from transversal physical $S$ and $S^\dagger$ gates. We prove in \cref{lem:stb_preserving_S} that any self-dual CSS code has transversal physical $S$ and $S^\dagger$ gates that preserve the code space, then prove in $(3 \Rightarrow 4)$ of \cref{thm:main1} that transversal logical $S$ and $S^\dagger$ gates of any form can be constructed if the code has a compatible symplectic basis. (3) A Python implementation of our procedure to construct a compatible symplectic basis and transversal logical $S$ and $S^\dagger$ gates of any form is also provided in this work.

\section{Author contributions}

Theerapat Tansuwannont led the project, contributed to all mathematical statements and most of the applications of the results, and wrote the majority of the paper. Yugo Takada contributed to the Python implementation of the procedure for constructing a compatible symplectic basis and transversal logical phase gates for a self-dual CSS code, wrote \cref{app:python}, and made all figures in the paper. Keisuke Fujii provided helpful ideas that lead to the construction of transversal logical phase gates and the application of the results to teleportation-based fault-tolerant quantum computation, acquired the funding to support the project, and oversaw the project progress.

\section{Acknowledgements}

We thank Hayata Yamasaki and Satoshi Yoshida for useful discussions on the preliminary results for quantum Hamming codes. We also thank Victor Albert, Ken Brown, and members of Duke Quantum Center for fruitful discussions on possible future works. This work is supported by MEXT Quantum Leap Flagship Program (MEXT Q-LEAP) Grant No. JPMXS0120319794, JST COI-NEXT Grant No. JPMJPF2014, and JST Moonshot R\&D Grant No. JPMJMS2061.

\appendix

\section{Proof of Proposition 1}\label{app:full_Clifford}

Let $\mathcal{Q}$ be an \codepar{n,k,d} stabilizer code and suppose that there are $m$ blocks of $\mathcal{Q}$.
Also, let $\bar{G}_{p:j}$ denote a logical single-qubit gate acting on the $j$-th logical qubit of the $p$-th code block, and let $\bar{F}_{p:j,q:l}$ denote a logical two-qubit gate acting on the $j$-th logical qubit of the $p$-th code block and the $l$-th logical qubit of the $q$-th code block. Suppose that logical gates $\bigotimes_{j=1}^{k} \bar{H}_{p,j}$ and $\bigotimes_{j=1}^{k} \bar{S}_{p,j}^{a_j}$ for any $(a_1,\dots,a_k)\in\{-1,1\}^k$ can be implemented any code block, and suppose that logical gates $\bigotimes_{j=1}^{k} \overline{\mathrm{CNOT}}_{p:j,q:j}$ can be implemented between any pair of code blocks; i.e., the set of logical gates in 1) is available. We want to show that by adding some additional gates, any logical Clifford gate acting on any logical qubits between any code blocks can be implemented; i.e., the full logical Clifford group $\bar{\mathcal{C}}_{mk}$ across all code blocks can be achieved. 

First, we show that the full logical Clifford group $\bar{\mathcal{C}}_{k}$ on any code block can be achieved by adding one set of logical gates in 2). Here we use the fact that $\mathcal{C}_{k}$ is generated by $\{H_i,S_i,\mathrm{CNOT}_{ij}\}_{i,j \in [k]}$.

Case 2.1): Assume that $\bar{H}_{p,j}$ on any logical qubit $j$ of a code block $p$ and $\overline{\mathrm{CNOT}}_{p:j,p:l}$ on any pair of logical qubits $j$ and $l$ of a code block $p$ can be implemented. We can show that $\bar{S}_{p,j}$ on any logical qubit $j$ of a code block $p$ can be implemented. The following circuit diagram describes an implementation of $\bar{S}_{p,1}$:
\begin{equation}
    \includegraphics[scale=1]{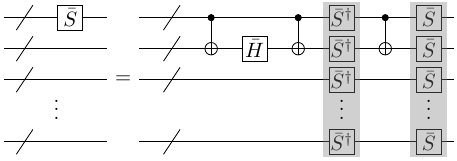}
\end{equation}
With $\bar{H}_{p,j}$, $\bar{S}_{p,j}$, and $\overline{\mathrm{CNOT}}_{p:j,p:l}$, the full Clifford group on each of the code block can be achieved.

Case 2.2): Assume that $\bar{S}_{p,j}$ on any logical qubit $j$ of a code block $p$ and $\overline{\mathrm{CNOT}}_{p:j,p:l}$ on any pair of logical qubits $j$ and $l$ of a code block $p$ can be implemented. We can show that $\bar{H}_{p,j}$ on any logical qubit $j$ of a code block $p$ can be implemented. The following circuit diagram describes an implementation of $\bar{H}_{p,1}$:
\begin{equation}
    \includegraphics[scale=1]{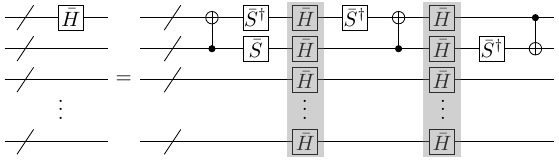}
\end{equation}
With $\bar{H}_{p,j}$, $\bar{S}_{p,j}$, and $\overline{\mathrm{CNOT}}_{p:j,p:l}$, the full Clifford group on each of the code block can be achieved.

Case 2.3): Assume that $\bar{H}_{p,j}$ on any logical qubit $j$ of a code block $p$ and $\overline{\mathrm{CZ}}_{p:j,p:l}$ on any pair of logical qubits $j$ and $l$ of a code block $p$ can be implemented. We can show that $\bar{S}_{p,j}$ on any logical qubit $j$ of a code block $p$, and $\overline{\mathrm{CNOT}}_{p:j,p:l}$ on any pair of logical qubits $j$ and $l$ of a code block $p$ can be implemented. The following circuit diagram describes an implementation of $\bar{S}_{p,1}$:
\begin{equation}
    \includegraphics[scale=1]{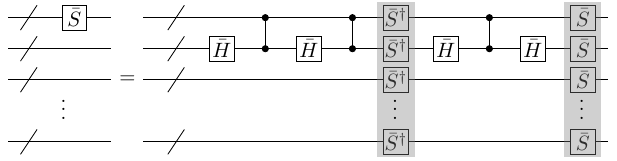}
\end{equation}
The following circuit diagram describes an implementation of $\overline{\mathrm{CNOT}}_{p:1,p:2}$:
\begin{equation}
    \includegraphics[scale=1]{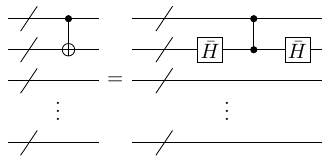}
\end{equation}
With $\bar{H}_{p,j}$, $\bar{S}_{p,j}$, and $\overline{\mathrm{CNOT}}_{p:j,p:l}$, the full Clifford group on each of the code block can be achieved.

Case 2.4): Assume that $\overline{\mathrm{CNOT}}_{p:j,p:l}$ on any pair of logical qubits $j$ and $l$ of a code block $p$, and $\overline{\mathrm{CZ}}_{p:j,p:l}$ on any pair of logical qubits $j$ and $l$ of a code block $p$ can be implemented. We can show that $\bar{H}_{p,j}$ on any logical qubit $j$ of a code block $p$, and $\bar{S}_{p,j}$ on any logical qubit $j$ of a code block $p$ can be implemented. The following circuit diagram describes an implementation of $\bar{H}_{p,1}$:
\begin{equation}
    \includegraphics[scale=1]{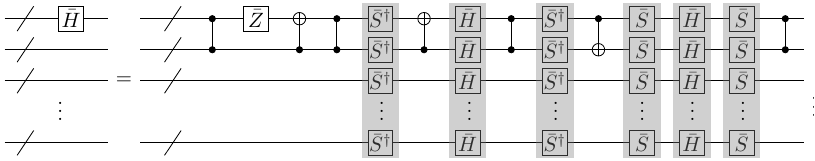}
\end{equation}
The following circuit diagram describes an implementation of $\bar{S}_{p,1}$:
\begin{equation}
    \includegraphics[scale=1]{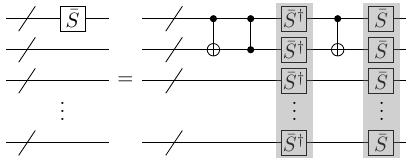}
\end{equation}
With $\bar{H}_{p,j}$, $\bar{S}_{p,j}$, and $\overline{\mathrm{CNOT}}_{p:j,p:l}$, the full Clifford group on each of the code block can be achieved.

In the next step, we show that the full logical Clifford group $\bar{\mathcal{C}}_{mk}$ across all code blocks can be achieved by adding one set of logical gates in 3). 

Case 3.1): Currently, we can implement $\bar{H}_{p,j}$ and $\bar{S}_{p,j}$ on any logical qubit of any code block, so what we need to show is that a logical CNOT gate $\overline{\mathrm{CNOT}}_{p:j,q:l}$ between any pair of logical qubits $j,l$ from any pair of code blocks $p,q$ can be implemented. Here we assume that a logical CNOT gate $\overline{\mathrm{CNOT}}_{p:1,q:1}$ can be implemented between any pair of code blocks $p,q$. Observe that a logical SWAP gate $\overline{\mathrm{SWAP}}_{p:j,p:l}$ between logical qubits $j,l$ of a code block $p$ can be implemented by $\overline{\mathrm{CNOT}}_{p:j,p:l}\overline{\mathrm{CNOT}}_{p:l,p:j}\overline{\mathrm{CNOT}}_{p:j,p:l}$. With $\overline{\mathrm{CNOT}}_{p:1,q:1}$, $\overline{\mathrm{SWAP}}_{p:1,p:j}$, and $\overline{\mathrm{SWAP}}_{q:1,q:l}$, logical gates $\overline{\mathrm{CNOT}}_{p:j,q:l}$ between any pair of logical qubits $j,l$ from any pair of code blocks $p,q$ can be implemented. Therefore, the full logical Clifford group $\bar{\mathcal{C}}_{mk}$ can be achieved.

Case 3.2): In this case, $\overline{\mathrm{CZ}}_{p:1,q:1}$ are available instead of $\overline{\mathrm{CNOT}}_{p:1,q:1}$. We can construct $\overline{\mathrm{CZ}}_{p:1,q:1}$ from $\overline{\mathrm{CNOT}}_{p:1,q:1}$ and $\bar{H}_{q,1}$. The rest of the proof is similar to the one for Case 3.1. \hfill \qedsymbol

\section{Teleportation-based fault-tolerant quantum computation}\label{app:gate_TP}

In this section, we review the teleportation-based FTQC scheme proposed by Brun et al. \cite{BZHJL15}. The scheme is based in Steane FTM scheme \cite{Steane99} and can be viewed as a simplified version of logical gate teleportation. Note that the scheme is applicable only when the code is a CSS code.

We start by reviewing Steane FTEC scheme \cite{Steane97,Steane04}. The scheme for an \codepar{n,k,d} CSS code requires two blocks of ancilla qubits. One block is prepared in the $\ket{\bar{+}}^{\otimes k}$ state and the other is prepared in the $\ket{\bar{0}}^{\otimes k}$ state, which will be called $Z$ and $X$ ancilla blocks, respectively. Transversal physical CNOT gates $\mathrm{CNOT}^{\otimes n}$ (which implement transversal logical CNOT gates $\overline{\mathrm{CNOT}}^{\otimes k}$) are applied from the data block to the $Z$ ancilla block. Also, transversal physical CNOT gates are applied from the $X$ ancilla block to the data block. Afterwards, physical qubits in the $Z$ ancilla block are transversally measured in $Z$ basis. A $Z$-type syndrome for correcting $X$-type errors can be constructed from the measurement results according to the $Z$-type generators (or $Z$ checks) on the CSS code. Similarly, physical qubits in the $X$ ancilla block are transversally measured in $X$ basis, and an $X$-type syndrome for correcting $Z$ errors can be constructed. Lastly, QEC can be done on the data block by applying Pauli operators corresponding to the $Z$- and the $X$-type syndromes.

If the $Z$ ($X$) ancilla block is prepared in the state other than $\ket{\bar{+}}^{\otimes k}$ ($\ket{\bar{0}}^{\otimes k}$), the process that we previously described can induce logical measurement on the data block; this is Steane FTM scheme \cite{Steane99}. For example, if the $i$-th logical qubit in the $Z$ ancilla block is prepared in the $\ket{\bar{0}}$ state and the other logical qubits are prepared in the $\ket{\bar{+}}$ state, and all logical qubits in the $X$ ancilla block are prepared in the $\ket{\bar{0}}$ state, then the process induces $\bar{Z}_i$ measurement on the data block. In contrast, if the $i$-th logical qubit in the $X$ ancilla block is prepared in the $\ket{\bar{+}}$ state and the other logical qubits are prepared in the $\ket{\bar{0}}$ state, and all logical qubits in the $Z$ ancilla block are prepared in the $\ket{\bar{+}}$ state, then the process induces $\bar{X}_i$ measurement on the data block. Measuring $\bar{Y}_i$ and joint logical Pauli operators such as $\bar{X}_i\bar{X}_j$, $\bar{Z}_i\bar{Z}_j$, or $\bar{X}_i\bar{Z}_j$ are possible, but the required ancilla state for some of these logical Pauli measurements must be an entangled state between $Z$ and $X$ ancilla blocks. Stabilizers that described the required ancilla state of these logical Pauli measurements can be found in Ref. \cite{BZHJL15}.

\begin{figure}[htbp]
	\centering
	\begin{subfigure}[b]{0.4\textwidth}
        \includegraphics[height=8cm, keepaspectratio]{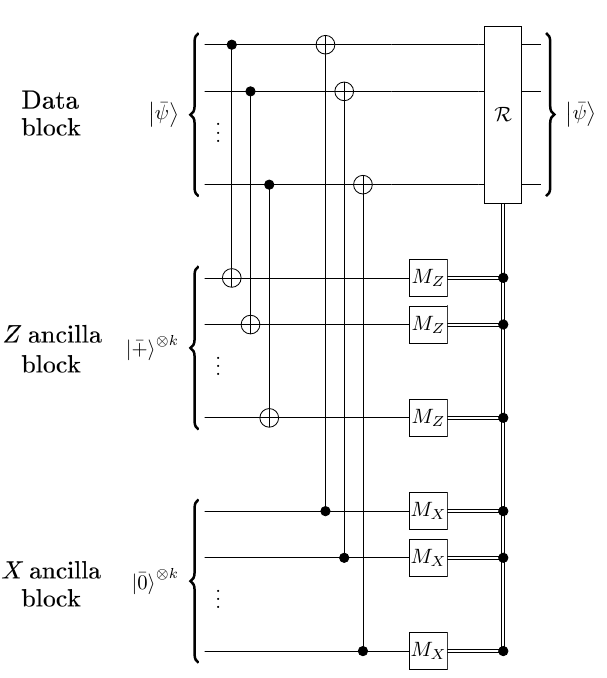}
        \caption{}
    \end{subfigure}
    \begin{subfigure}[b]{0.49\textwidth}
        \includegraphics[height=8cm, keepaspectratio]{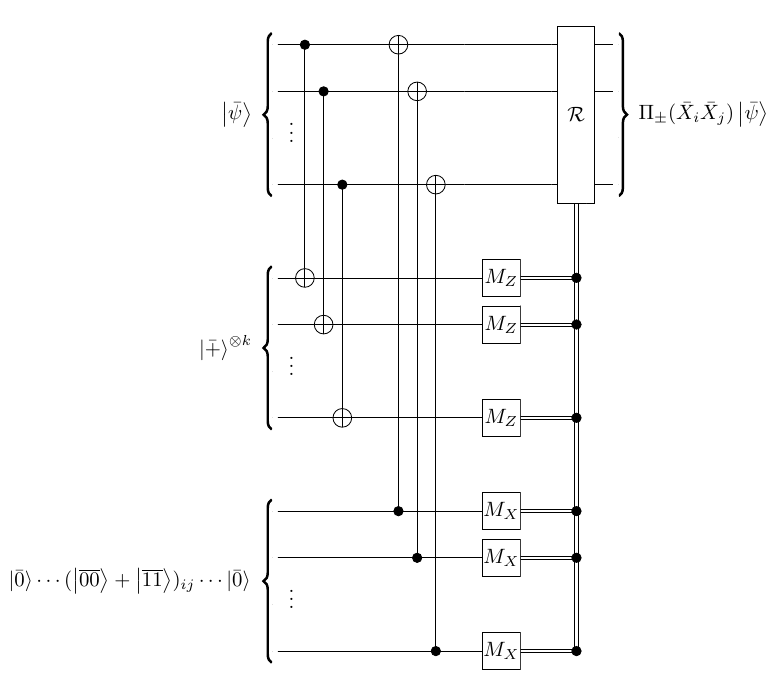}
        \caption{}
    \end{subfigure}
	\caption{(a) Steane FTEC. To measure all stabilizer generators simultaneously, the $Z$ ancilla block is prepared in the $\ket{\bar{+}}^{\otimes k}$ state, and $X$ ancilla block is prepared in the $\ket{\bar{0}}^{\otimes k}$ state. Transversal physical CNOT gates are applied from the data block to the $Z$ ancilla block, and from the $X$ ancilla block to the data block. Then, physical qubits in the $Z$ ($X$) ancilla block are transversally measured in $Z$ ($X$) basis. (b) Steane FTM. Logical Pauli measurement on the code block is measured by preparing the $Z$ ($X$) ancilla block in a state other than $\ket{\bar{+}}^{\otimes k}$ ($\ket{\bar{0}}^{\otimes k}$). In this example, $\bar{X}_i\bar{X}_j$ is measured by preparing the $X$ ancilla block in the state $\ket{\overline{00}}+\ket{\overline{11}}$ between the $i$-th and the $j$-th logical qubits, and in the state $\ket{\bar{0}}$ for other logical qubits.}
    \label{fig:Steane_FTEC_FTM}%
\end{figure}

Next, we describe how FTQC on logical qubits can be done on a single block of an \codepar{n,k,d} CSS code. The scheme in Ref. \cite{BZHJL15} assumes that there are some available logical qubits that can be used as buffer qubits. Suppose that the first logical qubit is used as a buffer qubit and the input state on the remaining $k-1$ logical qubits is $\ket{\bar{\psi}}$. 
\begin{enumerate}
    \item A logical SWAP gate $\overline{\mathrm{SWAP}}_{ij}$ can be implemented by measuring the following operators: $\bar{X}_1\bar{X}_i\bar{X}_j$, followed by $\bar{Z}_1\bar{Z}_i\bar{Z}_j$, followed by $\bar{X}_1$. The first logical qubit will be left in the $\ket{\bar{+}}$ or $\ket{\bar{-}}$ state, and the remaining logical qubits will be in the state $\overline{\mathrm{SWAP}}_{ij}\ket{\bar{\psi}}$ up to some logical Pauli correction.
    \item A logical Hadamard gate $\bar{H}_i$ can be implemented by measuring $\bar{X}_1\bar{Z}_i$, followed by $\bar{X}_i$. The resulting state will be the same as $\bar{H}_i\ket{\bar{\psi}}$, except that the first and the $i$-th logical qubits are swapped, and the first qubit requires some logical Pauli correction depending on the measurement outcomes. Also, the $i$-th logical qubit will be left in the $\ket{\bar{+}}$ or $\ket{\bar{-}}$ state. The desired state can be obtained by applying a logical SWAP gate $\overline{\mathrm{SWAP}}_{1i}$ between the first and the $i$-th logical qubits.
    \item A logical phase gate $\bar{S}_i$ can be implemented by measuring $\bar{X}_1\bar{Y}_i$, followed by $\bar{Z}_i$. The resulting state will be the same as $\bar{S}_i\ket{\bar{\psi}}$ up to some logical Pauli correction, except that the first and the $i$-th logical qubits are swapped. By applying $\overline{\mathrm{SWAP}}_{1i}$, the desired state can be obtained.
    \item A logical CNOT gate $\overline{\mathrm{CNOT}}_{ij}$ can be implemented by measuring $\bar{X}_1\bar{X}_j$, followed by $\bar{Z}_i\bar{Z}_j$, followed by $\bar{X}_i$. The resulting state of all logical qubits after measurement is $\overline{\mathrm{SWAP}}_{1j}\overline{\mathrm{SWAP}}_{1i} \left(\ket{\bar{\pm}}\otimes\overline{\mathrm{CNOT}}_{ij}\ket{\bar{\psi}}\right)$, up to some logical Pauli correction. In other words, by applying $\overline{\mathrm{SWAP}}_{1j}$ followed by $\overline{\mathrm{SWAP}}_{1i}$, we can obtain the desired state $\overline{\mathrm{CNOT}}_{ij}\ket{\bar{\psi}}$ and the buffer qubit will be in the $\ket{\bar{+}}$ or $\ket{\bar{-}}$ state.
\end{enumerate}
With these set of logical Clifford gates, any logical Clifford circuit on a single code block can be implemented. 

\begin{figure}[htbp]
	\centering
	\begin{subfigure}[b]{0.4\textwidth}
        \includegraphics[scale=1]{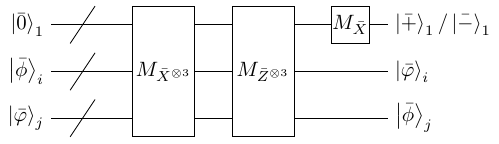}
        \caption{}
    \end{subfigure}
    
    \vspace{1cm}
    \begin{subfigure}[b]{0.4\textwidth}
        \includegraphics[scale=1]{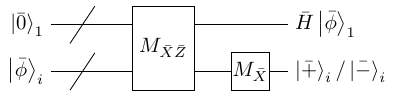}
        \caption{}
    \end{subfigure}
    \vspace{1cm}
    \begin{subfigure}[b]{0.4\textwidth}
        \includegraphics[scale=1]{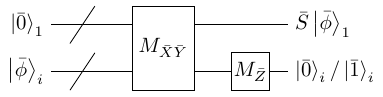}
        \caption{}
    \end{subfigure}
    
    \begin{subfigure}[b]{0.4\textwidth}
        \includegraphics[scale=1]{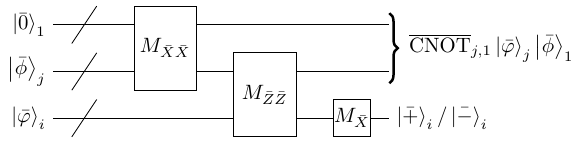}
        \caption{}
    \end{subfigure}
	\caption{Implementations of basic logical Clifford gates through logical Pauli measurements (up to logical Pauli correction and logical qubit permutation) \cite{BZHJL15}. (a) A logical SWAP gate $\overline{\mathrm{SWAP}}_{ij}$. (b) A logical Hadamard gate $\bar{H}_i$. (c) A logical phase gate $\bar{S}_i$. (d) A logical CNOT gate $\overline{\mathrm{CNOT}}_{ij}$.}
	\label{fig:TPFTEC_basic_gates}%
\end{figure}

Note that in order to perform the computation fault-tolerantly, the ancilla states required for Steane FTM need to be prepared fault-tolerantly. This can be done efficiently using the scheme proposed in Ref. \cite{ZLB18}.

\section{Procedure for constructing a compatible symplectic basis and phase-type logical-level transversal gates for a self-dual CSS code}\label{app:python}

In this section, we describe a full procedure for verify whether a compatible symplectic basis of a self-dual CSS code exists, and if it does, construct a compatible symplectic basis of the code. The procedure can also construct phase-type logical-level transversal gates of the form $\bigotimes_{j=1}^{k} \bar{S}_j^{a_j}$ for any $(a_1,\dots,a_k)\in\{-1,1\}^k$ as a combination of physical $S$ and $S^\dagger$ gates. This procedure combines mathematical techniques used in the proofs of \cref{lem:alg,lem:basis_change,lem:stb_preserving_S} and \cref{thm:main1}.

For any self-dual CSS code $\mathcal{Q}$, our procedure takes as input a parity check matrix $H = \{\mathbf{g}_i\}$ (corresponding to stabilizer generators of $\mathcal{Q}$) and a set of coset representatives $A = \{\mathbf{h}_j\}$ (corresponding to one type of logical $X$ or logical $Z$ operators of $\mathcal{Q}$), and phase-type logical-level transversal gates we want to implement, i.e., $(a_1,\dots,a_k)\in\{-1,1\}^k$ of the logical gate $\bigotimes_{j=1}^{k} \bar{S}_j^{a_j}$. Note that the corresponding classical binary linear code and its dual are $\mathcal{D}=\langle \mathbf{g}_i,\mathbf{h}_j \rangle$ and $\mathcal{D}^\perp=\langle \mathbf{g}_i \rangle$. The procedure is as follows:
\begin{enumerate}
    \item Find $\mathbf{h}_j \in A$ such that $\mathbf{h}_j \cdot \mathbf{h}_j = 1$. If such $\mathbf{h}_j$ exists, CONTINUE (a compatible basis exists). If such $\mathbf{h}_j$ does not exist, STOP (a compatible basis does not exist).
    \item Apply \cref{alg:1} to $A = \{\mathbf{h}_j\}$ and obtain $\{\Bell_j^x\}$ and $\{\Bell_j^z\}$. Construct the first symplectic basis $\{(\bar{X}_j,\bar{Z}_j)\}_{j\in[k]}$ from $\{\Bell_j^x\}$ and $\{\Bell_j^z\}$. 
    \item Apply \cref{lem:basis_change} repeatedly on the first symplectic basis until a compatible symplectic basis is obtained.
    \item Find an operator $W$ which is a combination of physical $S$ and $S^\dagger$ that preserves the stabilizer group. $W$ can be found by solving the equation $H\mathbf{v} = \boldsymbol{\delta}$ using Gaussian elimination (see the proof of \cref{lem:stb_preserving_S} for more details).
    \item Compare the logical actions of $W$ from Step 5 with the logical actions specified by $(a_1,\dots,a_k)\in\{-1,1\}^k$, and apply the corresponding logical $Z$ operators to modify the logical actions of $W$. As a result, an operator $\bigotimes_{i=1}^{n} S_i^{b_i}$ where $(b_1,\dots,b_k)\in\{-1,1\}^k$ such that $\bigotimes_{j=1}^{k} \bar{S}_j^{a_j}=\bigotimes_{i=1}^{n} S_i^{b_i}$ can be obtained (see the proof of \cref{thm:main1}, $3 \Rightarrow 4$ for more details).
\end{enumerate}
A Python implementation of our procedure is available at \url{https://github.com/yugotakada/mlvtrans}.

\bibliographystyle{quantum}

\bibliography{refs}

\begin{thebibliography}{10}

\bibitem{KL96}
Emanuel Knill and Raymond Laflamme.
\newblock ``Concatenated quantum codes''~(1996).
\newblock  \href{http://arxiv.org/abs/quant-ph/9608012}{arXiv:quant-ph/9608012}.

\bibitem{Kitaev97}
A~Yu Kitaev.
\newblock ``Quantum computations: algorithms and error correction''.
\newblock \href{https://dx.doi.org/10.1070/RM1997v052n06ABEH002155}{Russian Mathematical Surveys {\bf 52}, 1191}~(1997).

\bibitem{BK98}
S.~B. Bravyi and A.~Yu. Kitaev.
\newblock ``Quantum codes on a lattice with boundary''~(1998).
\newblock  \href{http://arxiv.org/abs/quant-ph/9811052}{arXiv:quant-ph/9811052}.

\bibitem{BM06}
H.~Bombin and M.~A. Martin-Delgado.
\newblock ``Topological quantum distillation''.
\newblock \href{https://dx.doi.org/10.1103/PhysRevLett.97.180501}{Phys. Rev. Lett. {\bf 97}, 180501}~(2006).

\bibitem{MMM04}
D.J.C. MacKay, G.~Mitchison, and P.L. McFadden.
\newblock ``Sparse-graph codes for quantum error correction''.
\newblock \href{https://dx.doi.org/10.1109/TIT.2004.834737}{IEEE Transactions on Information Theory {\bf 50}, 2315--2330}~(2004).

\bibitem{Shor96}
P.W. Shor.
\newblock ``Fault-tolerant quantum computation''.
\newblock In Proceedings of 37th Conference on Foundations of Computer Science.
\newblock \href{https://dx.doi.org/10.1109/SFCS.1996.548464}{Pages 56--65}.
\newblock ~(1996).

\bibitem{AB08}
Dorit Aharonov and Michael Ben-Or.
\newblock ``Fault-tolerant quantum computation with constant error rate''.
\newblock \href{https://dx.doi.org/10.1137/S0097539799359385}{SIAM J. Comput. {\bf 38}, 1207–1282}~(2008).

\bibitem{KLZ96}
E.~Knill, R.~Laflamme, and W.~Zurek.
\newblock ``Threshold accuracy for quantum computation''~(1996).
\newblock  \href{http://arxiv.org/abs/quant-ph/9610011}{arXiv:quant-ph/9610011}.

\bibitem{Preskill98}
John Preskill.
\newblock ``Reliable quantum computers''.
\newblock \href{https://dx.doi.org/10.1098/rspa.1998.0167}{Proceedings of the Royal Society of London. Series A: Mathematical, Physical and Engineering Sciences {\bf 454}, 385--410}~(1998).

\bibitem{TB05}
Barbara~M. Terhal and Guido Burkard.
\newblock ``Fault-tolerant quantum computation for local non-markovian noise''.
\newblock \href{https://dx.doi.org/10.1103/PhysRevA.71.012336}{Phys. Rev. A {\bf 71}, 012336}~(2005).

\bibitem{ND05}
Michael~A. Nielsen and Christopher~M. Dawson.
\newblock ``Fault-tolerant quantum computation with cluster states''.
\newblock \href{https://dx.doi.org/10.1103/PhysRevA.71.042323}{Phys. Rev. A {\bf 71}, 042323}~(2005).

\bibitem{AL06}
Panos Aliferis and Debbie~W. Leung.
\newblock ``Simple proof of fault tolerance in the graph-state model''.
\newblock \href{https://dx.doi.org/10.1103/PhysRevA.73.032308}{Phys. Rev. A {\bf 73}, 032308}~(2006).

\bibitem{AGP06}
Panos Aliferis, Daniel Gottesman, and John Preskill.
\newblock ``Quantum accuracy threshold for concatenated distance-3 codes''.
\newblock \href{https://dx.doi.org/10.26421/QIC6.2-1}{Quantum Info. Comput. {\bf 6}, 97–165}~(2006).

\bibitem{Gottesman97}
Daniel Gottesman.
\newblock ``{Stabilizer Codes and Quantum Error Correction}''.
\newblock \href{https://dx.doi.org/10.7907/rzr7-dt72}{PhD thesis}.
\newblock California Institute of Technology.
\newblock ~(1997).

\bibitem{ZCC11}
Bei Zeng, Andrew Cross, and Isaac~L. Chuang.
\newblock ``Transversality versus universality for additive quantum codes''.
\newblock \href{https://dx.doi.org/10.1109/TIT.2011.2161917}{IEEE Transactions on Information Theory {\bf 57}, 6272--6284}~(2011).

\bibitem{CCCZC08}
Xie Chen, Hyeyoun Chung, Andrew~W. Cross, Bei Zeng, and Isaac~L. Chuang.
\newblock ``Subsystem stabilizer codes cannot have a universal set of transversal gates for even one encoded qudit''.
\newblock \href{https://dx.doi.org/10.1103/PhysRevA.78.012353}{Phys. Rev. A {\bf 78}, 012353}~(2008).

\bibitem{EK09}
Bryan Eastin and Emanuel Knill.
\newblock ``Restrictions on transversal encoded quantum gate sets''.
\newblock \href{https://dx.doi.org/10.1103/PhysRevLett.102.110502}{Phys. Rev. Lett. {\bf 102}, 110502}~(2009).

\bibitem{Steane96}
Andrew Steane.
\newblock ``Multiple-particle interference and quantum error correction''.
\newblock \href{https://dx.doi.org/10.1098/rspa.1996.0136}{Proceedings of the Royal Society of London. Series A: Mathematical, Physical and Engineering Sciences {\bf 452}, 2551--2577}~(1996).

\bibitem{KLZ97}
Emanuel Knill, Raymond Laflamme, and Wojciech~H. Zurek.
\newblock ``Resilient quantum computation: error models and thresholds''.
\newblock \href{https://dx.doi.org/10.1098/rspa.1998.0166}{Proceedings of the Royal Society of London. Series A: Mathematical, Physical and Engineering Sciences {\bf 454}, 365–384}~(1998).

\bibitem{BK05}
Sergey Bravyi and Alexei Kitaev.
\newblock ``Universal quantum computation with ideal clifford gates and noisy ancillas''.
\newblock \href{https://dx.doi.org/10.1103/PhysRevA.71.022316}{Phys. Rev. A {\bf 71}, 022316}~(2005).

\bibitem{PR13}
Adam Paetznick and Ben~W. Reichardt.
\newblock ``Universal fault-tolerant quantum computation with only transversal gates and error correction''.
\newblock \href{https://dx.doi.org/10.1103/PhysRevLett.111.090505}{Phys. Rev. Lett. {\bf 111}, 090505}~(2013).

\bibitem{ADP14}
Jonas~T. Anderson, Guillaume Duclos-Cianci, and David Poulin.
\newblock ``Fault-tolerant conversion between the steane and reed-muller quantum codes''.
\newblock \href{https://dx.doi.org/10.1103/PhysRevLett.113.080501}{Phys. Rev. Lett. {\bf 113}, 080501}~(2014).

\bibitem{BE21}
Nikolas~P. Breuckmann and Jens~N. Eberhardt.
\newblock ``Balanced product quantum codes''.
\newblock \href{https://dx.doi.org/10.1109/TIT.2021.3097347}{IEEE Transactions on Information Theory {\bf 67}, 6653--6674}~(2021).

\bibitem{HHJO21}
Matthew~B Hastings, Jeongwan Haah, and Ryan O'Donnell.
\newblock ``Fiber bundle codes: breaking the $n^{1/2} \mathrm{polylog}(n)$ barrier for quantum {LDPC} codes''.
\newblock In Proceedings of the 53rd Annual ACM SIGACT Symposium on Theory of Computing.
\newblock \href{https://dx.doi.org/10.1145/3406325.3451005}{Pages 1276--1288}.
\newblock ~(2021).

\bibitem{EKZ22}
Shai Evra, Tali Kaufman, and Gilles Z{\'e}mor.
\newblock ``Decodable quantum {LDPC} codes beyond the $\sqrt{n}$ distance barrier using high-dimensional expanders''.
\newblock \href{https://dx.doi.org/10.1137/20M1383689}{SIAM Journal on Computing {\bf 0}, FOCS20--276}~(2022).

\bibitem{LZ22}
Anthony Leverrier and Gilles Zémor.
\newblock ``Quantum tanner codes''.
\newblock In 2022 IEEE 63rd Annual Symposium on Foundations of Computer Science (FOCS).
\newblock \href{https://dx.doi.org/10.1109/FOCS54457.2022.00117}{Pages 872--883}.
\newblock ~(2022).

\bibitem{PK22a}
Pavel Panteleev and Gleb Kalachev.
\newblock ``Quantum {LDPC} codes with almost linear minimum distance''.
\newblock \href{https://dx.doi.org/10.1109/TIT.2021.3119384}{IEEE Transactions on Information Theory {\bf 68}, 213--229}~(2022).

\bibitem{PK22b}
Pavel Panteleev and Gleb Kalachev.
\newblock ``Asymptotically good quantum and locally testable classical {LDPC} codes''.
\newblock In Proceedings of the 54th Annual ACM SIGACT Symposium on Theory of Computing.
\newblock \href{https://dx.doi.org/10.1145/3519935.3520017}{Pages 375--388}.
\newblock ~(2022).

\bibitem{DHLV23}
Irit Dinur, Min-Hsiu Hsieh, Ting-Chun Lin, and Thomas Vidick.
\newblock ``Good quantum {LDPC} codes with linear time decoders''.
\newblock In Proceedings of the 55th annual ACM symposium on theory of computing.
\newblock \href{https://dx.doi.org/10.1145/3564246.3585101}{Pages 905--918}.
\newblock ~(2023).

\bibitem{DA07}
David~P. DiVincenzo and Panos Aliferis.
\newblock ``Effective fault-tolerant quantum computation with slow measurements''.
\newblock \href{https://dx.doi.org/10.1103/PhysRevLett.98.020501}{Phys. Rev. Lett. {\bf 98}, 020501}~(2007).

\bibitem{CR18}
Rui Chao and Ben~W. Reichardt.
\newblock ``Quantum error correction with only two extra qubits''.
\newblock \href{https://dx.doi.org/10.1103/PhysRevLett.121.050502}{Phys. Rev. Lett. {\bf 121}, 050502}~(2018).

\bibitem{CR20}
Rui Chao and Ben~W. Reichardt.
\newblock ``Flag fault-tolerant error correction for any stabilizer code''.
\newblock \href{https://dx.doi.org/10.1103/PRXQuantum.1.010302}{PRX Quantum {\bf 1}, 010302}~(2020).

\bibitem{Steane96_qHamming}
A.~M. Steane.
\newblock ``Simple quantum error-correcting codes''.
\newblock \href{https://dx.doi.org/10.1103/PhysRevA.54.4741}{Phys. Rev. A {\bf 54}, 4741--4751}~(1996).

\bibitem{GB99}
Markus Grassl and Thomas Beth.
\newblock ``Quantum bch codes''~(1999).
\newblock  \href{http://arxiv.org/abs/quant-ph/9910060}{arXiv:quant-ph/9910060}.

\bibitem{Steane97}
A.~M. Steane.
\newblock ``Active stabilization, quantum computation, and quantum state synthesis''.
\newblock \href{https://dx.doi.org/10.1103/PhysRevLett.78.2252}{Phys. Rev. Lett. {\bf 78}, 2252--2255}~(1997).

\bibitem{Steane04}
Andrew~M. Steane.
\newblock ``Fast fault-tolerant filtering of quantum codewords''~(2004).
\newblock  \href{http://arxiv.org/abs/quant-ph/0202036}{arXiv:quant-ph/0202036}.

\bibitem{Knill05}
E.~Knill.
\newblock ``Scalable quantum computing in the presence of large detected-error rates''.
\newblock \href{https://dx.doi.org/10.1103/PhysRevA.71.042322}{Phys. Rev. A {\bf 71}, 042322}~(2005).

\bibitem{CS96}
A.~R. Calderbank and Peter~W. Shor.
\newblock ``Good quantum error-correcting codes exist''.
\newblock \href{https://dx.doi.org/10.1103/PhysRevA.54.1098}{Phys. Rev. A {\bf 54}, 1098--1105}~(1996).

\bibitem{Gottesman96}
Daniel Gottesman.
\newblock ``Class of quantum error-correcting codes saturating the quantum hamming bound''.
\newblock \href{https://dx.doi.org/10.1103/PhysRevA.54.1862}{Phys. Rev. A {\bf 54}, 1862--1868}~(1996).

\bibitem{BZHJL15}
Todd~A. Brun, Yi-Cong Zheng, Kung-Chuan Hsu, Joshua Job, and Ching-Yi Lai.
\newblock ``Teleportation-based fault-tolerant quantum computation in multi-qubit large block codes''~(2015).
\newblock  \href{http://arxiv.org/abs/1504.03913}{arXiv:1504.03913}.

\bibitem{ZLB18}
Yi-Cong Zheng, Ching-Yi Lai, and Todd~A. Brun.
\newblock ``Efficient preparation of large-block-code ancilla states for fault-tolerant quantum computation''.
\newblock \href{https://dx.doi.org/10.1103/PhysRevA.97.032331}{Phys. Rev. A {\bf 97}, 032331}~(2018).

\bibitem{ZLBK20}
Yi-Cong Zheng, Ching-Yi Lai, Todd~A Brun, and Leong-Chuan Kwek.
\newblock ``Constant depth fault-tolerant clifford circuits for multi-qubit large block codes''.
\newblock \href{https://dx.doi.org/10.1088/2058-9565/aba34d}{Quantum Science and Technology {\bf 5}, 045007}~(2020).

\bibitem{YK24}
Hayata Yamasaki and Masato Koashi.
\newblock ``Time-efficient constant-space-overhead fault-tolerant quantum computation''.
\newblock \href{https://dx.doi.org/10.1038/s41567-023-02325-8}{Nature Physics {\bf 20}, 247--253}~(2024).

\bibitem{CRSS98}
A.R. Calderbank, E.M. Rains, P.M. Shor, and N.J.A. Sloane.
\newblock ``Quantum error correction via codes over gf(4)''.
\newblock \href{https://dx.doi.org/10.1109/18.681315}{IEEE Transactions on Information Theory {\bf 44}, 1369--1387}~(1998).

\bibitem{RCKP20}
Narayanan Rengaswamy, Robert Calderbank, Swanand Kadhe, and Henry~D. Pfister.
\newblock ``Logical clifford synthesis for stabilizer codes''.
\newblock \href{https://dx.doi.org/10.1109/TQE.2020.3023419}{IEEE Transactions on Quantum Engineering {\bf 1}, 1--17}~(2020).

\bibitem{SKWRB24}
Hasan Sayginel, Stergios Koutsioumpas, Mark Webster, Abhishek Rajput, and Dan~E Browne.
\newblock ``Fault-tolerant logical clifford gates from code automorphisms''~(2024).
\newblock  \href{http://arxiv.org/abs/2409.18175}{arXiv:2409.18175}.

\bibitem{NC00}
Michael~A. Nielsen and Isaac~L. Chuang.
\newblock ``Quantum computation and quantum information: 10th anniversary edition''.
\newblock \href{https://dx.doi.org/10.1017/CBO9780511976667}{Cambridge University Press}. ~(2010).

\bibitem{JA24}
Shubham~P. Jain and Victor~V. Albert.
\newblock ``High-distance codes with transversal clifford and t-gates''~(2024).
\newblock  \href{http://arxiv.org/abs/2408.12752}{arXiv:2408.12752}.

\bibitem{BC15}
Sergey Bravyi and Andrew Cross.
\newblock ``Doubled color codes''~(2015).
\newblock  \href{http://arxiv.org/abs/1509.03239}{arXiv:1509.03239}.

\bibitem{RCNP20}
Narayanan Rengaswamy, Robert Calderbank, Michael Newman, and Henry~D. Pfister.
\newblock ``On optimality of css codes for transversal t''.
\newblock \href{https://dx.doi.org/10.1109/JSAIT.2020.3012914}{IEEE Journal on Selected Areas in Information Theory {\bf 1}, 499--514}~(2020).

\bibitem{HLC22}
Jingzhen Hu, Qingzhong Liang, and Robert Calderbank.
\newblock ``Designing the {Q}uantum {C}hannels {I}nduced by {D}iagonal {G}ates''.
\newblock \href{https://dx.doi.org/10.22331/q-2022-09-08-802}{{Quantum} {\bf 6}, 802}~(2022).

\bibitem{HLC21}
Jingzhen Hu, Qingzhong Liang, and Robert Calderbank.
\newblock ``Climbing the diagonal clifford hierarchy''~(2021).
\newblock  \href{http://arxiv.org/abs/2110.11923}{arXiv:2110.11923}.

\bibitem{GG24}
Louis Golowich and Venkatesan Guruswami.
\newblock ``Asymptotically good quantum codes with transversal non-clifford gates''~(2024).
\newblock  \href{http://arxiv.org/abs/2408.09254}{arXiv:2408.09254}.

\bibitem{WBB22}
Mark~A. Webster, Benjamin~J. Brown, and Stephen~D. Bartlett.
\newblock ``The {XP} {S}tabiliser {F}ormalism: a {G}eneralisation of the {P}auli {S}tabiliser {F}ormalism with {A}rbitrary {P}hases''.
\newblock \href{https://dx.doi.org/10.22331/q-2022-09-22-815}{{Quantum} {\bf 6}, 815}~(2022).

\bibitem{WQB23}
Mark~A Webster, Armanda~O Quintavalle, and Stephen~D Bartlett.
\newblock ``Transversal diagonal logical operators for stabiliser codes''.
\newblock \href{https://dx.doi.org/10.1088/1367-2630/acfc5f}{New Journal of Physics {\bf 25}, 103018}~(2023).

\bibitem{Brown24}
Benjamin~J. Brown.
\newblock ``Color code with a logical control-$s$ gate using transversal $t$ rotations''~(2024).
\newblock  \href{http://arxiv.org/abs/2411.15035}{arXiv:2411.15035}.

\bibitem{ZSPCB24}
Guanyu Zhu, Shehryar Sikander, Elia Portnoy, Andrew~W. Cross, and Benjamin~J. Brown.
\newblock ``Non-clifford and parallelizable fault-tolerant logical gates on constant and almost-constant rate homological quantum ldpc codes via higher symmetries''~(2024).
\newblock  \href{http://arxiv.org/abs/2310.16982}{arXiv:2310.16982}.

\bibitem{Lin24}
Ting-Chun Lin.
\newblock ``Transversal non-clifford gates for quantum ldpc codes on sheaves''~(2024).
\newblock  \href{http://arxiv.org/abs/2410.14631}{arXiv:2410.14631}.

\bibitem{GL24}
Louis Golowich and Ting-Chun Lin.
\newblock ``Quantum ldpc codes with transversal non-clifford gates via products of algebraic codes''~(2024).
\newblock  \href{http://arxiv.org/abs/2410.14662}{arXiv:2410.14662}.

\bibitem{BDET24}
Nikolas~P. Breuckmann, Margarita Davydova, Jens~N. Eberhardt, and Nathanan Tantivasadakarn.
\newblock ``Cups and gates i: Cohomology invariants and logical quantum operations''~(2024).
\newblock  \href{http://arxiv.org/abs/2410.16250}{arXiv:2410.16250}.

\bibitem{Zhu25}
Guanyu Zhu.
\newblock ``A topological theory for qldpc: non-clifford gates and magic state fountain on homological product codes with constant rate and beyond the $n^{1/3}$ distance barrier''~(2025).
\newblock  \href{http://arxiv.org/abs/2501.19375}{arXiv:2501.19375}.

\bibitem{Moussa16}
Jonathan~E. Moussa.
\newblock ``Transversal clifford gates on folded surface codes''.
\newblock \href{https://dx.doi.org/10.1103/PhysRevA.94.042316}{Phys. Rev. A {\bf 94}, 042316}~(2016).

\bibitem{BB24}
Nikolas~P. Breuckmann and Simon Burton.
\newblock ``Fold-{T}ransversal {C}lifford {G}ates for {Q}uantum {C}odes''.
\newblock \href{https://dx.doi.org/10.22331/q-2024-06-13-1372}{{Quantum} {\bf 8}, 1372}~(2024).

\bibitem{TB24}
Ryan Tiew and Nikolas~P. Breuckmann.
\newblock ``Low-overhead entangling gates from generalised dehn twists''~(2024).
\newblock  \href{http://arxiv.org/abs/2411.03302}{arXiv:2411.03302}.

\bibitem{GV24}
Alexandre Guernut and Christophe Vuillot.
\newblock ``Fault-tolerant constant-depth clifford gates on toric codes''~(2024).
\newblock  \href{http://arxiv.org/abs/2411.18287}{arXiv:2411.18287}.

\bibitem{ES25}
Jens~Niklas Eberhardt and Vincent Steffan.
\newblock ``Logical operators and fold-transversal gates of bivariate bicycle codes''.
\newblock \href{https://dx.doi.org/10.1109/TIT.2024.3521638}{IEEE Transactions on Information Theory {\bf 71}, 1140--1152}~(2025).

\bibitem{Steane99}
Andrew~M Steane.
\newblock ``Efficient fault-tolerant quantum computing''.
\newblock \href{https://dx.doi.org/10.1038/20127}{Nature {\bf 399}, 124--126}~(1999).

\bibitem{BVCKT17}
Nikolas~P Breuckmann, Christophe Vuillot, Earl Campbell, Anirudh Krishna, and Barbara~M Terhal.
\newblock ``Hyperbolic and semi-hyperbolic surface codes for quantum storage''.
\newblock \href{https://dx.doi.org/10.1088/2058-9565/aa7d3b}{Quantum Science and Technology {\bf 2}, 035007}~(2017).

\bibitem{BDH06}
Todd Brun, Igor Devetak, and Min-Hsiu Hsieh.
\newblock ``Correcting quantum errors with entanglement''.
\newblock \href{https://dx.doi.org/10.1126/science.1131563}{Science {\bf 314}, 436--439}~(2006).

\bibitem{Wilde09b}
Mark~M. Wilde.
\newblock ``Logical operators of quantum codes''.
\newblock \href{https://dx.doi.org/10.1103/PhysRevA.79.062322}{Phys. Rev. A {\bf 79}, 062322}~(2009).

\bibitem{Bombin15}
Héctor Bombín.
\newblock ``Gauge color codes: optimal transversal gates and gauge fixing in topological stabilizer codes''.
\newblock \href{https://dx.doi.org/10.1088/1367-2630/17/8/083002}{New Journal of Physics {\bf 17}, 083002}~(2015).

\bibitem{TN24}
Theerapat Tansuwannont and Andrew Nemec.
\newblock ``Synchronizable hybrid subsystem codes''~(2024).
\newblock  \href{http://arxiv.org/abs/2409.11312}{arXiv:2409.11312}.

\bibitem{KB15}
Aleksander Kubica and Michael~E. Beverland.
\newblock ``Universal transversal gates with color codes: A simplified approach''.
\newblock \href{https://dx.doi.org/10.1103/PhysRevA.91.032330}{Phys. Rev. A {\bf 91}, 032330}~(2015).

\bibitem{LNCY97}
Debbie~W. Leung, M.~A. Nielsen, Isaac~L. Chuang, and Yoshihisa Yamamoto.
\newblock ``Approximate quantum error correction can lead to better codes''.
\newblock \href{https://dx.doi.org/10.1103/PhysRevA.56.2567}{Phys. Rev. A {\bf 56}, 2567--2573}~(1997).

\bibitem{Knill05b}
Emanuel Knill.
\newblock ``Quantum computing with realistically noisy devices''.
\newblock \href{https://dx.doi.org/10.1038/nature03350}{Nature {\bf 434}, 39--44}~(2005).

\bibitem{HHPW17}
Jeongwan Haah, Matthew~B. Hastings, D.~Poulin, and D.~Wecker.
\newblock ``Magic state distillation with low space overhead and optimal asymptotic input count''.
\newblock \href{https://dx.doi.org/10.22331/q-2017-10-03-31}{{Quantum} {\bf 1}, 31}~(2017).

\end{thebibliography}

\end{document}